\newcommand{\xmark}{\ding{55}}%
\titlespacing*{\paragraph} {0pt}{0.35ex plus 0.3ex minus .2ex}{1em}
\newcommand{\R}{\mathbb{R}}
\newcommand{\Z}{\mathcal{Z}}
\newcommand{\A}{\mathcal{A}}
\newcommand{\T}{\mathcal{T}}
\newcommand{\N}{\mathbb{N}}
\newcommand{\cX}{\mathcal{X}}
\newcommand{\cY}{\mathcal{Y}}
\newcommand{\cJ}{\mathcal{J}}
\newcommand{\J}{\mathcal{J}}
\newcommand{\C}{\mathcal{C}}
\newcommand{\cbap}{{\normalfont \texttt{CBA\textsuperscript{+}}}}
\newcommand{\tbp}{{\normalfont \texttt{TB\textsuperscript{+}}}}
\newcommand{\ptbp}{{\normalfont \texttt{PTB\textsuperscript{+}}}}
\newcommand{\smoothptbp}{{\normalfont \texttt{Smooth PTB\textsuperscript{+}}}}
\newcommand{\rmp}{{\normalfont \texttt{RM\textsuperscript{+}}}}
\newcommand{\prmp}{{\normalfont \texttt{PRM\textsuperscript{+}}}}
\newcommand{\cfrp}{{\normalfont \texttt{CFR\textsuperscript{+}}}}
\newcommand{\pcfrp}{{\normalfont \texttt{PCFR\textsuperscript{+}}}}
\newcommand{\adam}{{\normalfont \texttt {Adam}}}
\newcommand{\adagrad}{{\normalfont \texttt {AdaGrad}}}
\newcommand{\adamtbp}{{\normalfont \texttt {AdamTB\textsuperscript{+}}}}
\newcommand{\adatbp}{{\normalfont \texttt {AdaGradTB\textsuperscript{+}}}}
\newcommand{\omd}{{\normalfont \texttt{OMD}}}
\newcommand{\scpomd}{{\normalfont \texttt{SC-POMD}}}
\newcommand{\pomd}{{\normalfont \texttt{POMD}}}
\newcommand{\egt}{{\normalfont \texttt{EGT}}}
\newcommand{\Regmin}{{\sf Regmin}}
\newcommand{\cone}{{\sf cone}}
\newcommand{\diag}{{\sf diag}}
\newcommand{\tr}{^{\top}}
\newcommand{\Reg}{{\sf Reg}}
\newcommand{\Cgeq}{\mathcal{C}_{\geq}}
\newcommand{\xnot}{x_{\varnothing}}
\theoremstyle{plain}
\newtheorem{theorem}{Theorem}[section]
\newtheorem{lemma}[theorem]{Lemma}
\newtheorem{corollary}[theorem]{Corollary}
\newtheorem{proposition}[theorem]{Proposition}
\newtheorem{definition}[theorem]{Definition}
\newtheorem{remark}[theorem]{Remark}
\numberwithin{equation}{section}
\numberwithin{theorem}{section}
\title{Extensive-Form Game Solving via Blackwell Approachability on Treeplexes}
\author{%
Darshan Chakrabarti \\
IEOR Department\\
Columbia University \\
   \texttt{dc3595@columbia.edu} \\
\And
   Julien Grand-Cl{\'e}ment \\
ISOM Department\\
HEC Paris \\
   \texttt{grand-clement@hec.fr} \\
   \And
  Christian Kroer\\
  IEOR Department\\
Columbia University\\
  \texttt{christian.kroer@columbia.edu} \\
}
\begin{document}

\maketitle
\vspace{1cm}

\begin{abstract}
In this paper, we introduce the first algorithmic framework for Blackwell approachability on the sequence-form polytope, the class of convex polytopes capturing the strategies of players in extensive-form games (EFGs).
This leads to a new class of regret-minimization algorithms that are stepsize-invariant, in the same sense as the regret matching and regret matching$^+$ algorithms for the simplex.
Our modular framework can be combined with any existing regret minimizer over cones to compute a Nash equilibrium in two-player zero-sum EFGs with perfect recall, through the self-play framework. Leveraging predictive online mirror descent, we introduce {\em Predictive Treeplex Blackwell$^+$} (\ptbp), and show a $O(1/\sqrt{T})$ convergence rate to Nash equilibrium in self-play. We then show how to stabilize \ptbp{} with a stepsize, resulting in an algorithm with a state-of-the-art $O(1/T)$ convergence rate. 
We provide an extensive set of experiments to compare our framework with several algorithmic benchmarks, including \cfrp{} and its predictive variant, and we highlight interesting connections between practical performance and the stepsize-dependence or stepsize-invariance properties of classical algorithms.
\end{abstract}

\section{Introduction}
In this paper, we focus on solving {\em Extensive-Form Games} (EFGs), a widely used framework for modeling repeated games with perfect or imperfect information. Finding a Nash equilibrium of a two-player zero-sum EFG can be cast as solving a saddle-point problem of the form
\begin{equation}\label{eq:EFG-BSSP}
    \min_{\bm{x} \in \cX} \max_{\bm{y} \in \cY} \; \langle \bm{x},\bm{M}\bm{y}\rangle
\end{equation}
where the sets $\cX,\cY$  are two {\em sequence-form polytopes} (also referred to as {\em treeplexes}) representing the  strategies $\bm{x},\bm{y}$ of each player, and $\bm{M}$ is a payoff matrix. EFGs have been successfully used to obtain superhuman performances in several recent poker AI breakthroughs~\citep{tammelin2015solving,brown2018superhuman,brown2019superhuman}.

Many algorithms have been developed based on \eqref{eq:EFG-BSSP}. Since  $\cX$ and $\cY$ are polytopes, \eqref{eq:EFG-BSSP} can be formulated as a linear program~\citep{stengel1996efficient}. 
However, because $\cX$ and $\cY$ themselves have very large dimensions in realistic applications, {\em first-order methods} (FOMs) and {\em regret minimization} approaches are preferred for large-scale game solving. 
FOMs such as the Excessive Gap Technique (\egt,~\citet{nesterov2005excessive}) and Mirror Prox~\citep{nemirovski2004prox} instantiated for EFGs~\citep{hoda2010smoothing,kroer2018faster}  converge to a Nash equilibrium at a rate of $O(1/T)$, where $T$ is the number of iterations. Regret minimization techniques rely on a folk theorem relating the regrets of the players and the duality gap of the average iterates~\citep{freund1999adaptive}. For instance, predictive online mirror descent with with the treeplexes $\cX$ and $\cY$ as decision sets achieves a $O(1/T)$ convergence rate~\cite{farina2019optimistic}. 

{\em Counterfactual regret minimization} (CFR)~\cite{zinkevich2007regret} is a regret minimizer for the treeplex that runs regret minimizers {\em locally}, i.e. directly at the level of the information sets of each player. In particular, {\em \cfrp}, used in virtually all poker AI milestones~\cite{tammelin2015solving,moravvcik2017deepstack,brown2019superhuman}, instantiates the CFR framework with a regret minimizer called {\em Regret Matching}$^+$ (\rmp)~\cite{tammelin2015solving} and guarantees a $O(1/\sqrt{T})$ convergence rate.
The strong empirical performance of \cfrp{} remains mostly unexplained, since this algorithm does not achieve the fastest theoretical $O(1/T)$ convergence rate.
Interestingly, there is a stark contrast between the role of stepsizes in \cfrp{} versus in other algorithms. \cfrp{} may use different stepsizes across different infosets, and the iterates of \cfrp{} do not depend on the values of these stepsizes. We identify this property as {\em infoset stepsize invariance}. In contrast, the convergence properties of FOMs depend on the choice of a single stepsize used across the entire treeplex, which may be hard to tune in practice. 

\rmp{} arises as an instantiation of {\em Blackwell approachability}~\cite{blackwell1956analog} for the simplex. Blackwell approachability is a versatile framework with connections to online learning~\cite{abernethy2011blackwell} and applications in stochastic games~\cite{milman2006approachable}, calibration~\cite{perchet2010approachability} and revenue management~\cite{niazadeh2020online}.
Empirically, combining CFR with other local regret minimizers than \rmp{}, e.g. Online Mirror Descent (\omd), does not lead to the strong practical performance as \cfrp~\cite{brown2017dynamic}. This suggests that using a regret minimizer based on Blackwell approachability (\rmp) is central to the success of \cfrp.

Our goal is to develop Blackwell approachability-based algorithms for treeplexes. {\bf Our contributions} are as follows.

    \textbf{Treeplex Blackwell approachability.} We introduce the first Blackwell approachability-based regret minimizer for treeplexes. Using the self-play framework, we correspondingly get the first framework for solving two-player zero-sum EFGs via Blackwell approachability on treeplexes. 
    Blackwell approachability enables an equivalence between regret minimization over the treeplex $\T$ and regret minimization over its conic hull $\cone(\T)$, and any existing regret minimizer for $\cone(\T)$ yields a new algorithm for solving EFGs. A crucial advantage of using Blackwell approachability on the treeplex, rather than regret minimization directly on the treeplex, is that it leads to a variety of interesting stepsize properties (e.g. stepsize invariance), which are not achieved by regret minimizers such as \omd{} on the treeplex.
    We instantiate our framework with several regret minimizers, leading to different desirable properties. 
    
    
    \ptbp{} ({\em Predictive Treeplex Blackwell$^+$}, Algorithm \ref{alg:predictive-blackwell-treeplex-two-prox-calls}) combines our framework with predictive online mirror descent over $\cone(\T)$ and achieves a $O(1/\sqrt{T})$ convergence rate.
    As an instantiation of Blackwell approachability over a cone, \ptbp{} is {\em treeplex stepsize invariant}, i.e., its iterates do not change if we rescale all stepsizes by a positive constant. This is a desirable property for practical use, although it is a weaker property than the {\em infoset} stepsize invariance of \cfrp{}.

    \smoothptbp{} (Algorithm \ref{alg:pred-stable-blackwell-treeplex}) is a variant of \ptbp{} ensuring that successive iterates vary smoothly. This stability is a central element of the fastest algorithms for EFGs and we show that \smoothptbp{} has a $O(1/T)$ convergence rate. \smoothptbp{} is the first EFG-solving algorithm based on Blackwell approachability achieving the state-of-the-art theoretical convergence rate, answering an important open question. 
    
    \adatbp{}(Algorithm \ref{alg:adagrad-stable-blackwell-treeplex}) combines our framework with \adagrad~\cite{duchi2011adaptive} as a regret minimizer over $\cone(\T)$. \adatbp{} adaptively learns different stepsizes for every dimension of the treeplexes and guarantees a $O(1/\sqrt{T})$ convergence rate. For completeness, we consider \adamtbp, an adaptive instantiation of our framework based on \adam~\cite{kingma2014adam}, a widely used algorithm lacking regret guarantees.
    
    The convergence properties of our algorithms compared to existing methods can be found in Table \ref{tab:efg solving}.
    
    \textbf{Numerical experiments.} We provide two comprehensive sets of numerical experiments over benchmark EFGs. 
    
    We first compare the performance of all the algorithms introduced in our paper (Figure \ref{fig:tb_algo_comparison}) and find that \ptbp{} performs the best. This highlights the advantage of {\em treeplex} stepsize invariant algorithms (\ptbp) over stepsize-dependent algorithms, even ones achieving faster theoretical convergence rate (\smoothptbp), and over adaptive algorithms learning decreasing stepsizes (\adatbp). \adamtbp{} diverge on some instances, since \adam{} is not a regret minimizer.

    We then compare our best method (\ptbp) with several existing algorithms for EFGs: \cfrp, its predictive variants (\pcfrp), and predictive \omd{} (\pomd) (Figure \ref{fig:all_algo_avg_comparison}). We find that \pcfrp{} outperforms all other algorithms in terms of average-iterate performance. This suggests that {\em infoset} stepsize invariance is an important property, moreso than the treeplex stepsize invariance of \ptbp. Because of the CFR decomposition, \pcfrp{} can use different stepsizes at different infosets, where the values of the variables may be of very different magnitudes (typically, smaller for infosets appearing deeper in the treeplex), and \pcfrp{} does not require tuning these different stepsizes, which may be impossible for large instances. As part of our main contributions, we identify and distinguish the infoset stepsize invariance and treeplex stepsize invariance properties; based on our empirical experiments, we posit that part of the puzzle behind the strong empirical performances of \cfrp{} and \pcfrp{} can be explained by the infoset stepsize invariance property. We also compare the last-iterate performances, where no algorithms appear to consistently outperform the others. We leave studying the last-iterate convergence as an open question.

\begin{table}[htb]
\small
\centering
\begin{tabular}{lcc}
\toprule 
\bf Algorithms & \bf Convergence rate & \bf Stepsize invariance  \\
\midrule 
 \cfrp~\cite{tammelin2015solving} & $1/\sqrt{T}$ & $\checkmark\checkmark$ \\
 \pcfrp~\cite{farina2021faster} & $1/\sqrt{T}$ & $\checkmark\checkmark$\\
 \egt~\cite{kroer2018solving} & $1/T$ & \xmark \\
 \pomd~\cite{farina2019optimistic} & $1/T$ & \xmark \\
 \ptbp{} (Algorithm \ref{alg:predictive-blackwell-treeplex-two-prox-calls})  & $1/\sqrt{T}$& $\checkmark$ \\
  \smoothptbp{} (Algorithm \ref{alg:pred-stable-blackwell-treeplex})  & $1/T$ & \xmark \\
  \adatbp{} (Algorithm \ref{alg:adagrad-stable-blackwell-treeplex})  & $1/\sqrt{T}$ & \xmark \\
  \adamtbp{} (Algorithm \ref{alg:adam-stable-blackwell-treeplex})  & $?$  & \xmark  \\
\bottomrule
    \end{tabular}
    \caption{Convergence rates to a Nash equilibrium of a two-player zero-sum EFG for several algorithms. $\checkmark \checkmark$ refers to {\em infoset} stepsize invariance and $\checkmark$ refers to {\em treeplex} stepsize invariance.
    }\label{tab:efg solving}
\end{table}

\section{Preliminaries on EFGs}
We first provide some background on EFGs and treeplexes.
\paragraph{Extensive-form games.} Two-player zero-sum extensive-form games (later referred to as {\em EFGs}) are represented by a game tree and a payoff matrix. Each node of the tree belongs either to the first player, to the second player, or to a {\em chance player}, modeling the random events that happen in the game, e.g., tossing a coin. The players are assigned payoffs at the terminal nodes only. 
Imperfect/private information is modeled using {\em information sets} (later referred to as {\em infosets}), which are subsets of nodes of the game tree. A player cannot distinguish between the nodes in a given infoset, and they must take the same action at all these nodes.
\paragraph{Treeplexes.} The strategy of a player can be described by a polytope called the {\em treeplex}, also known as the {\em sequence-form polytope}. The treeplex is constructed as follows. We index the infosets of a player by $\cJ = \{ 1,...,|\cJ|\}$. The set of actions available at infoset $j \in \cJ$ is written $\A_{j}$ with cardinality $|\A_{j}|=n_{j}$. We represent choosing action $a \in \A_{j}$ at infoset $j \in \cJ$ by a {\em sequence} $(j,a)$, and we denote by $\C_{ja}$ the set of next infosets reachable from $(j,a)$ (possibly empty if the game terminates). The parent $p_{j}$ of an infoset $j \in \cJ$ is the sequence leading to $j$; note that $p_{j}$ is unique assuming perfect recall. We assume that there is a single root denoted as $\varnothing$ and called the {\em empty sequence}. If the player does not take any action before reaching $j \in \cJ$, then by convention $p_{j} = \varnothing$. Under the perfect recall assumption, the set of infosets has a tree structure: $\C_{ja} \cap \C_{j'a'} = \emptyset$, for all pairs of sequences $(j,a)$ and $(j',a')$ such that $j \neq j', a \neq a'$. This tree is the treeplex and it represents the set of all admissible strategies for a given player. We denote by $n \in \N$ the total number of sequences $(j,a)$ with $j \in \cJ$ and $a \in \A_{j}$.
With these notations, the treeplex $\T$ of a given player can be written as 
\begin{equation}\label{eq:treeplex}
\T = \{ \bm{x} \in \R^{n+1}_{+} \; | \; \xnot = 1, \sum_{a \in \A_{j}} x_{ja} = x_{p_{j}}, \forall \; j \in \J\}
\end{equation}
where the first component $\xnot$ is related to the empty sequence $\varnothing$. We say that a player {\em makes an observation} to arrive at $j$, if $|\C_{p_j}| > 1$.
We define the depth $d$ of a treeplex to be the maximum number of actions and observations that can be made starting at the root until reaching a leaf infoset.
Computing a Nash equilibrium of EFGs can be formulated as solving \eqref{eq:EFG-BSSP} (under the perfect recall assumption), with $\cX \subset \R^{n_{1}+1}$ and $\cY \subset \R^{n_{2}+1}$ the treeplex of each player, $n_{1}$ and $n_{2}$ are the number of sequences of each player, and $\bm{M} \in \R^{(n_{1}+1) \times (n_{2}+1)}$ the payoff matrix such that for a pair of strategy $(\bm{x},\bm{y}) \in \cX \times \cY$, $\langle \bm{x},\bm{M}\bm{y}\rangle$ is the expected value that the second player receives from the first player.

\paragraph{Regret minimization and self-play framework.} A {\em regret minimizer} \Regmin{} over a decision set $\Z \subset \R^{d}$ is an algorithm such that, at every iteration, \Regmin{} chooses a decision $\bm{z}^t \in \Z$, a {\em loss vector} $\bm{\ell} \in \R^{d}$ is observed, and the scalar loss $\langle \bm{\ell}^t,\bm{x}^t \rangle$ is incurred. A regret minimizer ensures that the {\em regret} $\Reg^{T}=\max_{\hat{\bm{z}} \in \Z} \sum_{t=1}^{T} \langle \bm{\ell}^{t},\bm{z}^t - \hat{\bm{z}} \rangle$ grows at most as $O(\sqrt{T})$. As an example, 
{\em predictive online mirror descent} (\pomd, \citet{rakhlin2013online}) generates a sequence of decisions $\bm{z}_{1},...,\bm{z}_{T} \in \Z$ as follows:
\begin{equation}\label{eq:predictive OMD}
   \begin{aligned}
         \bm{z}_{t} & = \Pi_{\Z}\left(\hat{\bm{z}}_{t} - \eta \bm{m}_{t}\right) 
         \\\hat{\bm{z}}_{t+1} & = \Pi_{\Z}\left(\hat{\bm{z}}_{t} - \eta \bm{\ell}_{t}\right)
   \end{aligned}
\end{equation}
with $\bm{m}_{1},...,\bm{m}_{T} \in \R^{d}$ some predictions of the losses $\bm{\ell}_{1},...,\bm{\ell}_{T} \in \R^{d}$, and where we write the orthogonal projection of $\bm{y} \in \R^{d}$ onto $\Z$ as $\Pi_{\Z}\left(\bm{y}\right) := \arg \min_{\bm{z} \in \Z} \| \bm{z} - \bm{y}\|_{2}$.

The {\em self-play framework} leverages regret minimization to solve EFGs. The players compute two sequences of strategies $\bm{x}_{1},...,\bm{x}_{T}$ and $\bm{y}_{1},...,\bm{y}_{T}$ such that, at iteration $t \geq 1$, the first player observes its loss vector $\bm{A}\bm{y}_{t-1}$ and the second player observes its loss vector $-\bm{A}\tr\bm{x}_{t-1}$. Each player computes their current strategies $\bm{x}_t \in \cX$ and $\bm{y}_t \in \cY$ by using a regret minimizer. A well-known folk theorem states that the duality gap of the average of the iterates is bounded by the sum of the average regrets of the players.
\begin{proposition}[\citep{freund1999adaptive}]\label{prop:self-play framework}
Let $\bm{x}_{1},...,\bm{x}_{T} \in \cX$ and $\bm{y}_{1},...,\bm{y}_{T} \in \cY$ be computed in the self-play framework. Let
$ \left(\bar{\bm{x}}_{T},\bar{\bm{y}}_{T}\right) = \frac{1}{T} \sum_{t=1}^{T}\left(\bm{x}_{t},\bm{y}_{t}\right)$. Then, for $\Reg^{T}_{1}$ and $\Reg^{T}_{2}$ the regret of each player, 
\[ \max_{\hat{\bm{y}} \in \cY} \; \langle \bar{\bm{x}}_{T},\bm{M}\hat{\bm{y}}\rangle - \min_{\hat{\bm{x}} \in \cX} \; \langle \hat{\bm{x}},\bm{M}\bar{\bm{y}}_{T}\rangle = \frac{\Reg^{T}_{1}+\Reg^{T}_{2}}{T}.\]
\end{proposition}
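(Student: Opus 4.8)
The plan is to reduce the statement to a single algebraic identity obtained by unpacking the definition of regret for each player. Recall from \eqref{eq:EFG-BSSP} that the first player is the minimizer and the second player the maximizer. In the self-play framework this means that the loss vector faced by the first player at round $t$ is $\bm{M}\bm{y}_{t}$, while the second player, wishing to maximize $\langle \bm{x},\bm{M}\bm{y}\rangle$, equivalently minimizes $\langle \bm{y}, -\bm{M}\tr\bm{x}\rangle$ and hence faces the loss vector $-\bm{M}\tr\bm{x}_{t}$. The first step is therefore to write out, directly from $\Reg^{T}=\max_{\hat{\bm{z}}}\sum_{t=1}^{T}\langle \bm{\ell}^{t},\bm{z}^t-\hat{\bm{z}}\rangle$, the two regrets $\Reg^{T}_{1}=\max_{\hat{\bm{x}}\in\cX}\sum_{t=1}^{T}\langle \bm{M}\bm{y}_{t},\bm{x}_{t}-\hat{\bm{x}}\rangle$ and $\Reg^{T}_{2}=\max_{\hat{\bm{y}}\in\cY}\sum_{t=1}^{T}\langle -\bm{M}\tr\bm{x}_{t},\bm{y}_{t}-\hat{\bm{y}}\rangle$.

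Next I would expand each regret into a \emph{realized} part and a \emph{best-response} part. The realized part of $\Reg^{T}_{1}$ is $\sum_{t}\langle \bm{x}_{t},\bm{M}\bm{y}_{t}\rangle$, while the realized part of $\Reg^{T}_{2}$ is $\sum_{t}\langle -\bm{M}\tr\bm{x}_{t},\bm{y}_{t}\rangle=-\sum_{t}\langle \bm{x}_{t},\bm{M}\bm{y}_{t}\rangle$, using $\langle -\bm{M}\tr\bm{x}_{t},\bm{y}_{t}\rangle=-\langle \bm{x}_{t},\bm{M}\bm{y}_{t}\rangle$. The crucial observation is that these two bilinear terms are exact negatives of one another, so upon adding $\Reg^{T}_{1}+\Reg^{T}_{2}$ they cancel, leaving only the two comparator terms: $\max_{\hat{\bm{y}}\in\cY}\sum_{t}\langle \bm{x}_{t},\bm{M}\hat{\bm{y}}\rangle-\min_{\hat{\bm{x}}\in\cX}\sum_{t}\langle \hat{\bm{x}},\bm{M}\bm{y}_{t}\rangle$ (the $\max$ of a negated sum becoming a $\min$).

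Finally I would push the sums through the bilinear form by linearity: $\sum_{t}\langle \bm{x}_{t},\bm{M}\hat{\bm{y}}\rangle=\langle \sum_{t}\bm{x}_{t},\bm{M}\hat{\bm{y}}\rangle=T\,\langle \bar{\bm{x}}_{T},\bm{M}\hat{\bm{y}}\rangle$, and symmetrically $\sum_{t}\langle \hat{\bm{x}},\bm{M}\bm{y}_{t}\rangle=T\,\langle \hat{\bm{x}},\bm{M}\bar{\bm{y}}_{T}\rangle$, where $\bar{\bm{x}}_{T},\bar{\bm{y}}_{T}$ are the averages in the statement. The extrema then become $T$ times the extrema evaluated at the averages, and dividing by $T$ yields exactly the claimed identity.

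I do not expect any analytic obstacle here; the result is a clean exact equality, and its only delicate point is bookkeeping. Specifically, one must respect the sign convention (the maximizer's loss carries the minus sign) and the fact that the \emph{same} round-$t$ strategies appear both in the cancelling realized terms and in the averages $\bar{\bm{x}}_{T},\bar{\bm{y}}_{T}$; this matched pairing is precisely what upgrades the usual ``duality gap $\leq$ average regret'' inequality into an equality. It is also worth recording that the argument never uses any property of the particular regret minimizers generating $\bm{x}_{t},\bm{y}_{t}$, so the identity holds for any self-play sequences, which is what makes it a modular tool for the later convergence analyses.
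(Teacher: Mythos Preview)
Your argument is correct and is precisely the standard derivation of this folk theorem: write out each player's regret with the appropriate sign on the loss, observe that the realized bilinear terms $\pm\sum_{t}\langle \bm{x}_{t},\bm{M}\bm{y}_{t}\rangle$ cancel upon summation, and use linearity to replace the remaining comparator sums by $T$ times the averages. The paper does not supply its own proof of this proposition; it is stated as a cited result from \citet{freund1999adaptive}, with Appendix~\ref{app:self-play framework} only describing the self-play framework rather than proving the bound, so your write-up in fact fills a gap the authors deliberately left to the literature.
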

We present more details on the self-play framework in Appendix \ref{app:self-play framework}.
Several regret minimizers exist for the treeplex, e.g. \pomd{}~\citep{farina2021better}. 

\paragraph{CFR and Regret Matching$^+$.} {\em Counterfactual Regret minimization} (CFR, \cite{zinkevich2007regret}) runs independent regret minimizers with counterfactual losses at each infoset of the treeplexes. This considerably simplifies the optimization problem, since the decision set at each infoset $j \in \cJ$ is the simplex over the set of next available actions $\Delta^{n_{j}}:= \{ \bm{x} \in \R_{+}^{n_{j}} \; | \; \sum_{i=1}^{n_{j}} x_{i} = 1\}$.
In the CFR framework, the regret of each player (over the treeplex) is bounded by the maximum of the local regrets incurred at each infoset. Therefore, CFR combined with any regret minimizer over the simplex converges to a Nash equilibrium at a rate of $O(1/\sqrt{T})$. We refer to Appendix \ref{app:cfr} for more details.
Combining CFR with a local regret minimizer called {\em Regret Matching}$^+$ (\rmp, \citet{tammelin2015solving}) along with alternation and linear averaging yields an algorithm called \cfrp, which has been observed to attain strong practical performance compared to theoretically-faster methods~\citep{kroer2018faster}.
\rmp{} can only be implemented on the simplex and not for other decision sets, and proceeds as follows: given a sequence of loss $\bm{\ell}_{1},...,\bm{\ell}_{T} \in \R^{d}$, \rmp{} maintains a sequence $\bm{R}_{1},...,\bm{R}_{T} \in \R^{d}$ such that $\bm{R}_{1} = \bm{0}$ and
\begin{align}\label{eq:rmp}
    \bm{x}_{t} & = \bm{R}_{t} / \| \bm{R}_{t}\|_{1}, \bm{R}_{t+1} = \Pi_{\R_{+}^{d}}\left(\bm{R}_{t} - \eta \bm{g}(\bm{x}_{t},\bm{\ell}_{t})\right)
\end{align}
with $\eta>0$ and, for $\bm{x},\bm{\ell} \in \R^{d}$,
\begin{align}\label{eq:definition g}
\bm{g}(\bm{x},\bm{\ell}) & := \bm{\ell} - \langle \bm{x},\bm{\ell}\rangle \bm{1}.
\end{align}
We use the convention $\bm{0}/0 := (1/d)\bm{1}$ with $\bm{1} := (1,...,1) \in \R^{d}$. 
\rmp{} is {\em stepsize invariant}:  $\bm{x}_{1},...,\bm{x}_{T}$ are independent of $\eta$, since we have $\bm{x}_{t} = \bm{R}_{t}/\| \bm{R}_{t}\|_{1}$ and $\eta$ only rescales the entire sequence $\bm{R}_{1},...,\bm{R}_{T}$. Since \cfrp{} runs \rmp{} at each infoset independently, \cfrp{} is {\em infoset stepsize invariant}: there may be different stepsizes across different infosets and the iterates of \cfrp{} do not depend on their values, a desirable property when solving large-scale EFGs where stepsize tuning may be difficult. 

\rmp{} can be interpreted as a special instantiation of {\em Blackwell approachability}~\citep{blackwell1956analog,abernethy2011blackwell}. In this interpretation of \rmp, the goal of the decision maker is to compute the sequence of strategies $\bm{x}_{1},...,\bm{x}_{T} \in \Delta^d$ to ensure that the auxiliary sequence $\bm{R}_{T}/T \in \R_{+}^{d}$ approaches the {\em target set} $\R_{-}^{d}$ as $T \rightarrow + \infty$. Since $\bm{R}_{t} \in \R_{+}^{d}$, this is equivalent to ensuring that $ \lim_{T \rightarrow + \infty} \bm{R}_{T}/T = \bm{0}$. The vector $\bm{g}(\bm{x},\bm{\ell})$ is interpreted as an instantaneous loss for the approachability instance.
As an instantiation of Blackwell approachability, at each iteration \rmp{} computes an orthogonal projection onto the {\em conic hull} of the decision set:
\begin{equation}\label{eq:cone simplex}
    \R_{+}^{d} = \cone(\Delta^d)
\end{equation}
with $\cone(\Z) := \{ \alpha \bm{x} \; | \; \bm{x} \in \Z, \alpha \geq 0\}$ for a set $\Z$.
The decision function $\bm{R} \mapsto \bm{R} / \| \bm{R} \|_{1}$ is based on 
\begin{equation}\label{eq:simplex subset hyperplane}
    \Delta^{d} \subset \{ \bm{x} \in \R^{d} \; | \; \langle \bm{x},\bm{1}\rangle = 1\}.
\end{equation}
Since for $\bm{R} \in \R_{+}^{d}, \langle \bm{R},\bm{1}\rangle = \| \bm{R}\|_{1}$, then $\bm{x}_{t} = \bm{R}_{t} / \| \bm{R}_{t}\|_{1}$ can be written $\bm{x}_{t} = \bm{R}_{t} / \langle \bm{R}_{t},\bm{1}\rangle$, with $\bm{1}$ a vector such that the decision set $\Delta^d$ satisfies \eqref{eq:simplex subset hyperplane}. This ensures that
\begin{equation}\label{eq:hyperplane forcing simplex}
    \langle \bm{R}_{t},\bm{g}(\bm{x}_{t},\bm{\ell})\rangle = 0, \forall \; \bm{\ell} \in \R^{d}.
\end{equation}
We provide an illustration of the dynamics of \rmp{} in Figure \ref{fig:rmp dynamics}.
Equation \eqref{eq:hyperplane forcing simplex} is known as a {\em hyperplane forcing condition} and is a key ingredient in any Blackwell approachability-based algorithm; it ensures that the vector $\bm{R}_{T}$ grows at most at a rate of $O(\sqrt{T})$ so that $\lim_{T \rightarrow + \infty} \bm{R}_{T}/T = \bm{0}$. We refer to \citet{perchet2010approachability,grand2023solving} for more details on Blackwell approachability.

\begin{figure}
\centering
\includegraphics[width=0.3\columnwidth]{"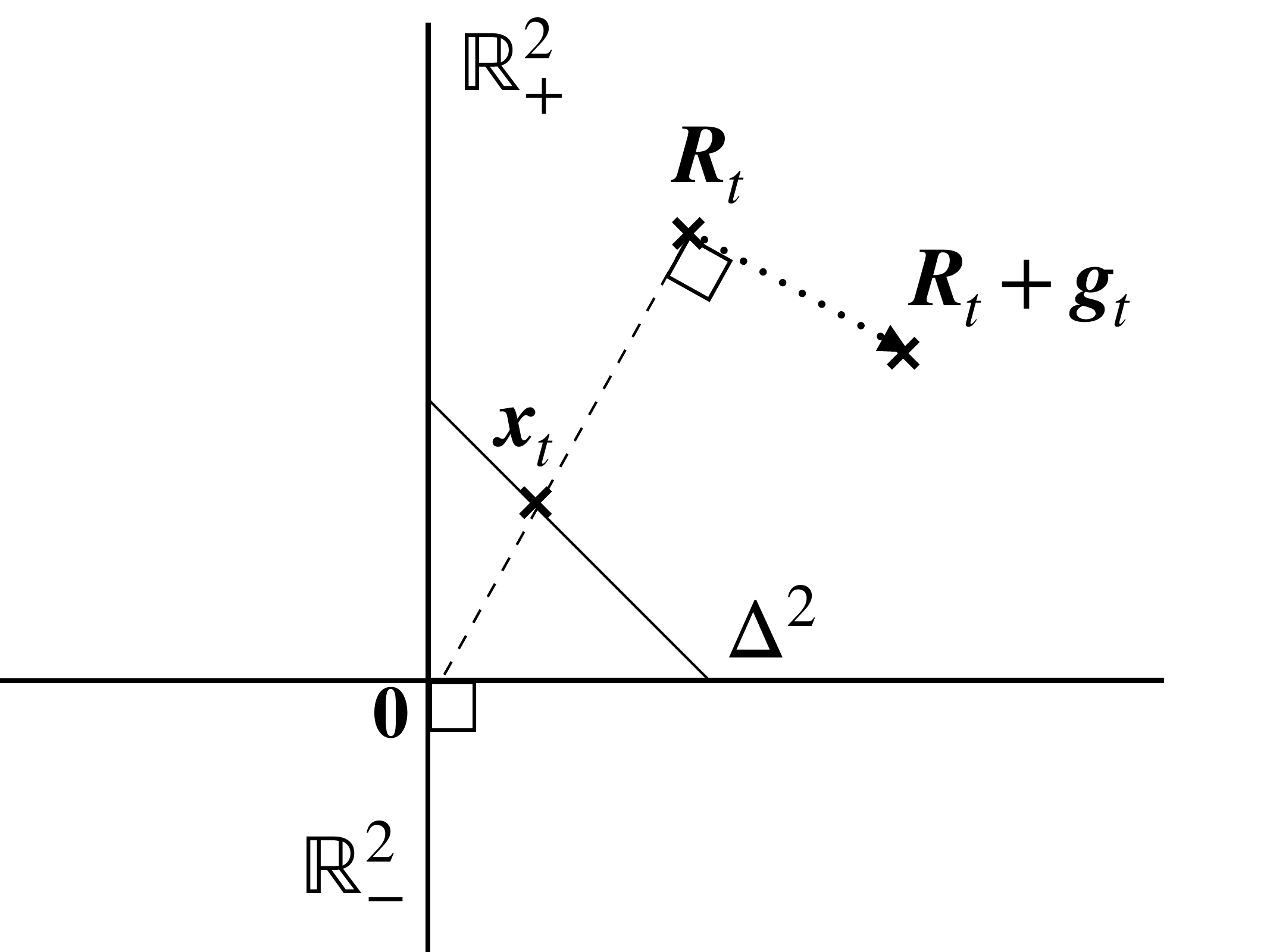"}
\caption{Dynamics of \rmp{} in $\R_{+}^{2}$. We write $\bm{g}_{t}=\bm{g}(\bm{x}_{t},\bm{\ell}_{t})$.}
\label{fig:rmp dynamics}
\end{figure}

\section{Blackwell Approachability on Treeplexes}
In this section we introduce a modular regret minimization framework for the treeplex based on Blackwell approachability. This framework can be used as a regret minimizer over $\T$ in the self-play framework (described in the previous section and in Appendix \ref{app:self-play framework}) to obtain an algorithm for solving EFGs.
Our algorithms are based on the fact that for $\T \subset \R^{n+1}$ a treeplex as defined in \eqref{eq:treeplex}, we have
\begin{equation}\label{eq:treeplex subspace}
    \T \subset \{ \bm{x} \in \R^{n+1} \; | \; \langle\bm{x},\bm{a}\rangle = 1\}
\end{equation}
for $\bm{a} = (1,\bm{0}) \in \R^{n+1}$ with $\bm{0} = (0,...,0) \in \R^{n}$. This property is analogous to \eqref{eq:simplex subset hyperplane} for the simplex. With this analogy in mind, we define $\C \subset \R^{n+1}$ and $\bm{f}(\bm{x},\bm{\ell}) \in \R^{n+1}$ as, for $\bm{x},\bm{\ell} \in \R^{n+1}$,
\begin{align}
    \C & :=\cone(\T) \label{eq:cone T} \\
\bm{f}(\bm{x},\bm{\ell})& :=\bm{\ell} - \langle  \bm{x},\bm{\ell}\rangle \bm{a}. \label{eq:definition-f-x-ell}
\end{align}
Equation \eqref{eq:cone T} is analogous to \eqref{eq:cone simplex} and Equation \eqref{eq:definition-f-x-ell} is analogous to \eqref{eq:definition g}. The cone $\C$ and the vector $\bm{f}(\bm{x},\bm{\ell})$ play a similar role for $\T$ as $\R_{+}^{d}$ and $\bm{g}(\bm{x},\bm{\ell})$ play for $\Delta^d$ in \rmp. 

Our Blackwell approachability-based framework is described in Algorithm \ref{alg:blackwell-approachability based regmin} and relies on running a regret minimizer \Regmin{} over $\C=\cone(\T)$ against the losses $\bm{f}(\bm{x}_{t},\bm{\ell}_{t})$ to obtain a regret minimizer over $\T$ against the losses $\bm{\ell}_{t}$, for $t \geq 1$.
\begin{algorithm}
      \caption{Blackwell approachability on the treeplex}
      \label{alg:blackwell-approachability based regmin}
      \begin{algorithmic}[1]
      \State {\bf Input}: A regret minimizer $\Regmin$ with decision set $\C$ 

      \State {\bf Initialization}: $\bm{R}_{1} = \bm{0} \in \R^{n+1}$
      
      \For{$t = 1, \dots, T$}
      \State $\bm{x}_{t} = \bm{R}_{t}/\langle\bm{R}_{t},\bm{a}\rangle$
      \State Observe the loss vector $\bm{\ell}_{t} \in \R^{n+1}$
      \State \Regmin{} observes $\bm{f}(\bm{x}_t,\bm{\ell}_t) \in \R^{n+1}$
      \State $\bm{R}_{t+1} = \Regmin\left(\cdot \right)$
      \EndFor
\end{algorithmic}
\end{algorithm}
We use the convention that $\bm{0}/0$ is the uniform strategy for the treeplex. Algorithm \ref{alg:blackwell-approachability based regmin} is the first Blackwell approachability-based algorithm operating on the entire treeplex (in contrast to \cfrp{} which relies on Blackwell approachability locally at the infosets level).
We first describe some important properties of
Algorithm \ref{alg:blackwell-approachability based regmin}:

{\em Feasibility of the current iterate.} Algorithm \ref{alg:blackwell-approachability based regmin} produces a feasible sequence of strategies, i.e., $\bm{x}_{t} \in \T, \forall \; t \geq 1$. Indeed, since \Regmin{} is a regret minimizer with $\C$ as the decision set, $\bm{R}_{t} \in \cone(\T)$, i.e., $\bm{R}_{t} = \alpha \bm{z}$ with $\alpha \in \R_{+}$ and $\bm{z} \in \T$. From \eqref{eq:treeplex subspace}, we have $\langle \bm{z},\bm{a}\rangle = 1$. Therefore,
$\bm{x}_{t} = \frac{\bm{R}_{t}}{\langle \bm{R}_{t},\bm{a}\rangle }= \frac{\alpha \bm{z}}{\alpha \langle \bm{z},\bm{a}\rangle} = \bm{z} \in \T.$ This is analogous to \rmp, where $\bm{x}_t$ is proportional to $\bm{R}_{t}$, see \eqref{eq:rmp} and Figure \ref{fig:rmp dynamics}.

{\em Hyperplane forcing.} For any $t \in \N$ we have
\begin{equation}\label{eq:halfspace-forcing}
\langle \bm{R}_{t},f(\bm{x}_{t},\bm{\ell}) \rangle =0, \forall \; \bm{\ell} \in \R^{n+1}.
\end{equation}
The hyperplane forcing equation~\eqref{eq:halfspace-forcing} is a crucial component of algorithms based on Blackwell approachability. It ensures that $\|\bm{R}_{t}\|_{2} = O(\sqrt{T})$. Equation~\eqref{eq:halfspace-forcing} is analogous to \eqref{eq:hyperplane forcing simplex} for \rmp{} and follows from $\bm{x}_{t} = \frac{\bm{R}_{t}}{ \langle \bm{R}_{t},\bm{a}\rangle}$, so that
\begin{align*}
    \langle \bm{R}_{t},\bm{f}(\bm{x}_{t},\bm{\ell}) \rangle & = \langle \bm{R}_{t},\bm{\ell}\rangle - \langle \bm{x}_{t},\bm{\ell}\rangle \langle \bm{R}_{t},\bm{a}\rangle \\
    & = \langle \bm{R}_{t},\bm{\ell}\rangle - \langle \frac{\bm{R}_{t}}{\langle \bm{R}_{t},\bm{a}\rangle},\bm{\ell}\rangle \langle \bm{R}_{t},\bm{a}\rangle \\
    & = \langle \bm{R}_{t},\bm{\ell}\rangle - \langle \bm{R}_{t},\bm{\ell}\rangle =0.
\end{align*}
{\em Regret minimization over $\T$.} Crucially, Algorithm \ref{alg:blackwell-approachability based regmin} always yields a regret minimizer over the treeplex $\T$, i.e., it ensures that the regret of $\bm{x}_{1},...,\bm{x}_{T} \in \T$ against any the sequence $\bm{\ell}_{1},...,\bm{\ell}_{T} \in \R^{n+1}$ is bounded by $O(\sqrt{T})$.
The proof is instructive and shows a central component to Blackwell approachability-based algorithms: minimizing regret over $\T$ can be achieved by minimizing regret over $\cone(\T)$.
\begin{proposition}\label{prop:blackwell-approachability adversarial-regret}
Let $\Regmin$ be a regret minimizer with $\C$ as the decision set.
Let $\bm{x}_{1},...,\bm{x}_{T} \in \T$ be computed by Algorithm \ref{alg:blackwell-approachability based regmin}. Then
$\max_{\hat{\bm{x}} \in \T} \sum_{t=1}^{T} \langle \bm{x}_{t}- \hat{\bm{x}},\bm{\ell}_{t}\rangle =O(\sqrt{T}).$
\end{proposition}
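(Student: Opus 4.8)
The plan is to reduce regret minimization over $\T$ to the regret guarantee that $\Regmin$ already provides over the cone $\C = \cone(\T)$, using the two structural facts established immediately before the statement: the hyperplane forcing identity \eqref{eq:halfspace-forcing} and the embedding $\T \subset \{\bm{x} \mid \langle \bm{x},\bm{a}\rangle = 1\}$ from \eqref{eq:treeplex subspace}. The starting point is the regret bound of $\Regmin$: since it is a regret minimizer over $\C$ fed the losses $\bm{f}(\bm{x}_t,\bm{\ell}_t)$ and producing iterates $\bm{R}_t \in \C$, for any fixed comparator $\hat{\bm{R}} \in \C$ it guarantees $\sum_{t=1}^T \langle \bm{f}(\bm{x}_t,\bm{\ell}_t), \bm{R}_t - \hat{\bm{R}}\rangle = O(\sqrt{T})$.

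First I would restrict attention to comparators of the form $\hat{\bm{R}} = \hat{\bm{x}}$ with $\hat{\bm{x}} \in \T$; this is legitimate because $\T \subset \cone(\T) = \C$. Next I would eliminate the iterate contribution using \eqref{eq:halfspace-forcing}: each term $\langle \bm{f}(\bm{x}_t,\bm{\ell}_t), \bm{R}_t\rangle$ vanishes, so the left-hand side of the regret bound collapses to $-\sum_{t=1}^T \langle \bm{f}(\bm{x}_t,\bm{\ell}_t), \hat{\bm{x}}\rangle$.

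The crux is to rewrite $\langle \bm{f}(\bm{x}_t,\bm{\ell}_t), \hat{\bm{x}}\rangle$ in terms of the original losses $\bm{\ell}_t$. Expanding the definition $\bm{f}(\bm{x},\bm{\ell}) = \bm{\ell} - \langle \bm{x},\bm{\ell}\rangle \bm{a}$ from \eqref{eq:definition-f-x-ell} gives $\langle \bm{f}(\bm{x}_t,\bm{\ell}_t), \hat{\bm{x}}\rangle = \langle \bm{\ell}_t,\hat{\bm{x}}\rangle - \langle \bm{x}_t,\bm{\ell}_t\rangle \langle \bm{a},\hat{\bm{x}}\rangle$, and since $\hat{\bm{x}} \in \T$ satisfies $\langle \bm{a},\hat{\bm{x}}\rangle = 1$ by \eqref{eq:treeplex subspace}, this simplifies to $\langle \bm{\ell}_t,\hat{\bm{x}}\rangle - \langle \bm{x}_t,\bm{\ell}_t\rangle$. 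Summing and negating recovers exactly the treeplex regret $\sum_{t=1}^T \langle \bm{x}_t - \hat{\bm{x}},\bm{\ell}_t\rangle$, so the $\Regmin$ bound yields $\sum_{t=1}^T \langle \bm{x}_t - \hat{\bm{x}},\bm{\ell}_t\rangle = O(\sqrt{T})$ for each fixed $\hat{\bm{x}} \in \T$. I would then take the maximum over $\hat{\bm{x}} \in \T$ to conclude.

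The main obstacle is the unboundedness of the cone $\C$: a generic regret minimizer over an unbounded set need not enjoy $O(\sqrt{T})$ regret against arbitrary comparators, and one must be careful not to invoke a vacuous or infinite guarantee. I would address this by only ever applying the guarantee against comparators lying in the bounded polytope $\T$, where $\|\hat{\bm{x}}\|$ is uniformly bounded; a per-comparator regret bound such as the $\|\hat{\bm{x}}\|^2/(2\eta) + (\eta/2)\sum_t \|\bm{f}(\bm{x}_t,\bm{\ell}_t)\|^2$ bound for projected (predictive) mirror descent is then uniformly $O(\sqrt{T})$ over $\T$ and therefore survives the maximization. This step implicitly uses boundedness of the losses $\bm{f}(\bm{x}_t,\bm{\ell}_t)$, which follows from $\bm{x}_t \in \T$ together with boundedness of the $\bm{\ell}_t$.
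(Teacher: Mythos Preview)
Your proof is correct and follows essentially the same approach as the paper: identify the comparator $\hat{\bm{R}}=\hat{\bm{x}}\in\T\subset\C$, use the hyperplane forcing identity \eqref{eq:halfspace-forcing} to drop the $\bm{R}_t$ term, and expand $\bm{f}$ together with $\langle\hat{\bm{x}},\bm{a}\rangle=1$ to recover the treeplex regret as the cone regret of $\Regmin$. The only difference is cosmetic (you start from the $\Regmin$ bound and unwind to the treeplex regret, whereas the paper starts from the treeplex regret and builds up to the $\Regmin$ bound), and your explicit discussion of why the unbounded cone does not spoil the $O(\sqrt{T})$ guarantee---because the comparator stays in the bounded set $\T$---is a point the paper's proof leaves implicit.
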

\begin{proof}
Let $\hat{\bm{x}} \in \T$ and let us write $\hat{\bm{R}} = \hat{\bm{x}}$. We have
\begin{align*}
     \sum_{t=1}^{T} \langle \bm{x}_{t}-\hat{\bm{x}},\bm{\ell}_{t} \rangle & = \sum_{t=1}^{T} \langle - \hat{\bm{x}},\bm{f}\left(\bm{x}_{t},\bm{\ell}_{t}\right) \rangle \\
    & =  \sum_{t=1}^{T} \langle- \hat{\bm{R}},\bm{f}\left(\bm{x}_{t},\bm{\ell}_{t}\right) \rangle\\
    & =  \sum_{t=1}^{T} \langle \bm{R}_{t}-\hat{\bm{R}},\bm{f}\left(\bm{x}_{t},\bm{\ell}_{t}\right) \rangle 
\end{align*}
where the first equality follows from the definition of $\bm{f}\left(\bm{x}_{t},\bm{\ell}_{t}\right)$ as in \eqref{eq:definition-f-x-ell} and $\langle \bm{z},\bm{a}\rangle =1$ for any $\bm{z} \in \T$, the second equality is because $\hat{\bm{x}} = \hat{\bm{R}}$, and the last equality follows from the hyperplane forcing condition \eqref{eq:halfspace-forcing}.
Now note that $\sum_{t=1}^{T} \langle \bm{R}_{t}-\hat{\bm{R}},\bm{f}\left(\bm{x}_{t},\bm{\ell}_{t}\right) \rangle$ is the regret of a regret minimizer \Regmin{} choosing $\bm{R}_{1},...,\bm{R}_{T}$ in the decision set $\C:= \cone(\T)$ against a sequence of loss $\bm{f}\left(\bm{x}_{1},\bm{\ell}_{1}\right),...,\bm{f}\left(\bm{x}_{T},\bm{\ell}_{T}\right)$ and a comparator $\hat{\bm{R}} \in \cone(\T)$. Therefore, $\sum_{t=1}^{T} \langle \bm{R}_{t}-\hat{\bm{R}},\bm{f}\left(\bm{x}_{t},\bm{\ell}_{t}\right) \rangle = O(\sqrt{T})$.
\end{proof}
\begin{remark}[Comparison with Lagrangian Hedging]
    Algorithm~\ref{alg:blackwell-approachability based regmin} is related to {\em Lagrangian Hedging}~\citep{gordon2007no,d2021optimisti}. Lagrangian Hedging builds upon Blackwell approachability with various potential functions to construct regret minimizers for {\em general} decision sets. As explained in the introduction, the main focus of our paper is on two-player zero-sum EFGs, i.e., on the case where the decision sets are treeplexes, and where we can obtain several additional interesting properties not studied in \citep{gordon2007no,d2021optimisti}, such as stepsize invariance, fast convergence rates, and efficient projection, as we detail in the next section.
    If one were to instantiate \cref{alg:blackwell-approachability based regmin} with the Follow-The-Regularized Leader algorithm, it would yield the regret minimizer for treeplexes studied in \citet{gordon2007no}, and our \cref{prop:practical implementation ptbp} yields an efficient projection oracle for the setup in \citet{gordon2007no}, which appealed to general convex optimization as an oracle.
\end{remark}

\section{Instantiations of Algorithm \ref{alg:blackwell-approachability based regmin}}
We can instantiate Algorithm \ref{alg:blackwell-approachability based regmin} with any regret minimizer over $\C$ to obtain various properties for the resulting algorithm, such as stepsize invariance, adaptive stepsizes, or achieving the state-of-the-art $O(1/T)$ convergence rate. We show next how to do so.
\paragraph{Predictive Treeplex Blackwell$^+$ (\ptbp).}
We first introduce {\em Predictive Treeplex Blackwell}$^+$ (Algorithm \ref{alg:predictive-blackwell-treeplex-two-prox-calls}), combining Algorithm \ref{alg:blackwell-approachability based regmin} with \pomd{} with $\C$ as a decision set.
\begin{algorithm}
      \caption{Predictive Treeplex Blackwell$^+$(\ptbp)}
      \label{alg:predictive-blackwell-treeplex-two-prox-calls}
      \begin{algorithmic}[1]
     \State {\bf Input}: $\eta >0$, predictions $\bm{m}_{1},...,\bm{m}_{T}\in \R^{n+1} $
      \State {\bf Initialization}: $\hat{\bm{R}}_{1} = \bm{0} \in \R^{n+1}$
      
      \For{$t = 1, \dots, T$}
      \State  $\bm{R}_{t} \in \Pi_{\C} \left(\hat{\bm{R}}_{t} - \eta  \bm{m}_{t}\right)$
      \State $\bm{x}_{t} = \bm{R}_{t}/\langle\bm{R}_{t},\bm{a}\rangle$
      \State Observe the loss vector $\bm{\ell}_{t} \in \R^{n+1}$
      \State $\hat{\bm{R}}_{t+1} \in \Pi_{\C} \left(\hat{\bm{R}}_{t} - \eta \bm{f}(\bm{x}_{t},\bm{\ell}_{t}) \right)$
      \EndFor
\end{algorithmic}
\end{algorithm}
We start by highlighting a crucial property of \ptbp{},
{\em treeplex stepsize invariance.} The sequence of iterates $\bm{x}_{1},...,\bm{x}_{T}$ generated by Algorithm \ref{alg:predictive-blackwell-treeplex-two-prox-calls} is independent of the choice of the stepsize $\eta>0$, that only rescales the sequences $\hat{\bm{R}}_{1},...,\hat{\bm{R}}_{T}$ and $\bm{R}_{1},...,\bm{R}_{T}$, the orthogonal projection onto a cone is positively homogeneous of degree $1$: $ \Pi_{\C}(\eta \bm{z})  = \eta \Pi_{\C}(\bm{z})$ for $\eta >0 $ and $\bm{z} \in \R^{n+1},$ and the function $\bm{R} \mapsto \bm{R} / \langle \bm{R},\bm{a}\rangle$ is scale-invariant: $\frac{(\eta \bm{R})}{ \langle(\eta\bm{R}),\bm{a}\rangle}  = \frac{\bm{R}}{\langle\bm{R},\bm{a}\rangle} $ for $\eta>0$ and $\bm{R} \in \R^{n+1}$.
 We provide a rigorous statement in the following proposition and we present the proof in Appendix \ref{app:proof-prop:stepsize-independent}.
 \begin{proposition}\label{prop:stepsize-independent}
The sequence $\bm{x}_{1},...,\bm{x}_{T}$ computed by \ptbp{} after $T$ iterations is independent on the stepsize $\eta>0$.
 \end{proposition}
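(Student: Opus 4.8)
The plan is to prove by induction on $t$ that, relative to a reference run with stepsize $\eta = 1$, running \ptbp{} with an arbitrary stepsize $\eta > 0$ merely rescales the auxiliary sequences by $\eta$ while leaving the strategies untouched. Concretely, writing $\hat{\bm{R}}_t^{\eta}, \bm{R}_t^{\eta}, \bm{x}_t^{\eta}$ for the quantities produced with stepsize $\eta$, I would prove the joint statement: for every $t$, $\hat{\bm{R}}_t^{\eta} = \eta\,\hat{\bm{R}}_t^{1}$, $\bm{R}_t^{\eta} = \eta\,\bm{R}_t^{1}$, and $\bm{x}_t^{\eta} = \bm{x}_t^{1}$. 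The two algebraic facts already recorded just before the proposition are exactly the tools needed: the Euclidean projection onto the convex cone $\C = \cone(\T)$ is positively homogeneous of degree one, $\Pi_{\C}(\eta \bm{z}) = \eta\,\Pi_{\C}(\bm{z})$ for $\eta > 0$, and the normalization map $\bm{R} \mapsto \bm{R}/\langle \bm{R}, \bm{a}\rangle$ is scale invariant.

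The base case is immediate since $\hat{\bm{R}}_1^{\eta} = \bm{0} = \eta\,\hat{\bm{R}}_1^{1}$ for every $\eta$. For the inductive step, assume $\hat{\bm{R}}_t^{\eta} = \eta\,\hat{\bm{R}}_t^{1}$. Factoring $\eta$ out inside the projection and applying positive homogeneity gives $\bm{R}_t^{\eta} = \Pi_{\C}(\eta(\hat{\bm{R}}_t^{1} - \bm{m}_t)) = \eta\,\Pi_{\C}(\hat{\bm{R}}_t^{1} - \bm{m}_t) = \eta\,\bm{R}_t^{1}$; then scale invariance of the normalization yields $\bm{x}_t^{\eta} = \bm{R}_t^{\eta}/\langle \bm{R}_t^{\eta}, \bm{a}\rangle = \bm{R}_t^{1}/\langle \bm{R}_t^{1}, \bm{a}\rangle = \bm{x}_t^{1}$. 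Finally, because the two runs now play the identical strategy at step $t$, they observe the identical loss $\bm{\ell}_t$, so $\bm{f}(\bm{x}_t^{\eta}, \bm{\ell}_t) = \bm{f}(\bm{x}_t^{1}, \bm{\ell}_t)$, and one more application of positive homogeneity to $\hat{\bm{R}}_{t+1}^{\eta} = \Pi_{\C}(\eta\,\hat{\bm{R}}_t^{1} - \eta\,\bm{f}(\bm{x}_t^{1}, \bm{\ell}_t))$ closes the induction and, in particular, establishes $\bm{x}_t^{\eta} = \bm{x}_t^{1}$ for all $t$.

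The step I expect to require the most care is the correct ordering \emph{within} the inductive step, precisely because the loss $\bm{\ell}_t$ is not fixed in advance but is revealed as a response to $\bm{x}_t$ (e.g.\ $\bm{M}\bm{y}_{t-1}$ in self-play, or any adversary reacting to the learner's play). I must therefore establish $\bm{x}_t^{\eta} = \bm{x}_t^{1}$ \emph{before} invoking equality of the losses, so that the loss sequences of the two runs provably coincide up to step $t$ rather than being assumed identical from the outset. Two minor points also deserve a remark: the $\bm{0}/0$ convention is harmless for the invariance, since $\bm{R}_t^{\eta} = \bm{0}$ exactly when $\bm{R}_t^{1} = \bm{0}$ and the convention then returns the same uniform strategy regardless of $\eta$; and since $\T$ is convex, $\C = \cone(\T)$ is a closed convex cone, so $\Pi_{\C}$ is single-valued and the homogeneity identity is unambiguous despite the set-membership notation used in Algorithm~\ref{alg:predictive-blackwell-treeplex-two-prox-calls}.
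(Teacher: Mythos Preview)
Your proposal is correct and follows essentially the same approach as the paper: induction on $t$, comparing the run with stepsize $\eta$ to a reference run with stepsize $1$, using the positive homogeneity of $\Pi_{\C}$ and the scale invariance of $\bm{R}\mapsto \bm{R}/\langle \bm{R},\bm{a}\rangle$. Your version is in fact slightly more complete, as the paper's appendix proof writes out only the non-predictive case ($\bm{m}_t=\bm{0}$) and asserts the predictive case is identical, whereas you carry the predictions through explicitly and also address the $\bm{0}/0$ convention and single-valuedness of the projection.
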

  Treeplex stepsize invariance is a crucial property, since in large EFGs, stepsize tuning is difficult and resource-consuming. This is the main advantage of using Blackwell approachability: running \pomd{} directly on the treeplex $\T$ does not result in a stepsize invariant algorithm, whereas \ptbp{} runs \pomd{} on $\cone(\T)$ and is stepsize invariant.
 To our knowledge, \cfrp{} and \pcfrp{} are the only other treeplex stepsize invariant algorithms for solving EFGs. In fact, they satisfy a stronger {\em infoset stepsize invariance} property: different stepsizes can be used at different infosets, and the iterates do not depend on their values.  We discuss the relation between \ptbp{} and known instantiations of Blackwell approachability over the simplex (\rmp{} and \cbap~\cite{grand2023solving}) in Appendix \ref{app:comparison rmp ptbp}. 

From Proposition \ref{prop:blackwell-approachability adversarial-regret} and the regret bounds on \pomd{} (see for instance section 3.1.1 in \citet{rakhlin2013online} or section 6 in \citet{farina2021faster}), we obtain the following proposition. 
We define $\Omega \in \R_{+}$ as $\Omega := \max_{\bm{x} \in \T} \| \bm{x}\|_{2}.$
\begin{proposition}\label{prop:ptbp-adversarial-regret}
Let $\bm{x}_{1},...,\bm{x}_{T}$ be computed by \ptbp{}. Then 
\[\max_{\hat{\bm{x}} \in \T} \sum_{t=1}^{T} \langle \bm{x}_{t}- \hat{\bm{x}},\bm{\ell}_{t}\rangle \leq \Omega \sqrt{\sum_{t=1}^{T} \| \bm{f}(\bm{x}_{t},\bm{\ell}_{t}) - \bm{m}_{t} \|_{2}^{2}}.\]
\end{proposition}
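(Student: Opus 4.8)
The plan is to reuse the reduction established in the proof of Proposition~\ref{prop:blackwell-approachability adversarial-regret}, but to retain the explicit \pomd{} regret bound instead of only its $O(\sqrt{T})$ growth rate. Fix $\hat{\bm{x}} \in \T$ and set $\hat{\bm{R}} = \hat{\bm{x}}$. Exactly as in that proof --- using the definition of $\bm{f}$ in~\eqref{eq:definition-f-x-ell}, the identity $\langle \bm{z}, \bm{a}\rangle = 1$ for $\bm{z} \in \T$, and the hyperplane forcing condition~\eqref{eq:halfspace-forcing} --- I would rewrite the treeplex regret as the regret incurred by \pomd{} over the cone $\C = \cone(\T)$:
\[
\sum_{t=1}^{T} \langle \bm{x}_{t} - \hat{\bm{x}}, \bm{\ell}_{t}\rangle = \sum_{t=1}^{T} \langle \bm{R}_{t} - \hat{\bm{R}}, \bm{f}(\bm{x}_{t},\bm{\ell}_{t})\rangle,
\]
where the right-hand side is precisely the regret of the sequence $\bm{R}_{1},\dots,\bm{R}_{T}$ generated by \pomd{} (lines 4 and 7 of Algorithm~\ref{alg:predictive-blackwell-treeplex-two-prox-calls}) against the losses $\bm{f}(\bm{x}_{t},\bm{\ell}_{t})$ with predictions $\bm{m}_{t}$, measured against the comparator $\hat{\bm{R}} = \hat{\bm{x}} \in \T \subseteq \C$.

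Next I would invoke the standard Euclidean predictive OMD regret guarantee (Section~3.1.1 of \citet{rakhlin2013online} or Section~6 of \citet{farina2021faster}), which, after dropping the nonpositive stability terms $-\tfrac{1}{2\eta}\sum_{t}(\|\bm{R}_{t} - \hat{\bm{R}}_{t}\|_{2}^{2} + \|\bm{R}_{t} - \hat{\bm{R}}_{t+1}\|_{2}^{2})$, yields
\[
\sum_{t=1}^{T} \langle \bm{R}_{t} - \hat{\bm{R}}, \bm{f}(\bm{x}_{t},\bm{\ell}_{t})\rangle \le \frac{\|\hat{\bm{R}} - \hat{\bm{R}}_{1}\|_{2}^{2}}{2\eta} + \frac{\eta}{2}\sum_{t=1}^{T} \|\bm{f}(\bm{x}_{t},\bm{\ell}_{t}) - \bm{m}_{t}\|_{2}^{2}.
\]
Since $\hat{\bm{R}}_{1} = \bm{0}$ and $\hat{\bm{R}} = \hat{\bm{x}} \in \T$, the first term satisfies $\|\hat{\bm{R}} - \hat{\bm{R}}_{1}\|_{2}^{2} = \|\hat{\bm{x}}\|_{2}^{2} \le \Omega^{2}$ by the definition of $\Omega$. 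Writing $B := \sum_{t=1}^{T}\|\bm{f}(\bm{x}_{t},\bm{\ell}_{t}) - \bm{m}_{t}\|_{2}^{2}$, this gives the uniform estimate $\sum_{t=1}^{T} \langle \bm{x}_{t} - \hat{\bm{x}}, \bm{\ell}_{t}\rangle \le \tfrac{\Omega^{2}}{2\eta} + \tfrac{\eta}{2} B$, valid for every $\hat{\bm{x}} \in \T$ and every $\eta > 0$.

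Finally I would use treeplex stepsize invariance (Proposition~\ref{prop:stepsize-independent}) to close the argument: because the iterates $\bm{x}_{1},\dots,\bm{x}_{T}$ --- and hence the losses $\bm{f}(\bm{x}_{t},\bm{\ell}_{t})$ and the quantity $B$ --- are identical for every choice of $\eta$, the left-hand side is a single fixed number while the right-hand side remains free to be minimized over $\eta > 0$. Taking the infimum and applying the AM-GM inequality to $\tfrac{\Omega^{2}}{2\eta} + \tfrac{\eta}{2}B$ (optimal at $\eta = \Omega/\sqrt{B}$) produces exactly $\Omega\sqrt{B}$, which is the claim after maximizing over $\hat{\bm{x}}$. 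I expect the main subtlety to be this last step: obtaining the sharp constant $\Omega$ (rather than $\sqrt{2}\,\Omega$) relies both on keeping the precise $\tfrac{\eta}{2}$ and $\tfrac{1}{2\eta}$ coefficients in the \pomd{} bound and, crucially, on stepsize invariance, which is what legitimizes optimizing over $\eta$ while the iterates and $B$ stay fixed. I would also verify that $\bm{0} \in \C$ and $\hat{\bm{x}} \in \C$, so that both the initialization and the comparator are admissible for the cone-based regret minimizer.
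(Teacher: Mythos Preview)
Your proposal is correct and follows essentially the same approach as the paper, which simply cites Proposition~\ref{prop:blackwell-approachability adversarial-regret} together with the standard \pomd{} regret bound and leaves the details implicit. Your explicit invocation of treeplex stepsize invariance (Proposition~\ref{prop:stepsize-independent}) to justify optimizing the right-hand side over $\eta$ while holding $B$ fixed is precisely the mechanism that allows the bound to be stated with no hypothesis on $\eta$ --- compare with Proposition~\ref{prop:smooth-ptbp-adversarial-regret} for \smoothptbp{}, where invariance fails and the same $\Omega\sqrt{B}$ conclusion must be qualified by a specific choice of stepsize.
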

From Proposition \ref{prop:ptbp-adversarial-regret}, \ptbp{} is a regret minimizer over treeplexes, and we can combine it with the self-play framework to solve EFGs, as shown in the next corollary.
We use the notations
$d :=\max\{n,m\}+1, \hat{\Omega}  := \max \{ \max\{\| \bm{x}\|_{2},\| \bm{y}\|_{2} \}\; | \; \bm{x} \in \cX,\bm{y}\in \cY\}, \\  \| \bm{M}\|_{2} := \sup \{ \frac{\| \bm{Mv}\|_{2}}{\| \bm{v}\|_{2}} \; | \; \bm{v} \neq \bm{0}\}.$
\begin{corollary}\label{cor:PTB+ convergence NE}
    Let $(\bm{x}_{t})_{t \geq 1}$ and $(\bm{y}_{t})_{t \geq 1}$ be the sequence of strategies computed by both players employing \ptbp{} in the self-play framework, with previous losses as predictions:
    $\bm{m}^{x}_{t}  =\bm{f}(\bm{x}_{t-1},\bm{My}_{t-1}),\bm{m}^{y}_{t} = \bm{f}(\bm{y}_{t-1},-\bm{M}\tr\bm{x}_{t-1}).$
    Let $\left(\bar{\bm{x}}_{T},\bar{\bm{y}}_{T}\right) = \frac{1}{T} \sum_{t=1}^{T}\left(\bm{x}_{t},\bm{y}_{t}\right)$. Then
\[ \max_{\bm{y} \in \cY} \; \langle \bar{\bm{x}}_{T},\bm{My}\rangle - \min_{\bm{x} \in \cX} \; \langle \bm{x},\bm{M}\bar{\bm{y}}_{T}\rangle \leq \frac{\hat{\Omega}^3\sqrt{d}\sqrt{\| \bm{M}\|_{2}}}{\sqrt{T}}.\]
\end{corollary}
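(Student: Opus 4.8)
The plan is to chain the folk theorem of Proposition~\ref{prop:self-play framework} with the individual regret guarantee of Proposition~\ref{prop:ptbp-adversarial-regret}, reducing the whole statement to a uniform bound on the prediction errors $\|\bm{f}(\bm{x}_t,\bm{\ell}_t)-\bm{m}_t\|_2$. By Proposition~\ref{prop:self-play framework}, the duality gap equals $(\Reg^T_1+\Reg^T_2)/T$, where $\Reg^T_1,\Reg^T_2$ are the \emph{treeplex} regrets of the two players, each of whom runs \ptbp. Since \ptbp{} is a regret minimizer over the treeplex, I would apply Proposition~\ref{prop:ptbp-adversarial-regret} to each player, obtaining for the first player $\Reg^T_1 \le \hat{\Omega}\sqrt{\sum_{t=1}^T\|\bm{f}(\bm{x}_t,\bm{M}\bm{y}_t)-\bm{m}^x_t\|_2^2}$ (using $\Omega\le\hat\Omega$ since $\cX\subset\{\|\bm{x}\|_2\le\hat\Omega\}$), and symmetrically for the second player with losses $-\bm{M}\tr\bm{x}_t$. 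It then remains only to bound the two sums of squared prediction errors.

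Next I would exploit the previous-loss predictions. Because $\bm{m}^x_t=\bm{f}(\bm{x}_{t-1},\bm{M}\bm{y}_{t-1})$, each prediction error is a consecutive difference, $\|\bm{f}(\bm{x}_t,\bm{M}\bm{y}_t)-\bm{m}^x_t\|_2=\|\bm{f}(\bm{x}_t,\bm{M}\bm{y}_t)-\bm{f}(\bm{x}_{t-1},\bm{M}\bm{y}_{t-1})\|_2\le \|\bm{f}(\bm{x}_t,\bm{M}\bm{y}_t)\|_2+\|\bm{f}(\bm{x}_{t-1},\bm{M}\bm{y}_{t-1})\|_2$, so it suffices to bound $\|\bm{f}(\bm{x},\bm{M}\bm{y})\|_2$ uniformly over $\bm{x}\in\cX,\bm{y}\in\cY$ (the $t=1$ term is handled by the initialization convention and absorbed into constants). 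Using the definition $\bm{f}(\bm{x},\bm{\ell})=\bm{\ell}-\langle\bm{x},\bm{\ell}\rangle\bm{a}$ with $\|\bm{a}\|_2=1$, the triangle inequality, the operator-norm estimate $\|\bm{M}\bm{y}\|_2\le\|\bm{M}\|_2\|\bm{y}\|_2$, and Cauchy--Schwarz on the scalar $\langle\bm{x},\bm{M}\bm{y}\rangle$, every norm $\|\bm{x}\|_2,\|\bm{y}\|_2$ is at most $\hat\Omega$, which gives a uniform bound of the form $\|\bm{f}(\bm{x},\bm{M}\bm{y})\|_2=O(\hat\Omega^2\|\bm{M}\|_2)$.

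Substituting this uniform bound yields $\sum_{t=1}^T\|\bm{f}(\bm{x}_t,\bm{M}\bm{y}_t)-\bm{m}^x_t\|_2^2=O(T\hat\Omega^4\|\bm{M}\|_2^2)$, hence $\Reg^T_1=O(\hat\Omega^3\|\bm{M}\|_2\sqrt{T})$ and likewise $\Reg^T_2=O(\hat\Omega^3\|\bm{M}\|_2\sqrt{T})$. Dividing their sum by $T$ as in Proposition~\ref{prop:self-play framework} produces the claimed $O(1/\sqrt{T})$ rate. Tracking the norm conversions between the $2$-norm and the $\infty$-norm over the $d$-dimensional sequence space, together with a tighter accounting of the $\langle\bm{x},\bm{\ell}\rangle\bm{a}$ term, is what refines the crude constant into the stated $\hat\Omega^3\sqrt{d}\sqrt{\|\bm{M}\|_2}$.

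The main obstacle is conceptual rather than computational: no cancellation is available here. In optimistic self-play one usually hopes to cancel the positive loss-variation term $\sum_t\|\bm{f}(\bm{x}_t,\bm{\ell}_t)-\bm{m}_t\|_2^2$ against a negative iterate-stability term generated by \pomd, thereby reaching the $O(1/T)$ rate. Here the loss variation lives in the treeplex iterates $\bm{x}_t$ (through $\bm{\ell}_t=\bm{M}\bm{y}_t$), whereas the stability term produced by \pomd{} lives in the cone iterates $\bm{R}_t$, and the normalization $\bm{x}_t=\bm{R}_t/\langle\bm{R}_t,\bm{a}\rangle$ severs the quantitative link between movement in $\bm{R}$ and movement in $\bm{x}$. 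I would therefore not attempt the cancellation and instead settle for the worst-case constant bound on $\|\bm{f}\|_2$, which is precisely what delivers the $O(1/\sqrt{T})$ rate of \ptbp{} and motivates introducing the smoothed variant \smoothptbp{} to recover the faster $O(1/T)$ rate.
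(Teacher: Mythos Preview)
Your approach is exactly what the paper intends: the corollary is stated without a detailed proof, simply as the combination of Proposition~\ref{prop:self-play framework} (the folk theorem) with Proposition~\ref{prop:ptbp-adversarial-regret} applied to each player, followed by a uniform bound on the prediction errors $\|\bm f(\bm x_t,\bm M\bm y_t)-\bm m_t^x\|_2$. Your diagnosis of why the cancellation that would yield $O(1/T)$ fails here---the stability term lives in the cone iterates $\bm R_t$ while the loss variation lives in the treeplex iterates $\bm x_t$, and the normalization $\bm R\mapsto\bm R/\langle\bm R,\bm a\rangle$ breaks the link---is also precisely the paper's reasoning, cf.\ the discussion surrounding Proposition~\ref{prop:lipschitness-chd}.

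One small caveat: your crude constant $O(\hat\Omega^3\|\bm M\|_2)$ is not actually \emph{worse} than the stated $\hat\Omega^3\sqrt{d}\sqrt{\|\bm M\|_2}$; the two are incomparable (and the $\sqrt{\|\bm M\|_2}$ in the paper is dimensionally peculiar). Since the paper gives no derivation of this exact constant, I would not over-invest in reproducing it and would be content with your $O(\hat\Omega^3\|\bm M\|_2/\sqrt{T})$ bound, which follows cleanly from your argument.
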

Finally, we can efficiently compute the orthogonal projection onto $\C$ at every iteration to implement \ptbp, since $\C$ admits the following simple formulation of as a polytope:
$\C = \{ \bm{x} \in \R^{n+1}_{+} \; | \; \sum_{a \in \A_{j}} x_{ja} = x_{p_{j}}, \forall \; j \in \J\}.$
\begin{proposition}\label{prop:practical implementation ptbp}
Let $\T$ be a treeplex with depth $d$, number of sequences $n$, number of leaf sequences $l$, and number of infosets $m$. The orthogonal projection $\Pi_{\C}(\bm{y})$ of a point $\bm{y} \in \R^{n+1}$ onto $\C=\cone(\T)$ can be computed in $O(d n \log (l+m))$ arithmetic operations.
\end{proposition}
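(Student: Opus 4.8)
The plan is to exploit the laminar tree structure of the constraints defining $\C$ to reduce the projection to a bottom-up dynamic program over the infoset tree, in which each subproblem is a one-dimensional convex piecewise-quadratic optimization. Writing the objective as $\tfrac12\|\bm x-\bm y\|_2^2$, I would first observe that the only coupling between a sequence $\sigma=(j,a)$ and the rest of the problem is through the single scalar $x_\sigma$: the flow constraints $\sum_{a'\in\A_{j'}}x_{j'a'}=x_\sigma$ for the child infosets $j'\in\C_{ja}$ tie the subtree rooted at $\sigma$ to the rest of the tree only through the value $x_\sigma$. This motivates defining, for each sequence $\sigma$, a value function $\phi_\sigma(s)$ equal to the minimal squared error over the subtree rooted at $\sigma$ subject to $x_\sigma=s$, $s\ge 0$, and for each infoset $j$ the function $\psi_j(s)=\min\{\sum_{a\in\A_j}\phi_{(j,a)}(z_a): \sum_a z_a=s,\ z\ge 0\}$. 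These satisfy the recursion $\phi_\sigma(s)=\tfrac12(s-y_\sigma)^2+\sum_{j'\in\C_{ja}}\psi_{j'}(s)$, with $\phi_\sigma(s)=\tfrac12(s-y_\sigma)^2$ at leaf sequences, and the projection is recovered from $\min_{s\ge0}\phi_\varnothing(s)$ (treating $\xnot$ as a free nonnegative root variable) by back-substituting the optimal $s$ down the tree.

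The key structural claim, proved by induction up the tree, is that every $\phi_\sigma$ and $\psi_j$ is a finite, convex, piecewise-quadratic function on $[0,\infty)$; I would represent each by its derivative, a nondecreasing continuous piecewise-linear function stored through its breakpoints. Under this representation the recursion uses exactly two operations. The summation $\phi'_\sigma(s)=(s-y_\sigma)+\sum_{j'}\psi'_{j'}(s)$ merges the breakpoint lists of the children and preserves piecewise-linearity and monotonicity. The infoset step $\psi_j$ is a sum-constrained infimal convolution whose optimality conditions give a common multiplier $\mu$ with $\phi'_{(j,a)}(z_a)=\mu$ on the support and $z_a=0$ otherwise; hence $z_a(\mu)$ is the clamped inverse of $\phi'_{(j,a)}$, the map $\mu\mapsto\sum_a z_a(\mu)$ is nondecreasing piecewise-linear, and $\psi'_j$ is its inverse, again nondecreasing piecewise-linear. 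This closes the induction and, simultaneously, exhibits each step as a water-filling-type sweep over breakpoints, implemented by inverting and clamping piecewise-linear functions.

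The remaining, and main, work is the complexity bookkeeping. I would bound the number of breakpoints of each value function by the number of sequences in its subtree and charge the cost of the merge and inverse-clamp operations to the breakpoints they process. Each breakpoint originates at one sequence and can be propagated upward through at most $d$ levels of the recursion, so the total number of breakpoint-operations over the whole tree is $O(dn)$. Maintaining the breakpoints in a balanced mergeable search structure, so that the summation merges and the inversion sweeps cost $O(\log(\cdot))$ per processed breakpoint with structure size bounded by the total breakpoint count $O(l+m)$, then yields the $O(dn\log(l+m))$ bound. The step I expect to be most delicate is precisely this amortized accounting: making the infimal-convolution/inverse step run in time proportional to the number of breakpoints it touches rather than rebuilding each function from scratch, and verifying that breakpoints are neither duplicated nor regenerated as they move up the tree, so that the depth factor $d$ and the data-structure factor $\log(l+m)$ combine cleanly instead of being multiplied by the number of actions per infoset.
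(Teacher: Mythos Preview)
Your proposal is correct and essentially identical to the paper's proof, which (following Gilpin et al.\ and Farina) also builds piecewise-quadratic subtree value functions represented by their nondecreasing piecewise-linear derivatives (called SMPL functions there), combines them recursively by exactly your summation and sum-constrained infimal-convolution steps, and recovers the cone projection by minimizing the root value function over the free scale parameter $s\ge 0$. The one imprecision is your justification of the $\log(l+m)$ factor: the total breakpoint count at any node is $\Theta(n)$, not $O(l+m)$, so a ``search-structure size'' argument yields only $\log n$; the paper instead extracts $\log(l+m)$ from the fan-in $k$ at each step via an $O(N\log k)$ heap-merge of $k$ sorted breakpoint lists of total size $N$, bounding $k\le |\A_j|\le l+m$ at branching steps and $k\le m$ at Cartesian-product steps.
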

\paragraph{A stable algorithm: \smoothptbp{}.}
We now modify \ptbp{} to obtain faster convergence rates for solving EFGs. The $O(1/\sqrt{T})$ average convergence rate of \ptbp{} may seem surprising since in the {\em matrix game} setting, \pomd{} over the simplexes obtains a $O(1/T)$ average convergence~\citep{syrgkanis2015fast}. This discrepancy comes from \ptbp{} running \pomd{} {\em on the set $\C = \cone(\T)$} instead of the original decision set $\T$, so that the Lipschtiz continuity of the loss function and the classical {\em RVU bounds} (Regret Bounded by Variation in Utilities, see Equation (1) in \citet{syrgkanis2015fast}), central to proving the fast convergence of predictive algorithms, may not hold.  
For \ptbp{}, the Lipschitz continuity of the loss $\bm{R} \mapsto \bm{f}(\bm{x},\bm{\ell})$ with $\bm{x}=\bm{R}/\langle \bm{R},\bm{a}\rangle$ depends on the Lipschitz continuity of the decision function $\bm{R} \mapsto \bm{R}/\langle\bm{R},\bm{a}\rangle$ over $\C$, which we analyze next.
\begin{proposition}\label{prop:lipschitness-chd}
Let $\bm{R}_{1},\bm{R}_{2} \in \cone(\T)$. Then
\[ \left\| \frac{\bm{R}_{1}}{\langle\bm{R}_{1},\bm{a}\rangle} - \frac{\bm{R}_{2}}{\langle\bm{R}_{2},\bm{a}\rangle} \right\|_{2} \leq \frac{ \Omega \cdot \| \bm{R}_{1} - \bm{R}_{2}\|_{2}}{\max \{ \langle\bm{R}_{1},\bm{a}\rangle,\langle\bm{R}_{2},\bm{a}\rangle\}}.\]
\end{proposition}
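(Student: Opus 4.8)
The plan is to reduce the whole estimate to a single algebraic identity for the difference of the two normalized vectors, and then to control that identity by decomposing along the direction $\bm{a}$ and applying a carefully weighted Cauchy--Schwarz inequality. Throughout I would write $\alpha_i := \langle \bm{R}_i,\bm{a}\rangle$ and $\bm{z}_i := \bm{R}_i/\alpha_i$; recall from the ``feasibility of the current iterate'' discussion preceding Proposition~\ref{prop:blackwell-approachability adversarial-regret} that $\bm{z}_i \in \T$, so that $\langle \bm{z}_i,\bm{a}\rangle = 1$ and $\|\bm{z}_i\|_2 \le \Omega$. By the symmetry of the statement I would assume without loss of generality that $\alpha_1 = \max\{\alpha_1,\alpha_2\}$, so the denominator in the target bound is $\alpha_1$ (I would also note that the statement implicitly takes $\alpha_1,\alpha_2>0$, the case $\bm{R}_i=\bm{0}$ being handled separately by the uniform-strategy convention).

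First I would establish the identity
\[ \bm{z}_1 - \bm{z}_2 = \frac{1}{\alpha_1}\bigl(\bm{\delta} - \langle \bm{\delta},\bm{a}\rangle\,\bm{z}_2\bigr), \qquad \bm{\delta} := \bm{R}_1 - \bm{R}_2, \]
obtained by adding and subtracting $\bm{R}_2/\alpha_1$, using $\bm{R}_2 = \alpha_2\bm{z}_2$, and recognizing $\alpha_1 - \alpha_2 = \langle \bm{\delta},\bm{a}\rangle$. This reduces the proposition to proving the dimensionless bound $\|\bm{\delta} - \langle\bm{\delta},\bm{a}\rangle\bm{z}_2\|_2 \le \Omega\,\|\bm{\delta}\|_2$.

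For that bound I would decompose everything along $\bm{a}$, which is a unit vector. Writing $\beta := \langle\bm{\delta},\bm{a}\rangle$, $\bm{\delta}^\perp := \bm{\delta} - \beta\bm{a}$, and $\bm{w}_2 := \bm{z}_2 - \bm{a}$, both $\bm{\delta}^\perp$ and $\bm{w}_2$ are orthogonal to $\bm{a}$, and one checks directly that $\bm{\delta} - \beta\bm{z}_2 = \bm{\delta}^\perp - \beta\bm{w}_2$. Two ingredients then finish the job: the Pythagorean identities $\|\bm{\delta}\|_2^2 = \beta^2 + \|\bm{\delta}^\perp\|_2^2$ and $\|\bm{z}_2\|_2^2 = 1 + \|\bm{w}_2\|_2^2$, the latter giving $\|\bm{w}_2\|_2 \le \sqrt{\Omega^2 - 1}$; and the triangle inequality $\|\bm{\delta}^\perp - \beta\bm{w}_2\|_2 \le \|\bm{\delta}^\perp\|_2 + |\beta|\,\|\bm{w}_2\|_2$. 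Combining these with a weighted Cauchy--Schwarz step,
\[ \|\bm{\delta}^\perp\|_2 + |\beta|\sqrt{\Omega^2-1} \;\le\; \sqrt{1 + (\Omega^2-1)}\,\sqrt{\|\bm{\delta}^\perp\|_2^2 + \beta^2} \;=\; \Omega\,\|\bm{\delta}\|_2, \]
yields the claim after multiplying back by $1/\alpha_1$.

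The one genuinely delicate point --- the step I would flag as the crux --- is the choice of weights in the Cauchy--Schwarz inequality so that the constant comes out to exactly $\Omega$ rather than the looser $1+\Omega$ that a naive split would produce. This works precisely because the bound $\|\bm{w}_2\|_2 \le \sqrt{\Omega^2-1}$ pairs with the weight vector $(1,\sqrt{\Omega^2-1})$, whose squared norm is $1 + (\Omega^2-1) = \Omega^2$ on the nose. Everything else is routine algebra, so the proof hinges on spotting this exact cancellation.
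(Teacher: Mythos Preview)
Your argument is correct. The identity $\bm{z}_1-\bm{z}_2=\tfrac{1}{\alpha_1}(\bm{\delta}-\langle\bm{\delta},\bm{a}\rangle\bm{z}_2)$ checks out, the orthogonal decompositions along $\bm{a}$ are valid because $\|\bm{a}\|_2=1$ and $\langle\bm{z}_2,\bm{a}\rangle=1$, and the weighted Cauchy--Schwarz step with weights $(1,\sqrt{\Omega^2-1})$ indeed lands exactly on $\Omega$ (note $\Omega\ge 1$ since every $\bm{x}\in\T$ has $x_\varnothing=1$, so the square root is real).

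Your route is genuinely different from the paper's. The paper argues geometrically: it projects $\bm{R}_1$ orthogonally onto the line through $\bm{R}_2$ to obtain a point $\bm{h}$, and separately takes the point $\bm{p}$ on that line satisfying $\langle\bm{p},\bm{a}\rangle=\langle\bm{R}_1,\bm{a}\rangle$; then it reduces the claim to bounding the ratio $\|\bm{R}_1-\bm{p}\|_2/\|\bm{R}_1-\bm{h}\|_2$ by $\Omega$, which it does by completing $\bm{R}_1-\bm{p}$ and $\bm{a}$ to an orthonormal basis and reading off inequalities between coordinates of the unit vector $\hat{\bm{R}}_2$. Your approach is purely algebraic: one explicit identity, one orthogonal split, and one two-term Cauchy--Schwarz. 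What you gain is brevity and transparency---the role of the constant $\Omega$ is visible in a single line, and there are no auxiliary points to track. What the paper's picture buys is some geometric intuition for \emph{why} the bound degenerates as $\langle\bm{R}_i,\bm{a}\rangle\to 0$ (the angle between the hyperplane $\{\langle\cdot,\bm{a}\rangle=\mathrm{const}\}$ and the ray through $\bm{R}_2$ controls the ratio), but at the cost of a longer argument.
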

We present the proof of Proposition \ref{prop:lipschitness-chd} in Appendix \ref{app:proof-prop-lip-chd}.
Proposition \ref{prop:lipschitness-chd} shows that when the vector $\bm{R}$ is such that $\langle\bm{R},\bm{a}\rangle$ is small, the decision function $\bm{R} \mapsto \bm{R}/\langle\bm{R},\bm{a}\rangle$ may vary rapidly, an issue known as {\em instability} and also observed for a predictive variant of \rmp~\citep{farina2023regret}. To ensure the Lipschitzness of the decision function, we can ensure that $\bm{R}_{t}$ and $\hat{\bm{R}}_{t}$ always belong to the {\em stable region} $\C_{\geq}$:
\[\C_{\geq} := \cone(\T) \cap \{ \bm{R} \in \R^{n+1} \; | \; \langle\bm{R},\bm{a}\rangle \geq R_{0}\}\] for $R_{0}>0$, and we recover Lipschitz continuity over $\Cgeq$:
\[ \left\| \frac{\bm{R}_{1}}{\langle\bm{R}_{1},\bm{a}\rangle} - \frac{\bm{R}_{2}}{\langle\bm{R}_{2},\bm{a}\rangle} \right\|_{2} \leq \frac{ \Omega}{R_{0}} \| \bm{R}_{1} - \bm{R}_{2}\|_{2}, \forall \; \bm{R}_{1},\bm{R}_{2} \in \Cgeq.\] 
This leads us to introduce {\em Smooth Predictive Treeplex Blackwell$^+$} (\smoothptbp, Algorithm \ref{alg:pred-stable-blackwell-treeplex}), a variant of \ptbp,where $\bm{R}_{t}$ and $\hat{\bm{R}}_{t}$ always belong to $\C_{\geq}$.
\begin{algorithm}
      \caption{\smoothptbp}
      \label{alg:pred-stable-blackwell-treeplex}
      \begin{algorithmic}[1]
     \State {\bf Input}: $\eta >0$, predictions $\bm{m}_{1},...,\bm{m}_{T}\in \R^{n+1} $
      \State {\bf Initialization}: $\hat{\bm{R}}_{1} = \bm{0} \in \R^{n+1}$
      
      \For{$t = 1, \dots, T$}
      \State  $\bm{R}_{t} \in \Pi_{\Cgeq} \left(\hat{\bm{R}}_{t} - \eta  \bm{m}_{t}\right)$
      \State $\bm{x}_{t} = \bm{R}_{t}/\langle\bm{R}_{t},\bm{a}\rangle$
      \State Observe the loss vector $\bm{\ell}_{t} \in \R^{n+1}$
      \State $\hat{\bm{R}}_{t+1} \in \Pi_{\Cgeq} \left(\hat{\bm{R}}_{t} - \eta \bm{f}(\bm{x}_{t},\bm{\ell}_{t}) \right)$
      \EndFor
\end{algorithmic}
\end{algorithm}

For \smoothptbp{}, we first note that $\bm{x}_{t} \in \T$ since $\bm{R}_{t} \in \C_{\geq} \subset \cone(\T)$. We also have the hyperplane forcing property \eqref{eq:halfspace-forcing}, which only depends on $\bm{x}_{t} = \bm{R}_{t}/\langle\bm{R}_{t},\bm{a}\rangle$. However, \smoothptbp{} is not treeplex stepsize invariant, because the orthogonal projections are onto $\Cgeq$, which is not a cone.
Note that $\Cgeq$ admits a simple polytope formulation:
\[\Cgeq = \{ \bm{x} \in \R^{n+1}_{+} \; | \xnot \geq R_{0}, \sum_{a \in \A_{j}} x_{ja} = x_{p_{j}}, \forall \; j \in \J\}\]
so the complexity of computing the orthogonal projection onto $\Cgeq$ is the same as computing the orthogonal projection onto $\C$. We provide a proof in Appendix \ref{app: practical implementation smooth ptbp}.

We now show that \smoothptbp{} is a regret minimizer. Indeed, the proof of Proposition \ref{prop:blackwell-approachability adversarial-regret} can be adapted to relate the regret in $\bm{x}_{1},...,\bm{x}_{T}$ in $\T$ to regret in $\bm{R}_{1},...,\bm{R}_{T}$ in $\Cgeq$.
\begin{proposition}\label{prop:smooth-ptbp-adversarial-regret}
Let $\bm{x}_{1},...,\bm{x}_{T}$ be computed by \smoothptbp{}. For $\eta = \sqrt{2 \Omega}/\sqrt{\sum_{t=1}^{T} \| f(\bm{x}_{t},\bm{\ell}_{t}) - \bm{m}_{t} \|_{2}^{2}}$, we have
\[\max_{\hat{\bm{x}} \in \T} \sum_{t=1}^{T} \langle \bm{x}_{t}- \hat{\bm{x}},\bm{\ell}_{t}\rangle \leq \Omega \sqrt{\sum_{t=1}^{T} \| f(\bm{x}_{t},\bm{\ell}_{t}) - \bm{m}_{t} \|_{2}^{2}}.\]
\end{proposition}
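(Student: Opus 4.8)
The plan is to mirror the proof of Proposition~\ref{prop:blackwell-approachability adversarial-regret} almost verbatim, the only real change being that the inner regret is now controlled by an explicit predictive-\pomd{} bound over $\Cgeq$ rather than a generic $O(\sqrt{T})$ estimate. First I would fix an arbitrary comparator $\hat{\bm{x}} \in \T$ and embed it as $\hat{\bm{R}} = \hat{\bm{x}}$. The crucial preliminary observation is that this embedding is feasible for the inner minimizer: since $\hat{\bm{x}} \in \T$ satisfies $\langle \hat{\bm{x}}, \bm{a}\rangle = 1$ by \eqref{eq:treeplex subspace}, we have $\hat{\bm{R}} \in \Cgeq$ provided $R_{0} \leq 1$ (so that $\T \subseteq \Cgeq$). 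I would state this condition explicitly, since it is what guarantees the comparator lies in the decision set of the inner \pomd{} instance.

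Next I would reproduce the telescoping identity. Using the definition \eqref{eq:definition-f-x-ell} of $\bm{f}$ together with $\langle \hat{\bm{x}}, \bm{a}\rangle = 1$, then $\hat{\bm{x}} = \hat{\bm{R}}$, and finally the hyperplane forcing condition \eqref{eq:halfspace-forcing}, I obtain
\[
\sum_{t=1}^{T} \langle \bm{x}_{t} - \hat{\bm{x}}, \bm{\ell}_{t}\rangle
= \sum_{t=1}^{T} \langle -\hat{\bm{R}}, \bm{f}(\bm{x}_{t},\bm{\ell}_{t})\rangle
= \sum_{t=1}^{T} \langle \bm{R}_{t} - \hat{\bm{R}}, \bm{f}(\bm{x}_{t},\bm{\ell}_{t})\rangle.
\]
I would emphasize that \eqref{eq:halfspace-forcing} holds unchanged for \smoothptbp{}, since it uses only $\bm{x}_{t} = \bm{R}_{t}/\langle \bm{R}_{t},\bm{a}\rangle$ and is insensitive to whether the projection is onto $\C$ or onto $\Cgeq$; the computation is identical to the one displayed just before Proposition~\ref{prop:blackwell-approachability adversarial-regret}.

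The remaining quantity $\sum_{t} \langle \bm{R}_{t} - \hat{\bm{R}}, \bm{f}(\bm{x}_{t},\bm{\ell}_{t})\rangle$ is exactly the regret incurred by the predictive mirror-descent updates of lines~4 and~7 of Algorithm~\ref{alg:pred-stable-blackwell-treeplex}, run over the decision set $\Cgeq$ against losses $\bm{f}(\bm{x}_{t},\bm{\ell}_{t})$ with predictions $\bm{m}_{t}$ and comparator $\hat{\bm{R}}$. Here I would invoke the standard predictive-\pomd{} guarantee cited for Proposition~\ref{prop:ptbp-adversarial-regret}, of the form $\frac{1}{2\eta}\|\hat{\bm{R}} - \hat{\bm{R}}_{1}\|_{2}^{2} + \eta \sum_{t} \|\bm{f}(\bm{x}_{t},\bm{\ell}_{t}) - \bm{m}_{t}\|_{2}^{2}$. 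Since $\hat{\bm{R}}_{1} = \bm{0}$ and $\hat{\bm{R}} = \hat{\bm{x}} \in \T$, the leading term is bounded by $\Omega^{2}/(2\eta)$ via $\|\hat{\bm{x}}\|_{2} \leq \Omega$; substituting the prescribed $\eta$ balances the two terms and yields the stated right-hand side.

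I expect the only genuinely delicate conceptual point to be feasibility of the comparator embedding, i.e.\ checking $\hat{\bm{R}} \in \Cgeq$: this is where the hypothesis $R_{0} \leq 1$ enters, and it is the one place the argument departs substantively from the proof of Proposition~\ref{prop:blackwell-approachability adversarial-regret} (which relied on $\hat{\bm{R}} \in \cone(\T)$, automatic for a cone but not for the truncated region $\Cgeq$). Everything else---the telescoping identity, the persistence of hyperplane forcing, and the final stepsize optimization---is routine; the main care needed is to match the constants in the cited \pomd{} bound so that the balancing step reproduces the precise coefficient in the statement rather than an extraneous $\sqrt{2}$ factor.
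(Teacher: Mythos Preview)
Your proposal is correct and follows exactly the approach the paper sketches: adapt the identity from the proof of Proposition~\ref{prop:blackwell-approachability adversarial-regret} to reduce treeplex regret to \pomd{} regret on $\Cgeq$, then invoke the predictive-\pomd{} bound and optimize the stepsize. You are also right to flag the need for $R_{0}\leq 1$ so that the embedded comparator $\hat{\bm{R}}=\hat{\bm{x}}$ lies in $\Cgeq$ (a hypothesis the paper leaves implicit), and to note that the precise constants hinge on which form of the \pomd{} bound is cited.
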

Because in \smoothptbp{} $\bm{R}_{t}$ and $\hat{\bm{R}}_{t}$ always belong to $\Cgeq$, we are able to recover a RVU bound for \smoothptbp{} and show fast convergence rates.
We now define $\| \bm{M}\|$ as the maximum $\ell_{2}$-norm of any column and any row of $\bm{M}$.
\begin{theorem}\label{th:regret-guarantees-smooth-sptbp-2playerEFG}
    Let $(\bm{x}_{t})_{t \geq 1}$ and $(\bm{y}_{t})_{t \geq 1}$ be the sequence of strategies computed by both players employing \smoothptbp{} in the self-play framework, with previous losses as predictions:
    $\bm{m}^{x}_{t}  =\bm{f}(\bm{x}_{t-1},\bm{My}_{t-1}),\bm{m}^{y}_{t} = \bm{f}(\bm{y}_{t-1},-\bm{M}\tr\bm{x}_{t-1}).$
 Let $\eta = \frac{R_{0}}{\sqrt{8d\hat{\Omega}^3}\| \bm{M}\| }$
and  $\left(\bar{\bm{x}}_{T},\bar{\bm{y}}_{T}\right) = \frac{1}{T} \sum_{t=1}^{T}\left(\bm{x}_{t},\bm{y}_{t}\right)$. Then
\[ \max_{\bm{y} \in \cY} \; \langle \bar{\bm{x}}_{T},\bm{My}\rangle - \min_{\bm{x} \in \cX} \; \langle \bm{x},\bm{M}\bar{\bm{y}}_{T}\rangle \leq \frac{2 \hat{\Omega}^2}{\eta}\frac{1}{T}.\]
\end{theorem}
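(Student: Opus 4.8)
The plan is to combine the folk theorem (Proposition~\ref{prop:self-play framework}) with a Regret-Bounded-by-Variation-in-Utilities (RVU) analysis of the two \smoothptbp{} regret minimizers run in self-play. By Proposition~\ref{prop:self-play framework} the duality gap on the left-hand side equals $(\Reg^T_1 + \Reg^T_2)/T$, so it suffices to prove the \emph{constant} bound $\Reg^T_1 + \Reg^T_2 \le 2\hat{\Omega}^2/\eta$, independent of $T$; dividing by $T$ then gives the claim. The whole point is therefore to show that the aggregate regret does not grow with $T$, which is exactly where the predictions and the stable region $\Cgeq$ come into play.

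First I would reinterpret each player's treeplex regret as a predictive-OMD regret over $\Cgeq$. Exactly as in the proof of Proposition~\ref{prop:smooth-ptbp-adversarial-regret}, for a comparator $\hat{\bm{x}}\in\T$ (which satisfies $\langle\hat{\bm{x}},\bm{a}\rangle = 1 \ge R_0$, hence $\hat{\bm{x}}\in\Cgeq$ provided $R_0\le 1$) the hyperplane-forcing identity \eqref{eq:halfspace-forcing} rewrites $\sum_t\langle \bm{x}_t - \hat{\bm{x}},\bm{\ell}_t\rangle$ as $\sum_t\langle\bm{R}_t - \hat{\bm{x}},\bm{f}(\bm{x}_t,\bm{\ell}_t)\rangle$, which is the regret of predictive OMD run on $\Cgeq$ against losses $\bm{f}(\bm{x}_t,\bm{\ell}_t)$ with predictions $\bm{m}_t$. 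I would then invoke the standard predictive-OMD RVU bound (see \citet{rakhlin2013online,syrgkanis2015fast}), which for each player gives
\[ \Reg^T \le \frac{\hat{\Omega}^2}{\eta} + \eta\sum_{t}\|\bm{f}(\bm{x}_t,\bm{\ell}_t) - \bm{m}_t\|_2^2 - \frac{1}{4\eta}\sum_t\|\bm{R}_t - \bm{R}_{t-1}\|_2^2, \]
using $\hat{\bm{R}}_1 = \bm{0}$ and $\|\hat{\bm{x}}\|_2\le\Omega\le\hat{\Omega}$ to bound the initial-distance term.

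Next I would control the prediction error. Writing $\bm{\ell}_t = \bm{M}\bm{y}_t$ and $\bm{m}_t = \bm{f}(\bm{x}_{t-1},\bm{M}\bm{y}_{t-1})$ for the first player, I expand $\bm{f}(\bm{x}_t,\bm{M}\bm{y}_t) - \bm{f}(\bm{x}_{t-1},\bm{M}\bm{y}_{t-1})$ and bound it, using $\|\bm{M}\bm{v}\|_2\le\sqrt{d}\,\|\bm{M}\|\,\|\bm{v}\|_2$, $\|\bm{a}\|_2=1$, and $\|\bm{x}\|_2,\|\bm{y}\|_2\le\hat{\Omega}$, by $L\,(\|\bm{x}_t-\bm{x}_{t-1}\|_2 + \|\bm{y}_t-\bm{y}_{t-1}\|_2)$ for an explicit $L = O(\sqrt{d}\,\|\bm{M}\|\,\hat{\Omega}^{1/2})$. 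Crucially, since every iterate $\bm{R}_t$ and $\hat{\bm{R}}_t$ lies in $\Cgeq$ by construction, the Lipschitz bound of the decision map over the stable region (the consequence of Proposition~\ref{prop:lipschitness-chd} with constant $\Omega/R_0$) gives $\|\bm{x}_t-\bm{x}_{t-1}\|_2 \le (\Omega/R_0)\|\bm{R}^x_t-\bm{R}^x_{t-1}\|_2$, and likewise for player two. Substituting, squaring, and summing the two players' bounds, the aggregate positive variation is dominated by $\eta\,c\,\frac{d\hat{\Omega}^3\|\bm{M}\|^2}{R_0^2}\sum_t(\|\bm{R}^x_t-\bm{R}^x_{t-1}\|_2^2 + \|\bm{R}^y_t-\bm{R}^y_{t-1}\|_2^2)$ for an absolute constant $c$.

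Finally I would calibrate the stepsize so that this positive variation is absorbed by the $-\tfrac{1}{4\eta}\sum_t\|\bm{R}_t-\bm{R}_{t-1}\|_2^2$ terms coming from the two RVU bounds: the value $\eta = R_0/(\sqrt{8d\hat{\Omega}^3}\,\|\bm{M}\|)$ is exactly the one for which $\eta\,c\,\frac{d\hat{\Omega}^3\|\bm{M}\|^2}{R_0^2} \le \frac{1}{4\eta}$, so that all $T$-dependent terms cancel and only $\hat{\Omega}^2/\eta$ per player survives, yielding $\Reg^T_1 + \Reg^T_2 \le 2\hat{\Omega}^2/\eta$. The main obstacle is the conversion carried out in the third paragraph: turning the loss variation $\|\bm{f}(\bm{x}_t,\bm{\ell}_t) - \bm{m}_t\|_2$ into the \emph{consecutive-iterate} differences $\|\bm{R}_t-\bm{R}_{t-1}\|_2$ that appear with a negative sign in the RVU bound. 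This step is valid only because the stable-region projections keep every $\bm{R}_t,\hat{\bm{R}}_t$ in $\Cgeq$, where the decision map is Lipschitz; for plain \ptbp{} the map can blow up near $\langle\bm{R},\bm{a}\rangle=0$ and this cancellation fails, which is precisely why the $O(1/T)$ rate requires \smoothptbp{}.
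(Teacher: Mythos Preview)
Your proposal is correct and follows essentially the same route as the paper: reduce each player's treeplex regret to a predictive-OMD regret on $\Cgeq$ via the hyperplane-forcing identity, invoke the RVU bound for predictive OMD, use Proposition~\ref{prop:lipschitness-chd} on the stable region to convert strategy variation $\|\bm{x}_t-\bm{x}_{t-1}\|_2$ into $\|\bm{R}_t-\bm{R}_{t-1}\|_2$, and then choose $\eta$ so that the positive variation term is dominated by the negative one. The only cosmetic difference is that the paper packages the final cancellation by rescaling utilities and invoking Theorem~4 of \citet{syrgkanis2015fast} (after an $\ell_1$--$\ell_2$ norm change), whereas you carry out the same cancellation by hand directly in $\ell_2$; your stated Lipschitz constant $L=O(\sqrt{d}\,\|\bm{M}\|\,\hat{\Omega}^{1/2})$ is a bit loose in the power of $\hat{\Omega}$, but this only affects constant bookkeeping, not the argument.
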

We present the proof of Theorem \ref{th:regret-guarantees-smooth-sptbp-2playerEFG} in Appendix \ref{app:proof convergence smooth pbtb NE}. 
To the best of our knowledge, \smoothptbp{} is the first algorithm based on Blackwell approachability achieving the state-of-the-art $O(1/T)$ convergence rate for solving \eqref{eq:EFG-BSSP}.
Other methods achieving this rate include Mirror Prox and Excessive Gap Technique for EFGs~\citep{kroer2018faster} and predictive OMD directly on the treeplex~\citep{farina2019optimistic}. 
We can compare the $O(1/T)$ convergence rate of \smoothptbp{} with the $O(1/\sqrt{T})$ convergence rate of Predictive \cfrp~\citep{farina2021faster}, which combines CFR with Predictive \rmp{} as a regret minimizer (see Appendix \ref{app:cfr}). 
Despite its predictive nature, Predictive \cfrp{} only achieves a $O(1/\sqrt{T})$ convergence rate because of the CFR decomposition, which enables running regret minimizers {\em independently and locally} at each infoset, and it is not clear how to combine, at the treeplex level, the regret bounds obtained at each infoset. Since \smoothptbp{} operates directly over the entire treeplex, we can combine the RVU bound for each player to obtain our state-of-the-art convergence guarantees.
\paragraph{An adaptive algorithm: \adatbp.}
For completeness, we now instantiate Algorithm \ref{alg:blackwell-approachability based regmin} with a regret minimizer that can learn potentially heterogeneous stepsizes across information sets in an adaptive fashion.
We introduce \adatbp (Algorithm \ref{alg:adagrad-stable-blackwell-treeplex}), a variant of Algorithm \ref{alg:blackwell-approachability based regmin} that learns a different stepsize for each coordinate of $\bm{R}_{t} \in \cone(\T)$ in an adaptive fashion inspired by the \adagrad{} algorithm~\cite{duchi2011adaptive}, which adapts the scale of the stepsizes for each dimension to the magnitude of the observed gradients for this dimension. Given matrix $\bm{A}$ and a vector $\bm{y} \in \R^{n+1}$, let $\diag(\bm{y})$ be the diagonal matrix with $\bm{y}$ on its diagonal and $\Pi_{\C}^{A}(\bm{y}) = \arg \min_{\bm{x} \in \C} \langle \bm{x} - \bm{y},\bm{A}(\bm{x} - \bm{y})\rangle.$
We first show that \adatbp{} is a regret minimizer.
\begin{proposition}\label{prop:adatbp-adversarial-regret}
Let $\bm{x}_{1},...,\bm{x}_{T}$ be computed by \adatbp{}. For $\eta = \frac{\max_{t \leq T} \left(\| \bm{R}_{t}\|_{2}+\Omega\right)^{2}} {\sqrt{2}}$, we have
\[\max_{\hat{\bm{x}} \in \T} \sum_{t=1}^{T} \langle \bm{x}_{t}- \hat{\bm{x}},\bm{\ell}_{t}\rangle \leq 2 \eta  \sum_{i=1}^{d} \sqrt{\sum_{t=1}^{T} \left(\bm{f}_{t}(\bm{x}_{t},\bm{\ell}_t)\right)_{i}^{2}}.\]
\end{proposition}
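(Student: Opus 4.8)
The plan is to reproduce the reduction of Proposition \ref{prop:blackwell-approachability adversarial-regret} and then feed the resulting cone-regret into the standard \adagrad{} guarantee. First I would observe that \adatbp{} is an instantiation of Algorithm \ref{alg:blackwell-approachability based regmin} with \adagrad{} playing the role of the inner regret minimizer over $\C = \cone(\T)$, so that feasibility $\bm{x}_t \in \T$ and the hyperplane forcing identity \eqref{eq:halfspace-forcing} hold verbatim: both depend only on $\bm{R}_t \in \cone(\T)$ and $\bm{x}_t = \bm{R}_t/\langle\bm{R}_t,\bm{a}\rangle$, not on the specific inner minimizer. Writing $\hat{\bm{R}} = \hat{\bm{x}}$ for a fixed comparator $\hat{\bm{x}} \in \T$, the exact chain of equalities from the proof of Proposition \ref{prop:blackwell-approachability adversarial-regret} --- the definition \eqref{eq:definition-f-x-ell} of $\bm{f}$, the identity $\langle \hat{\bm{x}}, \bm{a}\rangle = 1$, and \eqref{eq:halfspace-forcing} --- gives
\[ \sum_{t=1}^{T} \langle \bm{x}_{t} - \hat{\bm{x}}, \bm{\ell}_{t}\rangle = \sum_{t=1}^{T} \langle \bm{R}_{t} - \hat{\bm{R}}, \bm{f}(\bm{x}_{t},\bm{\ell}_{t})\rangle, \]
so it suffices to bound the right-hand side, which is precisely the regret of the inner \adagrad{} run over $\C$ against the loss sequence $\bm{f}(\bm{x}_{1},\bm{\ell}_{1}), \dots, \bm{f}(\bm{x}_{T},\bm{\ell}_{T})$ with comparator $\hat{\bm{R}} \in \C$.

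Next I would invoke the diagonal \adagrad{} regret bound~\citep{duchi2011adaptive}, instantiated over $\C$ with step size $\eta$ and the projection $\Pi_{\C}^{\bm{A}_t}$. Combining the Bregman-divergence term with the standard gradient-sum lemma $\sum_{t} \| \bm{f}(\bm{x}_t,\bm{\ell}_t)\|_{\bm{A}_t^{-1}}^2 \le 2 \sum_{i=1}^{d} \sqrt{\sum_{t=1}^{T} (\bm{f}(\bm{x}_t,\bm{\ell}_t))_i^2}$ yields
\[ \sum_{t=1}^{T} \langle \bm{R}_{t} - \hat{\bm{R}}, \bm{f}(\bm{x}_{t},\bm{\ell}_{t})\rangle \le \left( \frac{D^2}{2\eta} + \eta \right) \sum_{i=1}^{d} \sqrt{\sum_{t=1}^{T} (\bm{f}(\bm{x}_t,\bm{\ell}_t))_i^2}, \]
where $D := \max_{t \le T} \| \hat{\bm{R}} - \bm{R}_{t}\|_2$. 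The key quantitative step is to control $D$: since $\hat{\bm{R}} = \hat{\bm{x}} \in \T$ we have $\|\hat{\bm{R}}\|_2 \le \Omega$, and each $\bm{R}_t \in \cone(\T)$, so the triangle inequality gives $D \le \bar{D} := \max_{t \le T}(\|\bm{R}_t\|_2 + \Omega)$, which is exactly the quantity appearing in the choice of $\eta$.

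Finally I would plug in $\eta = \bar{D}^2/\sqrt{2}$. Then $D^2/(2\eta) \le \bar{D}^2/(2\eta) = 1/\sqrt{2}$, while $\bar{D} \ge \Omega \ge 1$ --- the latter because $\xnot = 1$ forces $\|\bm{x}\|_2 \ge 1$ for every $\bm{x} \in \T$ --- gives $1/\sqrt{2} \le \bar{D}^2/\sqrt{2} = \eta$. Hence $D^2/(2\eta) + \eta \le 2\eta$, and chaining this with the two displays above produces the claimed bound $\max_{\hat{\bm{x}} \in \T} \sum_{t} \langle \bm{x}_t - \hat{\bm{x}}, \bm{\ell}_t\rangle \le 2\eta \sum_{i} \sqrt{\sum_{t} (\bm{f}(\bm{x}_t,\bm{\ell}_t))_i^2}$.

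I expect the main obstacle to be the unboundedness of the decision set $\C = \cone(\T)$: the classical \adagrad{} analysis assumes a bounded domain, so the diameter term $D$ is a priori infinite. The resolution is to bound $D$ \emph{a posteriori} by the realized iterate norms $\|\bm{R}_t\|_2$ and to let $\eta$ depend on this run-dependent quantity $\bar{D}$; this is legitimate for a fixed finite horizon $T$ but must be stated carefully, since $\eta$ is then not known in advance of the run. A secondary point to verify is that the precise form of the \adagrad{} update in Algorithm \ref{alg:adagrad-stable-blackwell-treeplex} (diagonal/composite, with the $\bm{A}_t$-weighted projection $\Pi_{\C}^{\bm{A}_t}$ and accumulated-gradient scaling $\bm{A}_t = \diag(\sqrt{\sum_{\tau \le t} \bm{f}(\bm{x}_\tau,\bm{\ell}_\tau)^2})$) matches the variant for which the factor-of-two gradient-sum lemma holds, so that the two terms combine into the coefficient $D^2/(2\eta) + \eta$ as above.
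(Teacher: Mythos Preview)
Your proposal is correct and follows exactly the route the paper indicates (the paper omits the proof, stating only that it follows from the \adagrad{} regret guarantee, Theorem~5 in \citet{duchi2011adaptive}, together with Proposition~\ref{prop:blackwell-approachability adversarial-regret}). You have correctly filled in the two details the paper leaves implicit: the a posteriori diameter bound $D \le \max_{t \le T}(\|\bm{R}_t\|_2 + \Omega)$ needed because $\C$ is unbounded, and the observation $\Omega \ge 1$ (from $x_\varnothing = 1$) that makes the stated choice of $\eta$ yield the coefficient $2\eta$.
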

We omit the proof of Proposition \ref{prop:adatbp-adversarial-regret} for conciseness; it follows from the regret guarantees of \adagrad{} (Theorem 5 in \citet{duchi2011adaptive}) and Proposition \ref{prop:blackwell-approachability adversarial-regret}.  We conclude that combining \adatbp{} with the self-play framework ensures a $O(1/\sqrt{T})$ convergence rate. \adatbp{} learn different stepsizes across the treeplex, which may be useful if the losses across different dimensions differ in magnitudes. 
On the other hand, the stepsizes of \adatbp{} decrease over time, which could be too conservative.
\begin{algorithm}
      \caption{\adatbp}
      \label{alg:adagrad-stable-blackwell-treeplex}
      \begin{algorithmic}[1]
     \State {\bf Input}: $\eta,\delta >0$
      \State {\bf Initialization}: $\bm{R}_{1} = \bm{s}_{0} = \bm{g}_{0} = \bm{0} \in \R^{n+1}$
      \For{$t = 1, \dots, T$}
      \State $\bm{x}_{t} = \bm{R}_{t}/\langle\bm{R}_{t},\bm{a}\rangle$
      \State Observe the loss vector $\bm{\ell}_{t} \in \R^{n+1}$
      \State $\bm{s}_{t} = \bm{s}_{t-1} + \bm{f}(\bm{x}_{t},\bm{\ell}_{t}) \odot \bm{f}(\bm{x}_{t},\bm{\ell}_{t}) $
      \State $\bm{H}_{t} = \diag\left(\sqrt{\bm{s}_{t}}+\epsilon \bm{1}\right)$
      \State $\bm{R}_{t+1} \in \Pi_{\C}^{\bm{H}_{t}} \left(\bm{R}_{t} - \eta \bm{H}^{-1}_{t} \bm{f}(\bm{x}_{t},\bm{\ell}_{t}) \right)$
      \EndFor
\end{algorithmic}
\end{algorithm}
\section{Numerical Experiments}
We conduct two sets of numerical experiments to investigate the performance of our algorithms for solving several two-player zero-sum EFG benchmark games: Kuhn poker, Leduc poker, Liar's Dice, Goofspiel and Battleship. Additional experimental detail is given in Appendix \ref{app:simu}. 

\begin{figure*}[htb]
\centering
\includegraphics[width=\textwidth]{"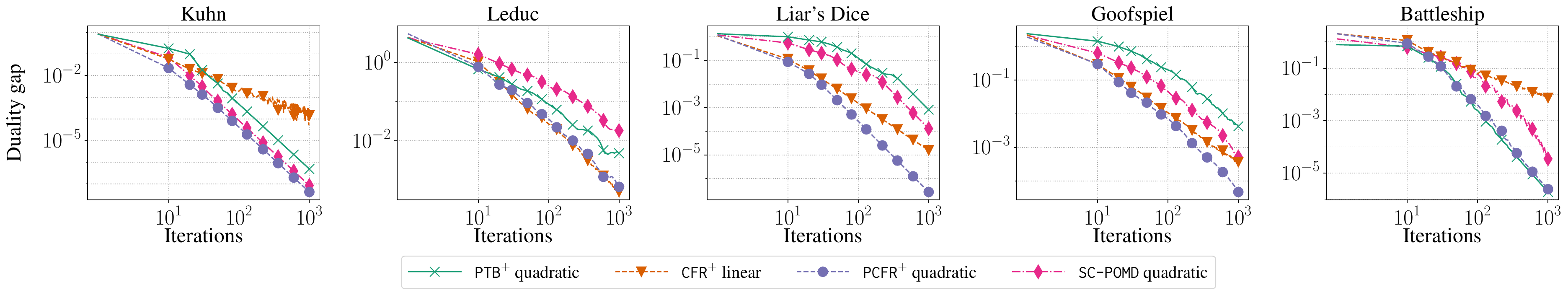"}
\caption{Convergence to Nash equilibrium as a function of number of iterations for \ptbp{} with quadratic averaging, \cfrp{} with linear averaging, \pcfrp{} with quadratic averaging, and \scpomd{} with quadratic averaging. Every algorithm is using alternation.}
\label{fig:all_algo_avg_comparison}
\end{figure*}

\begin{figure*}[htb]
\centering
\includegraphics[width=\textwidth]{"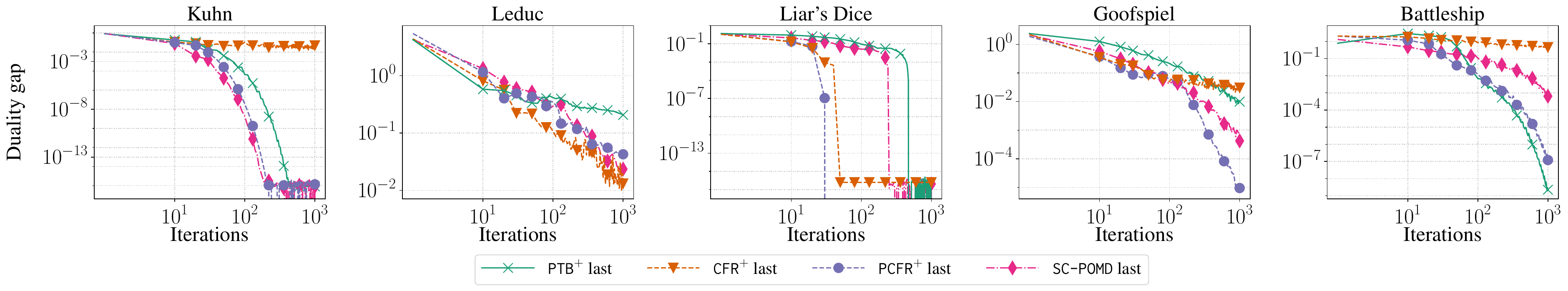"}
\caption{Convergence to Nash equilibrium for the last iterates of \ptbp{}, \cfrp{}, \pcfrp{}, and \scpomd{}. Every algorithm is using alternation.}
\label{fig:all_algo_last_comparison}
\end{figure*}

We first determine the best way to instantiate our framework. We compare \ptbp{}, \smoothptbp{} and \adatbp{} in the self-play framework with alternation (see Appendix \ref{app:self-play framework}) and uniform, linear or quadratic weights for the iterates. \ptbp{} and \smoothptbp{} use the previous losses as predictions. We also study {\em Treeplex Blackwell}$^+$ (\tbp{}), corresponding to \ptbp{} without predictions ($\bm{m}_{t}=\bm{0}$), and \adamtbp, inspired by \adam~\cite{kingma2014adam}.  For conciseness, we present our plots in Appendix \ref{fig:comparison tb algo} (Figure \ref{fig:tb_algo_comparison}) and state our conclusion here. We find that, for every game, \ptbp{} performs the best or is among the best algorithms. This underlines the advantage of {\em treeplex stepsize invariance} over algorithms that require tuning a stepsize (\smoothptbp) 
and adaptive algorithms (\adatbp), which may perform poorly due to the stepsize decreasing at a rate of $O(1/\sqrt{T})$.
\adamtbp{} does not even converge in some games. 

We then compare the best of our algorithms (\ptbp) with some of the best existing methods for solving EFGs: \cfrp~\citep{tammelin2015solving}, predictive \cfrp{} (\pcfrp, \citet{farina2021faster}, see Appendix \ref{app:cfr}), and a version of optimistic online mirror descent with a single call to the orthogonal projection at every iteration (\scpomd,~\citet{joulani2017modular}) achieving a $O(1/T)$ convergence rate; there are a variety of FOMs with a $O(1/T)$ rate, \scpomd{} was observed to perform well in \citet{chakrabarti2023block}. 
We determine the best empirical setup for each algorithm in Appendix \ref{app:individual performance}. 
In Figure \ref{fig:all_algo_avg_comparison}, we compare the performance of the (weighted) average iterates. 
We find that \pcfrp{} outperforms both \cfrp{} and the theoretically-faster \scpomd{}, as expected from past work. 
We had hoped to see at least comparable performance between \ptbp{} and \pcfrp{}, since they are both based on Blackwell-approachability regret minimizers derived from applying \pomd{} on the conic hull of their respective decision sets (simplexes at each infoset for \pcfrp, treeplexes of each player for \ptbp). However, in some games \pcfrp{} performs much better than \ptbp.
Given the similarity between \ptbp{} and \pcfrp{}, our results suggest that the use of the CFR decomposition is part of the key to the performance of \pcfrp{}. In particular, the CFR decomposition allows \pcfrp{} to have stepsize invariance {\em at an infoset level}, as opposed to stepsize invariance at the treeplex level in \ptbp{}.
Because of the structure of treeplexes, the numerical values of variables associated with infosets appearing late in the game, i.e., deeper in the treeplexes, may be much smaller than the numerical values of the variables appearing closer to the root. For this reason, allowing for different stepsizes at each infosets (like \cfrp{} and \pcfrp{} do) appears to be more efficient than using a single stepsize across all the infosets, even when the iterates do not depend on the value of this single stepsize (like in \ptbp) and when this stepsize is fine-tuned (like in \scpomd). 
Of course one could try to run \scpomd{} with different stepsizes at each infoset and attempt to tune each of these stepsizes, but this is impossible in practical instances where the number of actions is large, e.g., $4.9 \times 10^4$ actions in {\em Liar's Dice} and $5.3 \times 10^6$ actions in {\em Goofspiel}. \cfrp{} and \pcfrp{} bypass this issue with their infoset stepsize invariance, which enables both each infoset to have its own stepsize (via the CFR decomposition) {\em and} not needing to choose these stepsizes (via using \rmp{} and \prmp{} as local regret minimizers,  which are stepsize invariant).

We also investigate the performance of the {\em last iterates} in Figure \ref{fig:all_algo_last_comparison}. No algorithm appears to be the best across all game instances. \cfrp{} may not converge to a Nash equilibrium (e.g., on Kuhn), as has been observed before~\citep{lee2021last}. \pcfrp{} exhibits linear convergence in some games (Kuhn, Liar's Dice, Goofspiel) but not others (Leduc). The same is true for \ptbp. Further investigations about last-iterate convergence are left as an important open question.
\section{Conclusion}
We propose the first Blackwell approachability-based regret minimizer over the treeplex (Algorithm \ref{alg:blackwell-approachability based regmin}) and we give several instantiations of our framework with different properties, including treeplex stepsize invariance (\ptbp), adaptive stepsizes (\adatbp) and achieving state-of-the-art $O(1/T)$ convergence guarantees on EFGs with a Blackwell approachability-based algorithm for the first time (\smoothptbp). 
Since \cfrp{} and \pcfrp{} are stepsize invariant and have strong empirical performance, we were expecting \ptbp{} to have comparable performance. However, our experiments show that \ptbp{} often converges slower than \cfrp{} and \pcfrp{}, so this stepsize invariance is not the only driver behind the practical performance of \cfrp{} and \pcfrp. We view this negative result as an important contribution of our paper, since it rules out a previously plausible explanation for the practical performance of \cfrp.  Instead, we propose that one piece of the puzzle behind the \cfrp{} and \pcfrp{} performances is their {\em infoset} stepsize invariance, a consequence of combining the CFR framework with Blackwell approachability-based regret minimizers (\rmp{} and \prmp, themselves stepsize invariant over simplexes).
Future works include better understanding the last-iterate performance of algorithms based on Blackwell approachability as well as the role of alternation.


\newpage

\paragraph{Acknowledgments}
Darshan Chakrabarti was supported by the National Science Foundation Graduate Research Fellowship Program under award number DGE-2036197.
Julien Grand-Cl{\'e}ment was supported by Hi! Paris and by a grant of the French National Research Agency (ANR), “Investissements d’Avenir” (LabEx Ecodec/ANR-11-LABX-0047).
Christian Kroer was supported by the Office of Naval Research awards N00014-22-1-2530 and N00014-23-1-2374, and the National Science Foundation awards IIS-2147361 and IIS-2238960.


\bibliographystyle{plainnat} 
\bibliography{ref}


\appendix
\section{Self-Play Framework}\label{app:self-play framework}
The (vanilla) self-play framework for two-player zero-sum EFGs is presented in Algorithm \ref{alg:self-play framework}.
\begin{algorithm}
      \caption{self-play framework}
      \label{alg:self-play framework}
      \begin{algorithmic}[1]
     \State {\bf Input}: $\Regmin_{\cX}$ a regret minimizer over $\cX$, $\Regmin_{\cY}$ a regret minimizer over $\cY$      
      \For{$t = 1, \dots, T$}
      \State $\bm{x}_{t} = \Regmin_{\cX}\left(\cdot \right)$
      \State $\bm{y}_{t} = \Regmin_{\cY}\left(\cdot \right)$
      \State The first player observes the loss vector $\bm{Ay}_{t} \in \R^{n_{1}+1}$
      \State The second player observes the loss vector $-\bm{A}\tr\bm{x}_{t} \in \R^{n_{2}+1}$
      \EndFor
\end{algorithmic}
\end{algorithm}
The self-play framework can be combined with {\em alternation}, a simple variant that is known to lead to significant empirical speedups, for instance, when \cfrp{} and predictive \cfrp{} are used as regret minimizers~\citep{tammelin2015solving,farina2021faster,burch2019revisiting}. When using alternation, at iteration $t$ the second player is provided with the current strategy of the first player $\bm{x}_t$ before choosing its own strategy. We describe the self-play framework with alternation in Algorithm \ref{alg:self-play framework alternation}.
\begin{algorithm}
      \caption{self-play framework with alternation}
      \label{alg:self-play framework alternation}
      \begin{algorithmic}[1]
     \State {\bf Input}: $\Regmin_{\cX}$ a regret minimizer over $\cX$, $\Regmin_{\cY}$ a regret minimizer over $\cY$      
      \For{$t = 1, \dots, T$}
      \State $\bm{x}_{t} = \Regmin_{\cX}\left(\cdot \right)$
      \State The second player observes the loss vector $-\bm{A}\tr\bm{x}_{t} \in \R^{n_{2}+1}$
      \State $\bm{y}_{t} = \Regmin_{\cY}\left(\cdot \right)$
      \State The first player observes the loss vector $\bm{Ay}_{t} \in \R^{n_{1}+1}$
      \EndFor
\end{algorithmic}
\end{algorithm}
\section{Counterfactual Regret Minimization (CFR), \cfrp{} and Predictive \cfrp}\label{app:cfr}
Counterfactual Regret Minimization (CFR, \cite{zinkevich2007regret}) is a framework for regret minimization over the treeplex. CFR runs a regret minimizer $\Regmin_{j}$ locally at each infoset $j \in \J$ of the treeplex. Note that here $\Regmin_{j}$ is a regret minimizer over the {\em simplex} $\Delta^{n_{j}}$ with $n_{j}=|\A_{j}|$, i.e., over the set of probability distributions over $\A_{j}$, the set of actions available at infoset $j \in \J$. Let $\bm{x}^{j}_{t} \in \Delta^{n_{j}}$ be the decision chosen by $\Regmin_{j}$ at iteration $t$ in CFR and let $\bm{\ell}_{t} \in \R^{n+1}$ be the loss across the entire treeplex. The {\em local loss} $\bm{\ell}_{t}^{j} \in \R^{n_{j}}$ that CFR passes to $\Regmin_{j}$ is
\[\ell_{t,a}^{j} := \ell_{t,(j,a)} + \sum_{j' \in \C_{ja}} V_{t}^{j'}, \forall \; a \in \A_{j}, \forall \; j \in \J\]
where $V^{j}_{t}$ is the {\em value function} for infoset $j$ at iteration $t$, defined inductively:
\[ V^{j}_{t} := \sum_{a \in \A_{j}} x^{j}_{t,a}\ell_{t,(j,a)} + \sum_{j' \in \C_{ja}} V_{t}^{j'}.\]
The regret over the entire treeplex $\T$ can be related to the regrets accumulated at each infoset via the following {\em laminar regret decomposition}~\cite{farina2019online}:
\[\Reg^{T}:=\max_{\hat{\bm{x}} \in \T} \sum_{t=1}^{T} \langle \bm{x}_{t}- \hat{\bm{x}},\bm{\ell}_{t}\rangle \leq \max_{\hat{\bm{x}} \in \T} \sum_{j \in \J} \hat{x}_{p_{j}} \Reg^{T}_{j}\left(\hat{\bm{x}}^{j}\right)\]
with $\Reg^{T}_{j}\left(\hat{\bm{x}}^{j}\right):= \sum_{t=1}^{T} \langle \bm{x}_{t}^{j}- \hat{\bm{x}}^{j},\bm{\ell}_{t}^{j}\rangle$ the regret incured by $\Regmin^{j}$ for the sequence of losses $\bm{\ell}_{1}^{j},...,\bm{\ell}_{T}^{j}$ against the comparator $\hat{\bm{x}}^{j} \in \Delta^{n_{j}}$. 
Combining CFR with regret minimizers at each information set ensures $\Reg^{T}=O\left(\sqrt{T}\right)$.

\cfrp~\citep{tammelin2015solving} corresponds to instantiating the self-play framework with alternation (Algorithm \ref{alg:self-play framework alternation}) and Regret Matching$^+$ (\rmp{} as presented in \eqref{eq:rmp}) as a regret minimizer at each information set. Additionally, \cfrp{} uses {\em linear} averaging, i.e., it returns $\bar{\bm{x}}_T$ such that $\bar{\bm{x}}_T = \frac{1}{\sum_{t=1}^{T}\omega_t} \sum_{t=1}^{T} \omega_t \bm{x}_t$ with $\omega_t=t$. We also consider uniform weights ($\omega_t=1$) and quadratic weights ($\omega=t^2$) in our simulations (Figure \ref{fig:cfrp}). \cfrp{} guarantees a $O(1/\sqrt{T})$ convergence rate to a Nash equilibrium.

Predictive \cfrp (\pcfrp, \cite{farina2021faster}) corresponds to instantiating the self-play framework with alternation (Algorithm \ref{alg:self-play framework alternation}) and Predictive Regret Matching$^+$ (\prmp) as a regret minimizer at each information set. Given a simplex $\Delta^d$, \prmp{} is a regret minimizer that returns a sequence of decisions $\bm{z}_{1},...,\bm{z}_{T} \in \Delta^d$ as follows:
\begin{equation}\label{eq:prmp}\tag{\prmp}
    \begin{aligned}
    \hat{\bm{R}}_{t} & = \Pi_{\R_{+}^{d}}\left(\bm{R}_{t} - \eta \bm{g}(\bm{z}_{t-1},\bm{\ell}_{t-1})\right) \\
    \bm{z}_{t} & = \hat{\bm{R}}_{t} / \| \hat{\bm{R}}_{t}\|_{1},\\ \bm{R}_{t+1} & = \Pi_{\R_{+}^{d}}\left(\bm{R}_{t} - \eta \bm{g}(\bm{z}_{t},\bm{\ell}_{t})\right)
\end{aligned}
\end{equation}

where the function $g$ is defined in \eqref{eq:definition g}.
Similar to \cfrp, for \pcfrp{} we investigate different weighting schemes in our numerical experiments (Figure \ref{fig:pcfrp}). It is not known if the self-play framework with alternation, combined with \pcfrp, has convergence guarantees, but \pcfrp{} has been observed to achieve state-of-the-art practical performance in many EFG instances~\cite{farina2021faster}.

\section{Proof of Proposition \ref{prop:stepsize-independent}}\label{app:proof-prop:stepsize-independent}
\begin{proof}

The proof of Proposition \ref{prop:stepsize-independent} is based on the following lemma.
\begin{lemma}\label{lem:homogeneous-conic-projection}
Let $\C \subset \R^{n}$ be a convex cone and let $\bm{u} \in \R^{n},\eta>0$. Then
\[ \Pi_{\C}(\eta \bm{u}) = \eta \Pi_{\C}(\bm{u}).\]
\end{lemma}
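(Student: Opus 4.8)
The plan is to prove the positive homogeneity of the Euclidean projection onto a convex cone directly from the variational (optimality) characterization of the projection, exploiting the defining property of a cone that it is closed under nonnegative scaling. Since $\C$ is a convex cone, for any $\lambda \geq 0$ and any $\bm{w} \in \C$ we have $\lambda \bm{w} \in \C$; this is the structural fact I will lean on throughout.

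First I would recall that for a nonempty closed convex set, the projection $\Pi_{\C}(\bm{v}) = \arg\min_{\bm{w} \in \C} \|\bm{w} - \bm{v}\|_2$ is the unique minimizer of the strictly convex objective $\bm{w} \mapsto \|\bm{w} - \bm{v}\|_2^2$. I would then fix $\bm{u} \in \R^n$ and $\eta > 0$, write $\bm{p} := \Pi_{\C}(\bm{u})$, and show that $\eta \bm{p}$ is exactly the minimizer of $\|\bm{w} - \eta\bm{u}\|_2^2$ over $\bm{w} \in \C$, which by uniqueness forces $\Pi_{\C}(\eta\bm{u}) = \eta \bm{p} = \eta \Pi_{\C}(\bm{u})$. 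The cleanest route is a substitution argument: for any $\bm{w} \in \C$, the point $\bm{w}/\eta$ also lies in $\C$ (since $\C$ is a cone and $1/\eta \geq 0$), and conversely every element of $\C$ arises this way, so the map $\bm{w} \mapsto \bm{w}/\eta$ is a bijection of $\C$ onto itself. Using the identity $\|\bm{w} - \eta\bm{u}\|_2^2 = \eta^2 \|\tfrac{\bm{w}}{\eta} - \bm{u}\|_2^2$, minimizing the left side over $\bm{w} \in \C$ is equivalent (after the change of variable $\bm{w}' = \bm{w}/\eta$) to minimizing $\eta^2 \|\bm{w}' - \bm{u}\|_2^2$ over $\bm{w}' \in \C$. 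Since $\eta^2 > 0$ is a constant, the minimizer is $\bm{w}' = \bm{p}$, hence the minimizer in the original variable is $\bm{w} = \eta \bm{p}$.

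Concretely, I would argue as follows. For any $\bm{w} \in \C$,
\begin{align*}
\|\bm{w} - \eta\bm{u}\|_2^2 = \eta^2 \left\| \frac{\bm{w}}{\eta} - \bm{u} \right\|_2^2 \geq \eta^2 \left\| \Pi_{\C}(\bm{u}) - \bm{u} \right\|_2^2 = \| \eta\Pi_{\C}(\bm{u}) - \eta\bm{u}\|_2^2,
\end{align*}
where the inequality uses $\bm{w}/\eta \in \C$ together with the optimality of $\Pi_{\C}(\bm{u})$ as the minimizer over $\C$. Since $\eta\Pi_{\C}(\bm{u}) \in \C$ (again because $\C$ is a cone and $\eta > 0$), this shows $\eta\Pi_{\C}(\bm{u})$ attains the minimum of $\|\bm{w} - \eta\bm{u}\|_2^2$ over $\bm{w} \in \C$, and uniqueness of the projection then yields $\Pi_{\C}(\eta\bm{u}) = \eta\Pi_{\C}(\bm{u})$.

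I do not anticipate a serious obstacle here, as the result is essentially a one-line consequence of the scaling invariance of the projection objective combined with the cone property. The only point requiring mild care is ensuring that the change of variable is a genuine self-bijection of $\C$ (which needs $\eta > 0$ rather than merely $\eta \geq 0$, so that $1/\eta$ is well defined) and invoking uniqueness of the minimizer of the strictly convex projection problem to conclude equality rather than mere attainment of the same optimal value. If one preferred, the same conclusion could be reached through the obtuse-angle optimality condition $\langle \bm{u} - \Pi_{\C}(\bm{u}), \bm{w} - \Pi_{\C}(\bm{u})\rangle \leq 0$ for all $\bm{w} \in \C$, but the substitution argument above is the most transparent.
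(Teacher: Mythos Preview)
Your proof is correct and follows essentially the same approach as the paper: both use the change of variable $\bm{w} \mapsto \bm{w}/\eta$, exploit that this is a self-bijection of the cone $\C$, and conclude that the minimizer of $\|\bm{w}-\eta\bm{u}\|_2$ over $\C$ is $\eta\Pi_{\C}(\bm{u})$. Your version is slightly more explicit about invoking uniqueness of the projection to pass from equal optimal values to equal minimizers, which is a welcome bit of care.
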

\begin{proof}[Proof of Lemma \ref{lem:homogeneous-conic-projection}]
We have, by definition,
\[ \Pi_{\C}(\eta \bm{u})  = \arg \min_{\bm{R} \in \C} \| \bm{R} - \eta \bm{u} \|_{2}.\] 
Now we also have
\[ \min_{\bm{R} \in \C} \| \bm{R} - \eta \bm{u} \|_{2}  =\eta \cdot \min_{\bm{R} \in \C}  \| \frac{1}{\eta}\bm{R} - \bm{u} \|_{2} = \eta \cdot \min_{\bm{R} \in \C}  \| \bm{R} - \bm{u} \|_{2} \]
where the last equality follows from $\C$ being a cone. This shows that $\arg \min_{\bm{R} \in \C} \| \bm{R} - \eta \bm{u} \|_{2}$ is attained at $\eta \Pi_{\C}(\bm{u})$, i.e., that $\Pi_{\C}(\eta \bm{u}) = \eta \Pi_{\C}(\bm{u})$.
\end{proof}
We are now ready to prove Proposition \ref{prop:stepsize-independent}. For the sake of conciseness we prove this with $\bm{m}_{1} = ... = \bm{m}_{T}= \bm{0}$; the proof for \ptbp{} with predictions is identical. In this case, $\bm{R}_{t}=\hat{\bm{R}}_{t}, \forall \; t \geq 1$.
Consider the sequence of strategies $\tilde{\bm{x}}_{1},...,\tilde{\bm{x}}_{T}$ and $\tilde{\bm{R}}_{1},...,\tilde{\bm{R}}_{T}$ generated by \ptbp{} with a step size of $1$. We also consider the sequence of strategies $\bm{x}_{1},...,\bm{x}_{T}$ and $\bm{R}_{1},...,\bm{R}_{T}$ generated with a step size $\eta >0$. We claim that \[\tilde{\bm{x}}_{t} = \bm{x}_{t},\bm{R}_{t} = \eta \tilde{\bm{R}}_{t}, \; \forall \; t \in \{1,...,T\}.\]
We prove this by induction. Both sequences of iterates are initialized with $\bm{R}_{1} = \tilde{\bm{R}}_{1} = \bm{0}$ so that $\tilde{\bm{x}}_{1} = \bm{x}_{1}$. Therefore, both sequences face the same loss $\bm{\ell}_{1}$ at $t=1$, and we have
\[ \bm{R}_{2} = \Pi_{\C}(-\eta\bm{f}(\bm{x}_{1},\bm{\ell}_{1})) = \eta \pi_{\C}(-\bm{f}(\bm{x}_{1},\bm{\ell}_{1}))) = \eta \tilde{\bm{R}}_{2}.\]
Let us now consider an iteration $t \geq 1$ and suppose that $\tilde{\bm{x}}_{t} = \bm{x}_{t}, \bm{R}_{t} = \eta \tilde{\bm{R}}_{t}$. Since $\tilde{\bm{x}}_{t} = \bm{x}_{t}$ then both algorithms will face the next loss vector $\bm{\ell}_{t}$. Then
\begin{align*}
\bm{R}_{t+1} & = \pi_{\C}(\bm{R}_{t} - \eta \bm{f}(\bm{x}_{t},\bm{\ell}_{t})) \\
& = \pi_{\C}(\eta \tilde{\bm{R}}_{t} -\eta \bm{f}(\bm{x}_{t},\bm{\ell}_{t}))  \\
& = \eta \pi_{\C}(\tilde{\bm{R}}_{t} - \bm{f}(\bm{x}_{t},\bm{\ell}_{t}))  \\
& = \eta \tilde{\bm{R}}_{t+1}
\end{align*}
which in turns implies that $\bm{x}_{t+1} = \tilde{\bm{x}}_{t+1}$. We conclude that $\bm{x}_{t} = \tilde{\bm{x}}_{t}, \forall \; t = 1,...,T$. 
\end{proof}

\section{Comparison Between \rmp{} and \ptbp}\label{app:comparison rmp ptbp}
Assume that the original decision set of each player is a simplex $\Delta^{d}$ and that there are no predictions: $\bm{m}_{t} = \bm{0},\forall \; t \geq 1$.

\paragraph{\ptbp{} over the simplex.} For \ptbp, the empty sequence variable $\xnot$ is introduced and appended to the decision $\Delta^d$. The resulting treepplex can be written $\T = \{1\} \times \Delta^{d}$, the set $\C$ becomes $\C:= \cone(\T) = \cone(\{1\} \times \Delta^{d})$ and $\bm{a} = (1,\bm{0}) \in \R^{d+1}_{+}$ with $1$ on the first component related to $\xnot$ and $0$ everywhere else. 
In this case, \ptbp{} without prediction is exactly the {\em Conic Blackwell Algorithm}$^+$  (\cbap, \citet{grand2023solving}). Crucially, to run \ptbp{} we need to compute the orthogonal projection onto $\cone(\T)=\cone(\{1\} \times \Delta)$ at every iteration, which can not be computed in closed-form, but it can be computed in $O(n\log(n))$ arithmetic operations (see Appendix G.1 in \cite{grand2023solving}).

\paragraph{Regret Matching$^+$.} \rmp{} operates directly over the simplex $\Delta^d$ without the introduction of the empty sequence $\xnot$, in contrast to \ptbp{} which operates over $\{1\}\times \Delta^d$. Importantly, in \rmp{}, at every iteration the orthogonal projection onto $\R^{d}_{+}$ can be computed in closed form by simply thresholding to zero the negative components (and leaving unchanged the positive components): $\Pi_{\R_{+}^{d}}\left(\bm{z}\right) = \left(\max\{z_{i},0\}\right)_{i \in [d]}$ for any $\bm{z} \in \R^{d}$.

\paragraph{Empirical comparisons.}
The numerical experiments in \cite{grand2023solving} show that \cbap{} may be slightly faster than \rmp{} for some matrix games in terms of speed of convergence as a function of the number of iterations, but it can be slower in running times because of the orthogonal projections onto $\cone(\{1\} \times \Delta)$ at each iteration (Figures 2,3,4 in \cite{grand2023solving}). When $\T$ is a treeplex that is not the simplex, introducing $\xnot$ also changes the resulting algorithm but not the complexity of the orthogonal projection onto $\cone(\T)$, since there is no closed-form anymore, even without $\xnot$. As a convention, in this paper, we will always use $\xnot$ in our description of treeplexes and of our algorithms since it is convenient from a writing and implementation standpoint. 

Overall, we notice that in the case of the simplex introducing the empty sequence variable $\xnot$ radically alters the complexity per iterations and the resulting algorithm, a fact that has not been noticed in previous work.

\section{Proof of Proposition \ref{prop:lipschitness-chd}}\label{app:proof-prop-lip-chd}
\begin{proof}[Proof of Proposition \ref{prop:lipschitness-chd}]
\begin{enumerate}
\item 
Let $ \hat{\bm{R}}_{2} = \bm{R}_{2}/\| \bm{R}_{2}\|_{2}$ be the unit vector pointing in the same direction as $\bm{R}_{2}$ and let $\bm{h} := \left(\langle\bm{R}_{1},\hat{\bm{R}_{2}}\rangle\right)\hat{\bm{R}_{2}}$ the orthogonal projection of $\bm{R}_{1}$ onto $\{ \alpha \hat{\bm{R}}_{2} \; | \; \alpha \in \R\}$. We thus have $\| \bm{R}_{1} - \bm{R}_{2} \|_{2} \geq \| \bm{R}_{1} - \bm{h}\|_{2}$.
\item Let $\bm{p} = \frac{\langle \bm{R}_{1},\bm{a}\rangle}{\langle \bm{R}_{2},\bm{a}\rangle}\hat{\bm{R}}_{2}$. Since $\bm{p}$ and $\bm{R}_{2}$ are colinear, we have
\[ \left\| \frac{\bm{R}_{1}}{\langle \bm{R}_{1},\bm{a}\rangle} - \frac{\bm{R}_{2}}{\langle \bm{R}_{2},\bm{a}\rangle} \right\|_{2} = \left\| \frac{\bm{R}_{1}}{\langle \bm{R}_{1},\bm{a}\rangle} - \frac{\bm{p}}{\langle\bm{p},\bm{a}\rangle} \right\|_{2}.\]
Additionally, by construction, $\langle\bm{p},\bm{a}\rangle = \langle \bm{R}_{1},\bm{a}\rangle$, so that we obtain
\[ \left\| \frac{\bm{R}_{1}}{\langle \bm{R}_{1},\bm{a}\rangle} - \frac{\bm{R}_{2}}{\langle \bm{R}_{2},\bm{a}\rangle} \right\|_{2} = \left\| \frac{\bm{R}_{1}}{\langle \bm{R}_{1},\bm{a}\rangle} - \frac{\bm{p}}{\langle \bm{R}_{1},\bm{a}\rangle} \right\|_{2} = \frac{1}{\langle \bm{R}_{1},\bm{a}\rangle} \| \bm{R}_{1}-\bm{p}\|_{2}.\]
Note that $\langle \bm{R}_{1},\bm{a}\rangle \geq 0$ since $\bm{R}_{1} \in \cone(\T)$ and $\T \subset \{ \bm{x} \in \R^{n+1} \; | \; \langle\bm{x},\bm{a}\rangle=1\}$.
Assume that we can compute $D >0$ such that $\frac{\| \bm{R}_{1} - \bm{p} \|_{2}}{\| \bm{R}_{1} - \bm{h}\|_{2}} \leq D$. Then we have
\[\left\| \frac{\bm{R}_{1}}{\langle \bm{R}_{1},\bm{a}\rangle} - \frac{\bm{R}_{2}}{\langle \bm{R}_{2},\bm{a}\rangle} \right\|_{2} \leq \frac{D}{\langle \bm{R}_{1},\bm{a}\rangle} \| \bm{R}_{1} - \bm{h} \|_{2} \leq \frac{D}{\langle \bm{R}_{1},\bm{a}\rangle} \| \bm{R}_{1} - \bm{R_{2}} \|_{2}.\]
\item The rest of this proof focuses on showing that $\frac{\| \bm{R}_{1} - \bm{p} \|_{2}}{\| \bm{R}_{1} - \bm{h}\|_{2}} \leq \Omega$ with $\Omega = \max \{ \| \bm{x} \|_{2} | \; \bm{x} \in \T\}$.
Note that $\langle \bm{R}_{1} - \bm{p},\bm{a} \rangle = 0$. Therefore, $\frac{1}{\|  \bm{R}_{1} - \bm{p} \|_{2}} \left(\bm{R}_{1} - \bm{p} \right)$ and $\frac{1}{\|\bm{a}\|_{2}}\bm{a}$ can be completed to form an orthonormal basis of $\R^{n}$. In this basis, we have
\[ \| \hat{\bm{R}_{2}} \|_{2}^{2} \geq \frac{\left( \langle \bm{R}_{1}-\bm{p},\hat{\bm{R}_{2}} \rangle \right)^2}{\| \bm{R}_{1}-\bm{p}\|_{2}^{2}} + \frac{\left(\langle \bm{a},\hat{\bm{R}_{2}} \rangle \right)^2}{\|\bm{a}\|_{2}^{2}}.\]
Note that by construction we have $\| \hat{\bm{R}_{2}} \|_{2}^{2}=1$. Additionally, $\bm{R}_{2} \in \cone(\T)$ so that there exists $\alpha>0$ and $\bm{y} \in \T$ such that $\bm{R}_{2} = \alpha \bm{x}$. By construction of $\hat{\bm{R}}_{2}$, we have
$\hat{\bm{R}}_{2} = \frac{\alpha \bm{x} }{\| \alpha \bm{x} \|_{2}} = \frac{\bm{x}}{\| \bm{x}\|_{2}}$ and $\langle\bm{x},\bm{a}\rangle=1$. This shows that
\[ \frac{\left(\langle \bm{a},\hat{\bm{R}_{2}} \rangle \right)^2}{\|\bm{a}\|_{2}^{2}} = \frac{\left(\langle\bm{a},\bm{x}\rangle\right)^2}{\| \bm{a}\|_{2}^{2}\| \bm{x}\|_{2}^{2}} = \frac{1}{\| \bm{a}\|_{2}^{2}\| \bm{x}\|_{2}^{2}} \geq \frac{1}{ \Omega\| \bm{a}\|_{2}^{2}} \]
with $\Omega = \max \{ \| \bm{x} \|_{2} | \; \bm{x} \in \T\}$. Recall that we have chosen $\bm{a} = (1,\bm{0})$ so that $\| \bm{a} \|_{2}=1$. Overall, we have obtained
\[ 1-\frac{1}{\Omega^2} \geq \frac{\left( \langle \bm{R}_{1}-\bm{p},\hat{\bm{R}_{2}} \rangle \right)^2}{\| \bm{R}_{1}-\bm{p}\|_{2}^{2}}.\]
From the definition of the vectors $\bm{p},\bm{h}$ and $\hat{\bm{R}_{2}}$, we have 
\[\frac{\left( \langle \bm{R}_{1}-\bm{p},\hat{\bm{R}_{2}} \rangle \right)^2}{\| \bm{R}_{1}-\bm{p}\|_{2}^{2}} = \frac{\| \bm{p}-\bm{h}\|_{2}^{2}}{\| \bm{R_{1}}-\bm{p}\|_{2}^{2}}.\]
Hence, we have
\[ \| \bm{p}-\bm{h}\|_{2}^{2} \leq \left(1-\frac{1}{\Omega^2} \right) \| \bm{R}_{1} - \bm{p} \|_{2}^{2}.\]
This shows that $\| \bm{R}_{1} - \bm{h} \|_{2}^{2} \geq \frac{1}{\Omega^2}\| \bm{x}-\bm{p}\|_{2}^{2}$.
\item We conclude that 
\[ \left\| \frac{\bm{R}_{1}}{\langle \bm{R}_{1},\bm{a}\rangle} - \frac{\bm{R}_{2}}{\langle \bm{R}_{2},\bm{a}\rangle} \right\|_{2} \leq \frac{ \Omega}{\max \{ \langle \bm{R}_{1},\bm{a}\rangle,\langle \bm{R}_{2},\bm{a}\rangle\}} \| \bm{R}_{1} - \bm{R}_{2}\|_{2}.\]
\end{enumerate}
\end{proof}
\section{Proof of \cref{prop:practical implementation ptbp}}\label{app:practical implementation smooth ptbp}
In this section we show how to efficiently compute the orthogonal projection onto the cone $\C:= \cone(\T)$.
We start by reviewing the existing methods for computing the orthogonal projection onto the treeplex $\T$. This is an important cornerstone of our analysis, since the treeplex $\T$ and the cone $\C$ share an analogous structure:
\begin{align*}
    \T & = \{ \bm{x} \in \R^{n+1}_{+} \; | \; \xnot = 1, \sum_{a \in \A_{j}} x_{ja} = x_{p_{j}}, \forall \; j \in \J\} \\
     \C & = \{ \bm{x} \in \R^{n+1}_{+} \; | \; \sum_{a \in \A_{j}} x_{ja} = x_{p_{j}}, \forall \; j \in \J\}.
\end{align*}
\citet{gilpin2012first} were the first to show an algorithm for computing Euclidean projection onto the treeplex. They do this by defining a value function for the projection of a given point $\bm{y}$ onto the closed and convex \emph{scaled} set $t \Z$, letting it be half the squared distance between $\bm{y}$ and $t\Z$, for $t \in \R_{>0}$: 
\[v_{\Z}(t, \bm{y}) := \frac{1}{2} \min_{\bm{z} \in t\Z} \| \bm{z} - \bm{y}\|^2_{2}. \]
\citet{gilpin2012first} show how to recursively compute $\lambda_{\Z}(t, \bm{y})$, the derivative of this function with respect to $t$,  for a given treeplex, since treeplexes can be constructed recursively using two operations: branching and Cartesian product. In the first case, given $k$ treeplexes $\Z_1, \dots, \Z_k$, then $\Z = \{\bm{x}, \bm{x}[1]\bm{z}_1, \dots, \bm{x}[k]\bm{z}_k : x \in \Delta_k, \bm{z}_i \in \Z_i \forall i \in [k]\}$ is also a treeplex. In the second case, given $k$ treeplexes $\Z_1, \dots, \Z_k$, then $\Z = \Z_1 \times \cdots \times \Z_k$ is also a treeplex. In fact, letting the empty set be a treeplex as a base case, all treeplexes can be constructed in this way. 

However, \citet{gilpin2012first} did not state the total complexity of computing the projection, instead only stating the complexity of computing $\lambda_{\Z}(t, \bm{y})$ given the corresponding $\lambda_{\Z_i}(t, \bm{y}_i)$ functions for the treeplexes $\Z_i$ that are used to construct $\Z$ using $i \in [k]$. They state that this complexity is $O(n \log n)$, where $n$ is the number of sequences in $\Z$. Their analysis involves showing that the function $t \mapsto \lambda_{\Z}(t, \bm{y})$ is piecewise linear.

\citet{farina2022near} also consider this problem, generalizing the problem to weighted projection on the scaled treeplex, by adding an additional positive parameter $\bm{w} \in \R^n_{>0}$:
\[v_{\Z}(t, \bm{y}, \bm{w}) := \frac{1}{2} \min_{\bm{z} \in t\Z} \sum_{i = 1}^{n} \left(\bm{z}[i] - \frac{\bm{y}[i]}{\bm{w}[i]}\right)^2. \]
They do a similar analysis to \citet{gilpin2012first}, by showing how to compute the derivative $\lambda_{\Z}(t, \bm{y}, \bm{w})$ of $v_{\Z}(t, \bm{y}, \bm{w})$ with respect to $t$  recursively. They show that $t \mapsto \lambda_{\Z}(t, \bm{y}, \bm{w})$ are strictly-monotonically-increasing piecewise-linear (SMPL) functions. We will follow the analysis in \citet{farina2022near}, letting $\bm{w} = \mathbf{1}$.  

We first define a standard representation of a SMPL function.
\begin{definition}[\citep{farina2022near}]
Given a SMPL function $f$, a standard representation is an expression of the form 
\[ f(x) = \zeta + \alpha_0 x + \sum_{s=1}^{S} \alpha_s \max \{0, x - \beta_s\}\] valid for all $x \in \operatorname{dom}(f)$, $S \in \N \cup \{0\}$, and $\beta_1 < \cdots < \beta_S$. The size of the standard representation is defined to be $S$.
\end{definition}

Next, we prove the following lemma, showing the computational complexity of computing the derivative of the value function for a given treeplex.

\begin{lemma}
For a given treeplex $\Z$ with depth $d$, $n$ sequences, $l$ leaf sequences, and $m$ infosets, and $\bm{y} \in \R^n, \bm{w} \in \R^n_{>0}$, a standard representation of $\lambda_{\Z}(t, \bm{y}, \bm{w})$ can be computed in $O\big(d n \log (l+m) \big)$ time.
\label{lemma:lambda_fn_lemma}
\end{lemma}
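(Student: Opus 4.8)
The plan is to prove the lemma by structural induction following the recursive construction of the treeplex $\Z$ via the \emph{branching} and \emph{Cartesian product} operations described above, mirroring the approach of \citet{farina2022near}. The invariant maintained throughout the induction is that each function $\lambda_{\Z}(\cdot, \bm{y}, \bm{w})$ is SMPL and comes equipped with a standard representation, and that the operations used to combine child representations (summation, addition of an affine term, and functional inversion) can each be performed in time nearly linear in the number of breakpoints. The base case is the trivial treeplex (the empty set, or a single leaf sequence), whose $\lambda$ function admits a standard representation of constant size.

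Next I would derive the update rule for $\lambda_{\Z}$ under each operation. For the Cartesian product $\Z = \Z_1 \times \cdots \times \Z_k$, the scaled constraint $\bm{z} \in t\Z$ decouples across blocks, so $v_{\Z}(t, \cdot, \cdot) = \sum_{i} v_{\Z_i}(t, \cdot, \cdot)$ and hence $\lambda_{\Z}(t, \cdot, \cdot) = \sum_i \lambda_{\Z_i}(t, \cdot, \cdot)$; computing the standard representation of this sum amounts to merging the sorted breakpoint lists of the children. For the branching operation the scaling $t$ acts on the top simplex, $\bm{x} \in t\Delta_k$, while each child block lives in $x[i]\Z_i$, giving a separable convex program $v_{\Z}(t, \cdot, \cdot) = \min_{\bm{x} \in t\Delta_k} \big[ \tfrac{1}{2}\sum_i w_i(x[i] - c_i)^2 + \sum_i v_{\Z_i}(x[i], \cdot, \cdot) \big]$. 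Here the derivative $\lambda_{\Z}(t, \cdot, \cdot)$ equals the optimal Lagrange multiplier $\mu$ of the constraint $\sum_i x[i] = t$: writing the stationarity condition as $g_i(x[i]) := w_i(x[i] - c_i) + \lambda_{\Z_i}(x[i], \cdot, \cdot) = \mu$ expresses $x[i] = g_i^{-1}(\mu)$, and $\mu(t)$ is recovered by solving $\sum_i g_i^{-1}(\mu) = t$. I would then verify that each $g_i$ is SMPL (an affine term plus the inductively-SMPL $\lambda_{\Z_i}$), that SMPL functions are closed under inversion with a standard representation computable in time linear in the number of breakpoints, and that the sum followed by a final inversion yields the standard representation of $\lambda_{\Z} = \big(\sum_i g_i^{-1}\big)^{-1}$.

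For the complexity accounting, I would bound both the number of breakpoints in each $\lambda$ function and the cost of the merge and inversion steps. The key observation is that each branching node contributes only a bounded number of new breakpoints (on the order of the number of actions at the associated infoset), while each Cartesian product merely concatenates, so the total number of breakpoints in the subtree rooted at any node is $O(l+m)$ within that subtree and $O(n)$ overall. Inversions are linear in the breakpoint count, and each $k$-way merge of sorted breakpoint lists costs a $\log(l+m)$ factor via a heap-based merge. Since a breakpoint created at depth $d$ is touched by at most $O(d)$ merges as it propagates to the root, the total work is $O\big(d n \log(l+m)\big)$.

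The main obstacle I anticipate is the branching step: establishing closure of SMPL functions under the compose–invert–sum–invert pipeline while keeping tight control of the breakpoint count, and verifying the Lagrange-multiplier characterization $\lambda_{\Z}(t) = \mu(t)$ together with the monotonicity and continuity needed for the standard representation to be well-defined at every stage. The Cartesian product case is comparatively routine (a plain merge of sorted lists), so the technical weight of the argument—and the point at which the $\log(l+m)$ factor must be carefully justified—lies entirely in the simplex/branching recursion.
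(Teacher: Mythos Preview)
Your proposal is correct and follows essentially the same structural-induction approach as the paper. The only difference is that the paper black-boxes the per-operation cost by citing Lemma~14 of \citet{farina2022near} (which states that combining $k$ child representations of total size $n$ into the parent representation costs $O(n\log k)$ for both branching and Cartesian product), whereas you sketch the underlying SMPL-closure and Lagrange-multiplier arguments yourself; your breakpoint-propagation accounting is an equivalent reformulation of the paper's direct recursive bound $O(n\log k)+\sum_i O\big((d-1)n_i\log(l+m)\big)\le O\big(dn\log(l+m)\big)$.
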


\begin{proof}
We will proceed by structural induction over treeplexes, following the analysis done by \citet{farina2022near}. The base case is trivially true, because the empty set has no sequences or depth. 

For the inductive case, we will assume that it requires $O\big((d-1) n \log (l+m)\big)$ time to compute the respective Euclidean projections onto the subtreeplexes that we use to inductively construct our current treeplex, where $d-1$ is the depth of a given subtreeplex, $n$ is the number of sequences in the subtreeplex, and $m$ is the total number of sequences among both players and chance corresponding to the game from which the treeplex originates.

We will use two results shown in Lemma 14 of \citet{farina2022near}:
\begin{lemma}[Recursive complexity of Euclidean projection for branching operation \citep{farina2022near}]
Consider a treeplex $\Z$ that can be written as the result of a branching operation on $k$ treeplexes $\Z_1, \dots, \Z_k$: 
\[\Z = \{\bm{x}, \bm{x}[1]\bm{z}_1, \dots, \bm{x}[k]\bm{z}_k : x \in \Delta_k, \bm{z}_i \in \Z_i \forall i \in [k]\}.\] Let $\Z$ have $n$ sequences and let $\bm{y}, \bm{w} \in \R^n$, and let $\bm{y}[i]$ and $\bm{w}[i]$ denote the corresponding respective components of $\bm{y}$ and $\bm{w}$ for the treeplex $\Z_i$.

Then, given standard representations of $\lambda_{\Z_i}(t, \bm{y}_i, \bm{w}_i)$ of size $n_i$ for all $i \in [k]$, where $n_i$ is the number of sequences that $\Z_i$ has, a standard representation of  $\lambda_{\Z}(t, \bm{y}, \bm{w})$ of size $n$ can be computed in $O(n \log k)$ time.

Furthermore, given a value of $t$, the argument $\bm{x}$ which leads to the realization of the optimal value of the value function, can be computed in time $O(n)$.
\label{lemma:projection_branching}
\end{lemma}

\begin{lemma}[Recursive complexity of Euclidean projection for Cartesian product \citep{farina2022near}]
Consider a treeplex $\Z$ that can be written as a Cartesian product of  $k$ treeplexes $\Z_1, \dots, \Z_k$: 
\[\Z = \Z_1 \times \cdots \times \Z_k.\] Let $\Z$ have $n$ sequences and let $\bm{y}, \bm{w} \in \R^n$, and let $\bm{y}[i]$ and $\bm{w}[i]$ denote the corresponding respective components of $\bm{y}$ and $\bm{w}$ for the treeplex $\Z_i$.

Then, given standard representations of $\lambda_{\Z_i}(t, \bm{y}_i, \bm{w}_i)$ of size $n_i$ for all $i \in [k]$, where $n_i$ is the number of sequences that $\Z_i$ has, a standard representation of  $\lambda_{\Z}(t, \bm{y}, \bm{w})$ of size $n$ can be computed in $O(n \log k)$ time.
\label{lemma:projection_cartesian}
\end{lemma}

First, we consider the case that the last operation used to construct our treeplex was the branching operation. Let the root of of the treeplex be called $j$. Define $\Z_i$ as the treeplex that is underneath action $a_i \in \A_j$.
Let $n_i$ denote the number of sequences in $\Z_i$, $m_i$ denote the number of infosets in $\Z_i$, $l_i$ denote the number of leaf sequences in $\Z_i$, and $d-1$ be the maximum depth of any of these subtreeplexes.

Given a standard representation of $\lambda_{\Z_i}(t, \bm{y}_i, \bm{w}_i)$ of size $n_i$ for all $i \in [|\A_j|]$, by \cref{lemma:projection_branching}, it takes $O(n \log |\A_j|)$ time to compute a standard representation of $\lambda_{\Z}(t, \bm{y}, \bm{w})$ of size $n$. By induction, it takes $O\big((d-1) n_i \log m_i\big)$ to compute $\lambda_{\Z_i}(t, \bm{y}_i, \bm{w}_i)$ for treeplex $\Z_i$. Thus the total computation required to compute $\lambda_{\Z}(t, \bm{y}, \bm{w})$ is 

\begin{align*}
O(n \log |\A_j|) + \sum_{i \in [|\A_j|]} O \big((d-1) n_i \log (l_i + m_i) \big) 
&= O(n \log |\A_j|) + \sum_{i \in [|\A_j|]} O \big((d-1) n_i \log (l + m) \big) \\
&= O(n \log |\A_j|) +  O\big((d-1) \sum_{i \in [|\A_j|]} n_i \log (l + m) \big) \\
&= O(n \log |\A_j|) + O\big((d-1) n \log (l + m) \big) \\
&= O\big(n \log (l+m)\big) + O\big((d-1) n \log (l + m) \big)\\
&= O\big(d n \log (l+m)\big)
\end{align*}
since we have necessarily that $l_i \leq l$ and $m_i \leq m$ for all $i \in [|\A_j|]$, $ \sum_{i \in [|\A_j|]} n_i \leq n$, and $|\A_j| \leq l + m$.

Second, we consider the case the last operation to construct our treeplex was a Cartesian product.
Let $\Z = \Z_1 \times \cdots \times \Z_k$, and again define $n_i$ as the number of sequences in $\Z_i$, $m_i$ as the number of infosets in $\Z_i$, $l_i$ as the number of leaf sequences in $\Z_i$, and $d-1$ as the maximum depth of any of these subtreeplexes.

Given a standard representation of $\lambda_{\Z_i}(t, \bm{y}_i, \bm{w}_i)$ of size $n_i$ for all $i \in [k]$, by \cref{lemma:projection_cartesian} it takes $O(n \log k)$ to compute a standard representation of  $\lambda_{\Z}(t, \bm{y}, \bm{w})$ of size $n$. By induction, it takes $O\big((d-1) n_i \log (l_i+ m_i) \big)$ to compute $\lambda_{\Z_i}(t, \bm{y}_i, \bm{w}_i)$ for treeplex $\Z_i$. Thus the total computation required to compute $\lambda_{\Z}(t, \bm{y}, \bm{w})$ is 

\begin{align*}
O(n \log k) + \sum_{i \in [k]} O \big((d-1) n_{i} \log  (l_i + m_{i})\big)
&= O(n \log k) + \sum_{i \in [k]} O \big((d-1) n_i \log (l+m) \big) \\
&= O(n \log k) +  O\big((d-1) \sum_{i \in [k]} n_{i} \log (l+m)\big) \\
&= O(n \log k) + O\big((d-1) n \log (l+m)\big) \\
&= O(n \log m) + O\big((d-1) n \log (l+m)\big)\\
&= O(d n \log (l+m))
\end{align*}

since we have necessarily that $l_i \leq l$ and $m_i \leq m$ for all $i \in [k]$, and $k \leq m$. 
\end{proof}

Finally, we are ready to prove the main statement.
\begin{proof}[Proof of \Cref{prop:practical implementation ptbp}]
By \cref{lemma:lambda_fn_lemma}, we know that we can recursively compute a standard representation of $\lambda_{\Z}(t, \bm{y}, \bm{w})$ in $O\big(d n \log (l+m) \big)$ time. Assuming we use this construction, invoking \cref{lemma:projection_branching}, given an optimal value of $t$, we can compute the partial argument corresponding to the values of the sequences that originate at the root infosets, which allow the optimal value to be realized for the value function. Then, we can use optimal arguments for these sequences recursively at the subtreeplexes to continue computing the optimal argument at sequences lower on the treeplex. We can do this because in the process of computing the derivative of the value function of the entire treeplex, we have also computed the derivative of the value function for each of the subtreeplexes. Thus, once we have computed an optimval value of $t$ for the value function at the top level, we can do a top-down pass to compute the optimal values for all sequences that occur at any level in the treeplex. This is detailed in the analysis done in the proof of Lemma 14 in \citet{farina2022near}.

In order to pick the optimal value of $t$ for the value function, since $\lambda_{\Z}(\cdot, \bm{y}, \bm{w})$ is strictly increasing, we only have to consider two cases: $\lambda_{\Z}(0, \bm{y}, \bm{w}) < 0$ and $\lambda_{\Z}(0, \bm{y}, \bm{w}) \geq 0$. In the first case, the value function $\lambda_{\Z}(\cdot, \bm{y}, \bm{w})$ will be minimized when $\lambda_{\Z}(\cdot, \bm{y}, \bm{w})$ is equal to $0$, and this can be directly computed using the standard representation (it will be necessarily $0$ somewhere because it is strictly monotone). In the second case, since $\lambda_{\Z}(\cdot, \bm{y}, \bm{w})$ is strictly monotone and $\lambda_{\Z}(0, \bm{y}, \bm{w}) \geq 0$, we must have that $\lambda_{\Z}(\cdot, \bm{y}, \bm{w}) \geq 0$, which means that $v_{\Z}(\cdot, \bm{y}, \bm{w})$ is minimized at $t^* = 0$.
\end{proof}

\section{Practical Implementation of \smoothptbp{}}\label{app: practical implementation smooth ptbp}
We have the following lemma, which shows that the stable region $\C_{\geq}$ admits a relatively simple formulation.
\begin{lemma}\label{lem:stable-region-representation} 
The stable region \[\C_{\geq}:= \cone(\T) \cap \{ \bm{R} \in \R^{n+1} \; | \; \langle \bm{R},\bm{a}\rangle \geq R_{0}\}\] can be reformulated as follows:
\begin{align*}
\C_{\geq}  & =  \{ \alpha \bm{x} \; | \; \alpha \geq R_{0},\bm{x} \in \T\} \\
& = \{ \bm{x} \in \R^{n+1}_{+} \; | \xnot \geq R_{0}, \sum_{a \in \A_{j}} x_{ja} = x_{p_{j}}, \forall \; j \in \J\}.
\end{align*}
\end{lemma}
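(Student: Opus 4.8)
The plan is to prove the two claimed equalities by combining the polytope description of $\C = \cone(\T)$ recalled before \Cref{prop:practical implementation ptbp} with the conic decomposition $\cone(\T) = \{\alpha \bm{x} \; | \; \alpha \geq 0, \bm{x} \in \T\}$. The single observation driving everything is that, since $\bm{a} = (1,\bm{0})$, the linear functional $\bm{R} \mapsto \langle \bm{R}, \bm{a}\rangle$ is just the empty-sequence coordinate $R_{\varnothing}$; in particular $\langle \bm{x}, \bm{a}\rangle = \xnot = 1$ for every $\bm{x} \in \T$ by \eqref{eq:treeplex}.

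First I would establish the rightmost equality, which is essentially immediate. By the polytope formulation of the cone, $\cone(\T) = \{ \bm{x} \in \R^{n+1}_{+} \; | \; \sum_{a \in \A_{j}} x_{ja} = x_{p_{j}}, \forall \, j \in \J\}$. Intersecting this set with the half-space $\{ \bm{R} \; | \; \langle \bm{R}, \bm{a}\rangle \geq R_{0}\} = \{ \bm{R} \; | \; \xnot \geq R_{0}\}$ yields exactly $\{ \bm{x} \in \R^{n+1}_{+} \; | \; \xnot \geq R_{0}, \sum_{a \in \A_{j}} x_{ja} = x_{p_{j}}, \forall \, j \in \J\}$, the last set in the statement.

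Next I would prove the middle equality $\C_{\geq} = \{\alpha \bm{x} \; | \; \alpha \geq R_{0}, \bm{x} \in \T\}$ by double inclusion. For the inclusion $\subseteq$, take $\bm{R} \in \C_{\geq}$. Since $\bm{R} \in \cone(\T)$ we may write $\bm{R} = \alpha \bm{x}$ with $\alpha \geq 0$ and $\bm{x} \in \T$; then $\langle \bm{R}, \bm{a}\rangle = \alpha \langle \bm{x}, \bm{a}\rangle = \alpha$, so the defining constraint $\langle \bm{R}, \bm{a}\rangle \geq R_{0}$ forces $\alpha \geq R_{0}$, placing $\bm{R}$ in the target set. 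For $\supseteq$, given $\alpha \geq R_{0}$ and $\bm{x} \in \T$, the point $\alpha \bm{x}$ lies in $\cone(\T)$ (as $\alpha \geq 0$) and satisfies $\langle \alpha \bm{x}, \bm{a}\rangle = \alpha \geq R_{0}$, hence $\alpha \bm{x} \in \C_{\geq}$.

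This argument is routine, and I do not expect a genuine obstacle: all three descriptions collapse to the same linear constraints once the identity $\langle \cdot, \bm{a}\rangle = \xnot$ is invoked. The only point deserving minor care is the $\subseteq$ direction of the middle equality, where one must note that $\langle \bm{R}, \bm{a}\rangle \geq R_{0} > 0$ rules out the degenerate decomposition $\alpha = 0$ (i.e. $\bm{R} = \bm{0}$), so that in any conic representation $\bm{R} = \alpha \bm{x}$ the scalar $\alpha$ necessarily equals $\langle \bm{R}, \bm{a}\rangle$ and is therefore at least $R_{0}$.
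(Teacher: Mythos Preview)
Your proposal is correct and follows essentially the same approach as the paper: both arguments hinge on the identity $\langle \bm{R},\bm{a}\rangle = R_{\varnothing}$ together with $\langle \bm{x},\bm{a}\rangle = 1$ for $\bm{x}\in\T$, and both verify the conic description $\{\alpha\bm{x}:\alpha\geq R_{0},\bm{x}\in\T\}$ by the same double inclusion. The only cosmetic difference is that you derive the polytope description first, by directly intersecting the known polytope formulation of $\cone(\T)$ with the half-space $\{\xnot\geq R_{0}\}$, whereas the paper establishes the conic description first and then obtains the polytope description from it via a second double inclusion; your route is slightly more economical here but not materially different.
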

\begin{proof}
By definition, we have
\[\C_{\geq} = \{ \bm{R} \in \cone(\T) \; | \; \langle \bm{R},\bm{a}\rangle \geq R_{0}\}.\]
Note that for $\bm{R} \in \cone(\T), \bm{R} = \alpha \bm{x}$ with $\alpha \geq 0$ and $\langle\bm{x},\bm{a}\rangle=1$. Therefore, for $\bm{R} \in \C$ we have $\langle \bm{R},\bm{a}\rangle \geq R_{0} \iff \alpha \geq R_{0}$. This shows that we can write
\[ \C_{\geq} = \{ \alpha \bm{x} \; | \; \alpha \geq R_{0},\bm{x} \in \T\}.\]
Now let $\bm{x} \in \Cgeq$, i.e., let $  \bm{x} = \alpha \hat{\bm{x}}$ with $\alpha \geq R_{0}$ and $ \bm{x} \in \T$. Since $\hat{\bm{x}} \in \T$, we have $\xnot = 1$, so that $\hat{x}_{\varnothing} = \alpha \geq R_{0}$. Additionally, we have $\hat{\bm{x}} \geq 0, \sum_{a \in \A_{j}} \hat{x}_{ja} = \hat{x}_{p_{j}}, \forall \; j \in \J$. Multiplying by $\alpha \geq R_{0}$, we obtain that $\bm{x} \geq 0$ and $\sum_{a \in \A_{j}} x_{ja} = x_{p_{j}}, \forall \; j \in \J$. Overall we have shown
\[ \C_{\geq} \subseteq \{ \bm{x} \in \R^{n+1} \; | \xnot \geq R_{0}, \sum_{a \in \A_{j}} x_{ja} = x_{p_{j}}, \forall \; j \in \J, \bm{x} \geq \bm{0}\}.\]
We now consider $\bm{x} \in \{ \bm{x} \in \R^{n+1} \; | \xnot \geq R_{0}, \sum_{a \in \A_{j}} x_{ja} = x_{p_{j}}, \forall \; j \in \J, \bm{x} \geq \bm{0}\}$ with $\bm{x} \neq \bm{0}$. Then $\bm{x} = \alpha \frac{\bm{x}}{\alpha}$ with $\alpha = \xnot$, so that $\alpha \geq R_{0}$ and 
\[\sum_{a \in \A_{j}} x_{ja} = x_{p_{j}}, \forall \; j \in \J \iff \sum_{a \in \A_{j}} \frac{x_{ja}}{\alpha} = \frac{x_{p_{j}}}{\alpha}, \forall \; j \in \J.\]
Therefore
\[  \{ \bm{x} \in \R^{n+1} \; | \xnot \geq R_{0}, \sum_{a \in \A_{j}} x_{ja} = x_{p_{j}}, \forall \; j \in \J, \bm{x} \geq \bm{0}\} \subseteq \C_{\geq}.\]
This shows that we have
\[ \C_{\geq} = \{ \bm{x} \in \R^{n+1} \; | \xnot \geq R_{0}, \sum_{a \in \A_{j}} x_{ja} = x_{p_{j}}, \forall \; j \in \J, \bm{x} \geq \bm{0}\}.\]
\end{proof}

\begin{proposition}
For a treeplex $\T$ with depth $d$, number of sequences $n$, number of leaf sequences $l$, and number of infosets $m$, the complexity of computing the orthogonal projection of a point $y \in \R^{n+1}$ onto $\C_{\geq}  =  \{ \alpha \bm{x} \; | \; \alpha \geq R_{0},\bm{x} \in \T\}$ is $O\big(d n \log (l+m)\big)$.
\end{proposition}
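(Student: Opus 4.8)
The plan is to reduce the projection onto $\Cgeq$ to the same one-dimensional minimization over scalings of $\T$ that already drives the projection onto $\C=\cone(\T)$ in the proof of \Cref{prop:practical implementation ptbp}; the only change will be a nonzero lower bound on the scaling parameter. By \Cref{lem:stable-region-representation} we may write $\Cgeq = \{\alpha\bm{x} \mid \alpha \geq R_{0},\ \bm{x}\in\T\} = \bigcup_{t\geq R_0} t\T$, so, with the value function $v_{\T}(t,\bm{y}) = \tfrac{1}{2}\min_{\bm{z}\in t\T}\|\bm{z}-\bm{y}\|_2^2$ of \citet{gilpin2012first,farina2022near} (taking weights $\bm{w}=\mathbf{1}$),
\[
\Pi_{\Cgeq}(\bm{y}) \in \argmin_{t\geq R_0}\ \argmin_{\bm{z}\in t\T}\ \|\bm{z}-\bm{y}\|_2^2, \qquad \min_{\bm{z}\in\Cgeq}\|\bm{z}-\bm{y}\|_2^2 = \min_{t\geq R_0}\ 2\,v_{\T}(t,\bm{y}).
\]
The projection onto $\C=\cone(\T)$ solves exactly this problem over $t\geq 0$; passing to $\Cgeq$ merely replaces the constraint $t\geq 0$ by $t\geq R_0$.

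First I would reuse \Cref{lemma:lambda_fn_lemma} to compute a standard representation of the derivative $\lambda_{\T}(t,\bm{y})=\partial_t v_{\T}(t,\bm{y})$ in $O(dn\log(l+m))$ time. Since $\lambda_{\T}(\cdot,\bm{y})$ is strictly increasing (the SMPL property), $t\mapsto v_{\T}(t,\bm{y})$ is convex, and the constrained minimizer over $[R_0,+\infty)$ is determined by the sign of $\lambda_{\T}(R_0,\bm{y})$: if $\lambda_{\T}(R_0,\bm{y})\geq 0$ then $v_{\T}$ is nondecreasing on $[R_0,+\infty)$ and the optimizer is $t^\star=R_0$; otherwise the unconstrained root $t^\star$ of $\lambda_{\T}(\cdot,\bm{y})=0$ satisfies $t^\star>R_0$, is feasible, and is the minimizer. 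This root is located directly from the standard representation exactly as in the proof of \Cref{prop:practical implementation ptbp}, the sole modification being that the representation is evaluated at $R_0$ rather than at $0$ to decide which case is active. Once $t^\star$ is fixed, the top-down pass of \Cref{lemma:projection_branching} recovers the argument $\bm{z}^\star=\Pi_{\Cgeq}(\bm{y})$ coordinate by coordinate in $O(n)$ time.

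The total cost is then dominated by building the standard representation of $\lambda_{\T}$, yielding the claimed $O(dn\log(l+m))$ bound, since the boundary check and the single root-finding query add only $O(n)$. Conceptually the statement is almost entirely inherited from \Cref{prop:practical implementation ptbp}: the only thing to verify is that shifting the lower endpoint from $0$ to $R_0$ leaves the complexity unchanged, which holds because $v_{\T}(\cdot,\bm{y})$ is convex with an SMPL derivative, so constrained minimization on $[R_0,+\infty)$ is no harder than on $[0,+\infty)$. The main (and essentially only) subtlety I anticipate is confirming that the boundary-versus-interior dichotomy above is exhaustive, i.e.\ that $\lambda_{\T}(\cdot,\bm{y})$ is monotone increasing on the relevant domain; this is precisely the SMPL property established in \citet{farina2022near} and already relied upon for the cone case, so no new analysis is required beyond recording that the active-case selection now pivots at $R_0$.
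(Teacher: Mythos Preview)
Your proposal is correct and follows essentially the same approach as the paper: reduce to the scaled-treeplex value function, compute the SMPL derivative via \Cref{lemma:lambda_fn_lemma}, branch on the sign of $\lambda_{\T}(R_0,\bm{y})$ instead of $\lambda_{\T}(0,\bm{y})$, and recover the projection with the same top-down pass. If anything, your write-up is slightly more explicit than the paper's in justifying the two-case split via convexity of $v_{\T}$.
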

\begin{proof}
The proof is the same as that for \cref{prop:practical implementation ptbp}, since the derivative of the value function can be computed in $O\big(d n \log (l+m)\big)$ time. However, this time, we have an additional constraint that $t \geq R_0$. Thus instead of checking the sign of $\lambda_{\Z}(\cdot, \bm{y}, \bm{w})$ at $t = 0$, we check the sign at $R_0$.

If $\lambda_{\Z}(R_0, \bm{y}, \bm{w}) < 0$, then because $\lambda_{\Z}(\cdot, \bm{y}, \bm{w})$ is a strictly monotone function, the function will be $0$ for some value of $t$, and this is exactly $t^*$, which minimizes the value function with respect to $t$, when $t \geq R_0$. On the other hand, if $\lambda_{\Z}(R_0, \bm{y}, \bm{w}) \geq 0$, then again because the function is strictly monotone in $t$, we know that the value function must get minimized at $t^* = R_0$. Using the same argument as in the proof of Proposition \ref{prop:practical implementation ptbp}, since we have computed the standard representations of the derivatives of the value functions at all of the treeplexes, we can do a top-down pass to compute the argument which leads to the optimal value of the value function.

\end{proof}
\section{Proof of Theorem \ref{th:regret-guarantees-smooth-sptbp-2playerEFG}}\label{app:proof convergence smooth pbtb NE}
\begin{proof}[Proof of Theorem \ref{th:regret-guarantees-smooth-sptbp-2playerEFG}]
For the sake of conciseness we write $\bm{f}^{x}_{t} = \bm{f}(\bm{x}_{t},\bm{My}_{t})$ and $\bm{f}^{y}_{t} = \bm{f}(\bm{y}_{t},-\bm{M}\tr\bm{x}_{t})$. 

From our Proposition \ref{prop:ptbp-adversarial-regret}, we have that, for the first player,
\[ \sum_{t=1}^{T} \langle \bm{x}_{t} - \hat{\bm{x}}, \bm{My}_{t} \rangle = \sum_{t=1}^{T} \langle \bm{R}_{t}-\hat{\bm{R}},\bm{f}_{t}^{x}\rangle.\]
Now $\sum_{t=1}^{T} \langle \bm{R}_{t}-\hat{\bm{R}},\bm{f}_{t}^{x}\rangle $ is the regret obtained by running Predictive \omd{} on $\Cgeq$ against the sequence of loss $\bm{f}^{x}_{1},...,\bm{f}^{x}_{T}$. From Proposition 5 in \cite{farina2021faster}, we have that
\[  \sum_{t=1}^{T} \langle {\bm{R}}_{t}^{x}-\hat{\bm{R}}^{x},\bm{f}_{t}^{x}\rangle \leq \frac{\| \hat{\bm{R}} \|_{2}^{2}}{2\eta} + \eta \sum_{t=1}^{T} \| \bm{f}^{x}_{t} - \bm{f}^{x}_{t-1}\|_{2}^{2}  - \frac{1}{8\eta} \sum_{t=1}^{T} \| {\bm{R}}^{x}_{t+1} - {\bm{R}}_{t+1}^{x} \|_{2}^{2}.\]
Since $\hat{\bm{R}}_{t} \in \Cgeq$, we can use our Proposition \ref{prop:lipschitness-chd} to show that
\[ \| \bm{x}_{t+1} - \bm{x}_{t} \|_{2}^{2} \leq \frac{\Omega}{R_{0}^{2}} \| {\bm{R}}^{x}_{t+1} - {\bm{R}}_{t+1}^{x} \|_{2}^{2}.\]
This shows that
\[  \sum_{t=1}^{T} \langle {\bm{R}}_{t}^{x}-\hat{\bm{R}}^{x},\bm{f}_{t}^{x}\rangle \leq \frac{\| \hat{\bm{R}} \|_{2}^{2}}{2\eta} + \eta \sum_{t=1}^{T} \| \bm{f}^{x}_{t} - \bm{f}^{x}_{t-1}\|_{2}^{2}  - \frac{R_{0}^{2}}{ 8 \Omega^2  \eta} \sum_{t=1}^{T} \| {\bm{R}}^{x}_{t+1} - {\bm{R}}_{t+1}^{x} \|_{2}^{2}\]
which gives, using the norm equivalence $\| \cdot \|_{2} \leq \| \cdot \|_{1} \leq \sqrt{n+1}\| \cdot \|_{2}$, the following inequality:
\[  \sum_{t=1}^{T} \langle {\bm{R}}_{t}^{x}-\hat{\bm{R}}^{x},\bm{f}_{t}^{x}\rangle \leq \frac{\| \hat{\bm{R}} \|_{2}^{2}}{2\eta} + \eta \sum_{t=1}^{T} \| \bm{f}^{x}_{t} - \bm{f}^{x}_{t-1}\|_{1}^{2}  - \frac{R_{0}^{2}}{ 8 \Omega^2 (n+1) \eta}\sum_{t=1}^{T}  \| {\bm{R}}^{x}_{t+1} - {\bm{R}}_{t+1}^{x} \|_{2}^{2}\]
The above inequality is a RVU bound:
\[  \sum_{t=1}^{T} \langle {\bm{R}}_{t}^{x}-\hat{\bm{R}}^{x},\bm{f}_{t}^{x}\rangle \leq \alpha + \beta \sum_{t=1}^{T} \| \bm{f}^{x}_{t} - \bm{f}^{x}_{t-1}\|_{1}^{2}  - \gamma \sum_{t=1}^{T}\| {\bm{R}}^{x}_{t+1} - {\bm{R}}_{t+1}^{x} \|_{2}^{2}\]
with
\begin{equation}\label{eq:coefficient-RVU}
    \alpha = \frac{\| \hat{\bm{R}} \|_{2}^{2}}{2\eta}, \beta = \eta, \gamma =  \frac{R_{0}^{2}}{ 8 \Omega^2 (n+1) \eta}.
\end{equation}
To invoke Theorem 4 in \cite{syrgkanis2015fast}, we also need the utilities of each player to be bounded by $1$. This can be done can rescaling $\bm{f}^{x}_{t}=\bm{My}_{t}$ and $\bm{f}^{y}_{t} = -\bm{M}\tr\bm{x}_{t}$. In particular, we know that
\[ \| \bm{My}\|_{\infty} \leq \| \bm{M}\|_{\ell_{2},\ell_{\infty}} \| \bm{y} \|_{2} \leq \| \bm{M}\|_{\ell_{2},\ell_{\infty}} \cdot \hat{\Omega}\]
with $\| \bm{M}\|_{\ell_{2},\ell_{\infty}} = \max_{i \in [n+1]} \| \left(A_{ij}\right)_{j \in [m+1]}\|_{2}$ and $\hat{\Omega} = \max \{ \max \{ \| \bm{x}\|_{2},\| \bm{y}\|_{2}\} \; \bm{x} \in \cX,\bm{y} \in \cY\}$. This corresponds to multiplying $\beta$ in \eqref{eq:coefficient-RVU} by $\| \bm{M} \| \times \hat{\Omega}$ with $\| \bm{M} \|:= \max \{ \| \bm{M}\|_{\ell_{2},\ell_{\infty}}, \| \bm{M}\tr\|_{\ell_{2},\ell_{\infty}} \}$.
To apply Theorem 4 in \cite{syrgkanis2015fast} we also need $\beta \leq \gamma$.  Since we need the same condition for the second player, we take 
\[\eta = R_{0}\left(\sqrt{8d\hat{\Omega}^3}\| \bm{M}\|\right)^{-1}.\]
Under this condition on the stepsize, we can invoke Theorem 4 in \cite{syrgkanis2015fast} to conclude that 
\[ \sum_{t=1}^{T} \langle {\bm{R}}_{t}^{x}-\hat{\bm{R}}^{x},\bm{f}_{t}^{x}\rangle + \sum_{t=1}^{T} \langle {\bm{R}}_{t}^{y}-\hat{\bm{R}}^{y},\bm{f}_{t}^{y}\rangle \leq \frac{\| \hat{\bm{R}}^{x} \|_{2}^{2} + \| \hat{\bm{R}}^{y} \|_{2}^{2}}{\eta}.\]
Since the duality gap is bounded by the average of the sum of the regrets of both players~\cite{freund1999adaptive}, and replacing $\eta$ by its expression, we obtain that 
\[ \max_{\bm{y} \in \cY} \; \langle \bar{\bm{x}}_{T},\bm{My}\rangle - \min_{\bm{x} \in \cX} \; \langle \bm{x},\bm{M}\bar{\bm{y}}_{T}\rangle \leq \frac{2 \hat{\Omega}^2}{\eta}\frac{1}{T}.\]
\end{proof}

\section{Details on Numerical Experiments}\label{app:simu}
\subsection{Additional Algorithms}
\paragraph{\adamtbp.} We present \adamtbp, our instantiation of Algorithm \ref{alg:blackwell-approachability based regmin} inspired from the adaptive algorithm \adam~\cite{kingma2014adam} in Algorithm \ref{alg:adam-stable-blackwell-treeplex}. Since \adam{} is not necessarily a regret minimizer~\citep{reddi2019convergence}, there are no regret guarantees for \adamtbp. We choose to consider this algorithm for the sake of completeness, since \adam{} is widely used in other settings.
\begin{algorithm}
      \caption{Adam\tbp}
      \label{alg:adam-stable-blackwell-treeplex}
      \begin{algorithmic}[1]
     \State {\bf Input}: $\eta, \delta>0, \beta_{1}, \beta_{2} \in [0,1]$
      \State {\bf Initialization}: $\bm{R}_{1} = \bm{0} \in \R^{n+1},\bm{s}_{0}=\bm{0} \in \R^{n+1},\bm{g}_{0} = \bm{0} \in \R^{n+1}$
      \For{$t = 1, \dots, T$}
      \State $\bm{x}_{t} = \bm{R}_{t}/\langle\bm{R}_{t},\bm{a}\rangle$
      \State Observe the loss vector $\bm{\ell}_{t} \in \R^{n+1}$
      \State $\bm{s}_{t} = \beta_{2} \bm{s}_{t-1} + (1-\beta_{2}) \bm{f}(\bm{x}_{t},\bm{\ell}_{t}) \odot \bm{f}(\bm{x}_{t},\bm{\ell}_{t}) $
      \State $\hat{\bm{s}}_{t} = \bm{s}_{t}/(1-\beta^{t}_{2})$
      \State $\bm{g}_{t} = \beta_{1} \bm{g}_{t-1} + (1-\beta_{1})\bm{f}(\bm{x}_{t},\bm{\ell}_{t})$
      \State $\hat{\bm{g}}_{t} = \bm{g}_{t}/(1-\beta^{t}_{1})$
      \State $\bm{H}_{t} = \diag\left(\sqrt{\hat{\bm{s}}_{t}}+\epsilon \bm{1}\right)$
      \State $\bm{R}_{t+1} \in \Pi_{\C}^{\bm{H}_{t}} \left(\bm{R}_{t} - \eta \bm{H}^{-1}_{t} \hat{\bm{g}}_{t} \right)$
      \EndFor
\end{algorithmic}
\end{algorithm}
\paragraph{Single-call Predictive Online Mirror Descent (\scpomd).}

We present \scpomd{} in Algorithm \ref{alg:scpomd}. This algorithm runs a variant of predictive online mirror descent with only one orthogonal projection at every iteration~\citep{joulani2017modular}. The pseudocode from Algorithm \ref{alg:scpomd} corresponds to choosing the squared $\ell_{2}$-norm as a distance generating function - in principle, other distance generating functions are possible, e.g. dilated entropy~\cite{farina2021better}. Combined with the self-play framework, \scpomd{} ensures that the average of the visited iterates converges to a Nash equilibrium at a rate of $O(1/T)$, similar to the variant of predictive online mirror descent with two orthogonal projections at every iteration~\citep{farina2021better}.
\begin{algorithm}[H]
      \caption{Single-call predictive online mirror descent (\scpomd)}
      \label{alg:scpomd}
      \begin{algorithmic}[1]
     \State {\bf Input}: $\eta >0$,

      \State {\bf Initialization}: $\bm{x}_{0} = \bm{\ell}_{0} = \bm{\ell}_{-1} = \bm{0} \in \R^{n+1}$
      
      \For{$t = 1, \dots, T$}
      \State $\bm{x}_{t} = \Pi_{\T} \left( \bm{x}_{t-1} - \eta \left(2\bm{\ell}_{t-1} - \bm{\ell}_{t-2}\right) \right) $
      \State Observe the loss vector $\bm{\ell}_{t} \in \R^{n+1}$
      \EndFor
\end{algorithmic}
\end{algorithm}
\subsection{Algorithm Implementation Details}
All algorithms are initialized using the uniform strategy (placing equal probability on each action at each decision point). For algorithms that are not stepsize invariant (\smoothptbp{} and \scpomd{}), we try stepsizes in $\eta \in \{0.05, 0.1, 0.5, 1, 2, 5\}$ and we present the performance with the best stepsize. For \smoothptbp{}, we use $R_0 = 0.1$. 
For both \adatbp{} and \adamtbp{}, we use $\delta= 1 \times 10^{-6}$, and for \adamtbp{} we use $\beta_1 = 0.9$ and $\beta_2 = 0.999$.

\subsection{Comparing the Performance of our Algorithms}\label{fig:comparison tb algo}
In Figure \ref{fig:tb_algo_comparison}
we compare the performance of \tbp, \ptbp, \smoothptbp, \adatbp{} and \adamtbp.

It can be seen that \ptbp{} and \smoothptbp{} perform similarly, both when using quadratic averaging and when using the last iterate, and they generally outperform the other algorithms. In Kuhn, Liar's Dice, and Battleship, the last iterate seems to perform quite well, whereas in Leduc and Goofspiel, the quadratic averaging scheme works better. \adamtbp{} seems to not converge in any of the games, which is not surprising, because it does not have theoretical guarantees for convergence.

\begin{figure*}[htb]
\centering
\includegraphics[width=\textwidth]{"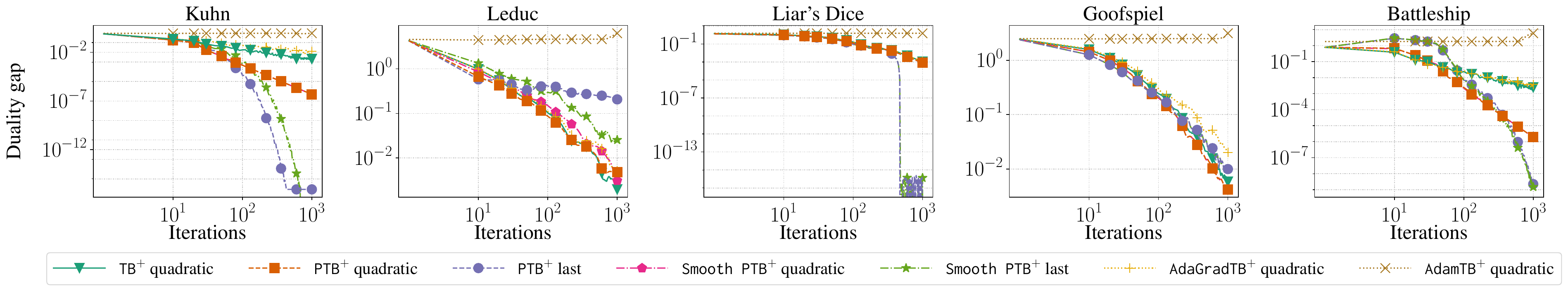"}
\caption{Convergence to Nash equilibrium as a function of number of iterations for \tbp{} with quadratic averaging, \ptbp{} with quadratic averaging and last iterate, and \smoothptbp{} with quadratic averaging and last iterate. Every algorithm is using alternation.}
\label{fig:tb_algo_comparison}
\end{figure*}
\subsection{Individual Performance}\label{app:individual performance}
In \crefrange{fig:tbp}{fig:scpomd}, we compare the individual performance of \tbp, \ptbp, \smoothptbp, \adatbp, \adamtbp, \cfrp, \pcfrp{} and \scpomd{} with different weighting schemes, with and without alternation. We also show the performance of the last iterate. The goal is to choose the most favorable framework for each algorithms, in order to have a fair comparison. We find that all algorithms benefit from using alternation. \cfrp{} enjoys stronger performance using linear weights, whereas \ptbp, \pcfrp{} and \scpomd{} have stronger performances with quadratic weights. For this reason this is the setup that we present for comparing the performance of these algorithms in our main body (Figure \ref{fig:all_algo_avg_comparison}).
\begin{figure*}[htp]
\centering
\includegraphics[width=\textwidth]{"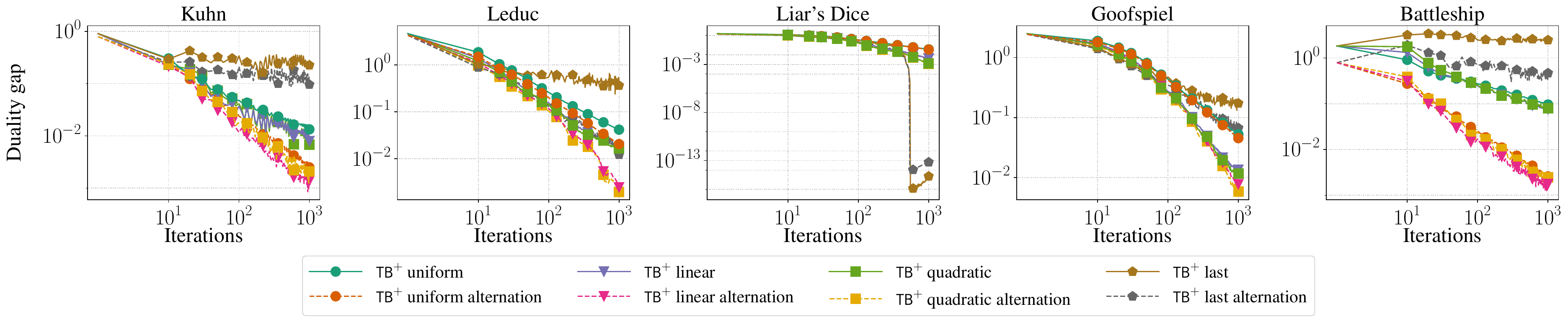"}
\caption{Convergence to Nash equilibrium as a function of number of iterations using uniform, linear, and quadratic averaging, as well as the last iterate, with and without alternation for \tbp{}.}
\label{fig:tbp}
\end{figure*}

\begin{figure*}[htp]
\centering
\includegraphics[width=\textwidth]{"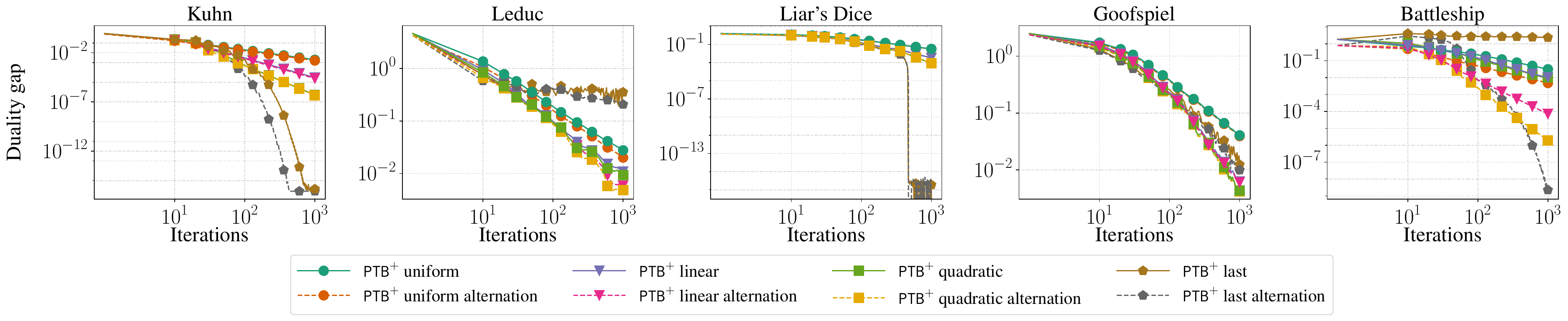"}
\caption{Convergence to Nash equilibrium as a function of number of iterations using uniform, linear, and quadratic averaging, as well as the last iterate, with and without alternation for \ptbp{}.}
\label{fig:ptbp}
\end{figure*}

\begin{figure*}[htp]
\centering
\includegraphics[width=\textwidth]{"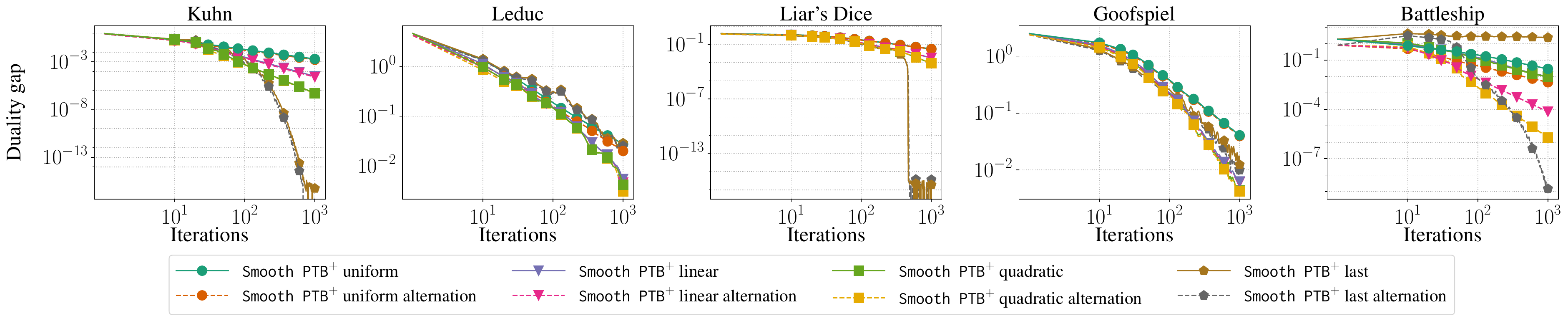"}
\caption{Convergence to Nash equilibrium as a function of number of iterations using uniform, linear, and quadratic averaging, as well as the last iterate, with and without alternation for \smoothptbp{}.}
\label{fig:smoothptbp}
\end{figure*}

\begin{figure*}[htp]
\centering
\includegraphics[width=\textwidth]{"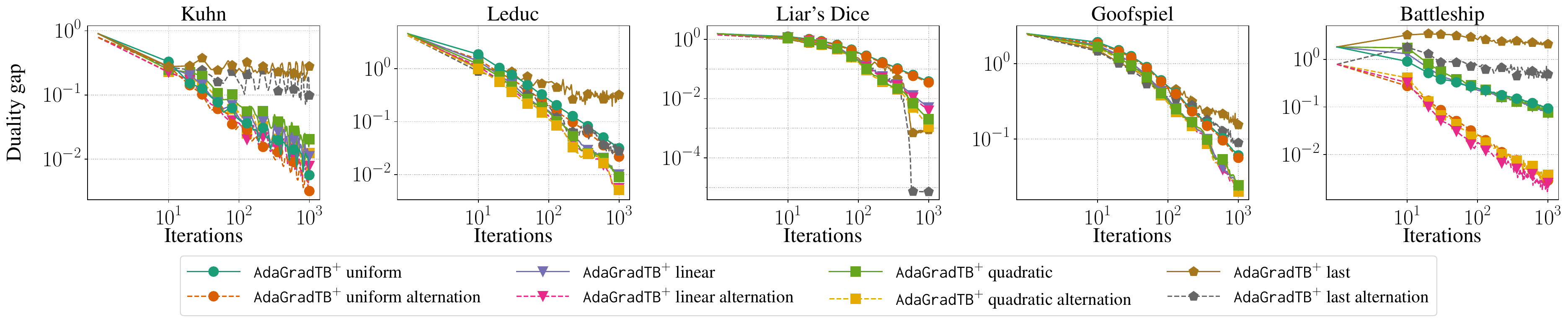"}
\caption{Convergence to Nash equilibrium as a function of number of iterations using uniform, linear, and quadratic averaging, as well as the last iterate, with and without alternation for \adatbp{}.}
\label{fig:adatbp}
\end{figure*}

\begin{figure*}[htp]
\centering
\includegraphics[width=\textwidth]{"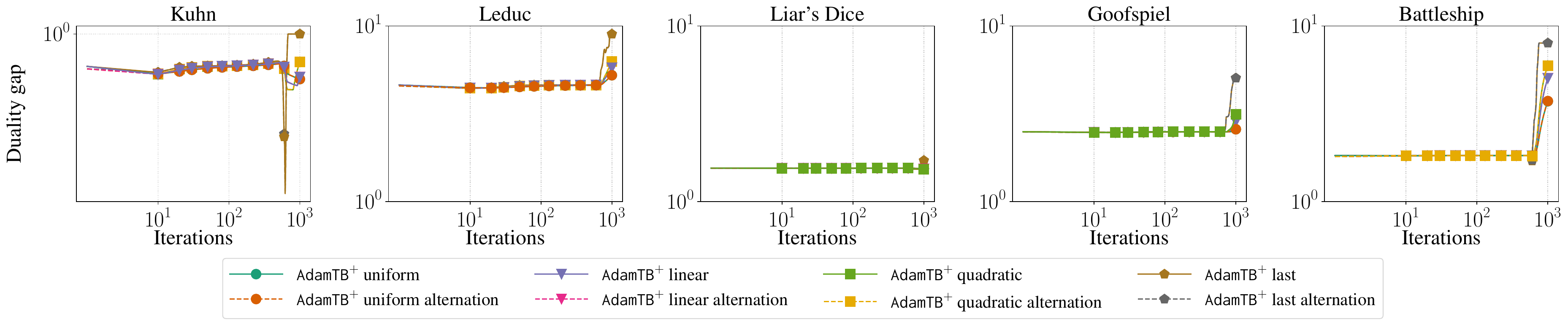"}
\caption{Convergence to Nash equilibrium as a function of number of iterations using uniform, linear, and quadratic averaging, as well as the last iterate, with and without alternation for \adatbp{}.}
\label{fig:adamtbp}
\end{figure*}

\begin{figure*}[htp]
\centering
\includegraphics[width=\textwidth]{"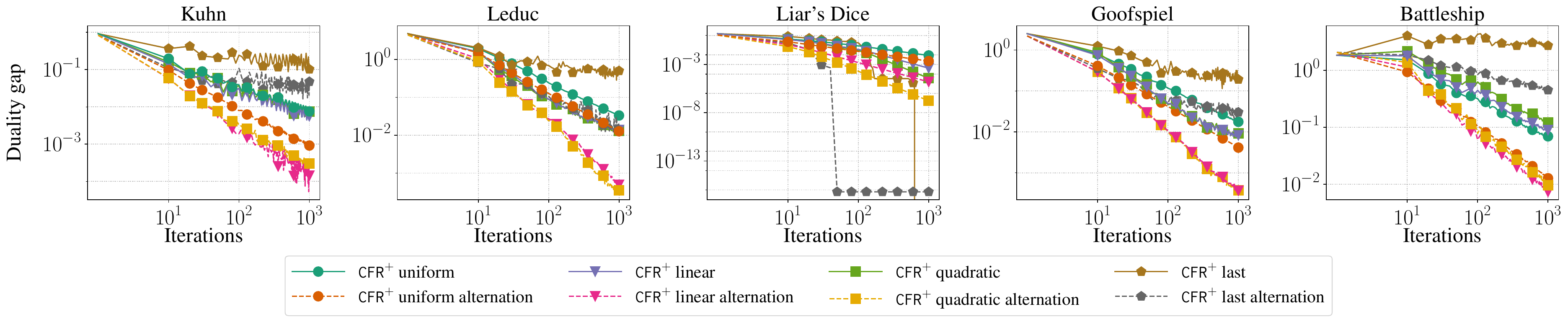"}
\caption{Convergence to Nash equilibrium as a function of number of iterations using uniform, linear, and quadratic averaging, as well as the last iterate, with and without alternation for \cfrp{}.}
\label{fig:cfrp}
\end{figure*}

\begin{figure*}[htp]
\centering
\includegraphics[width=\textwidth]{"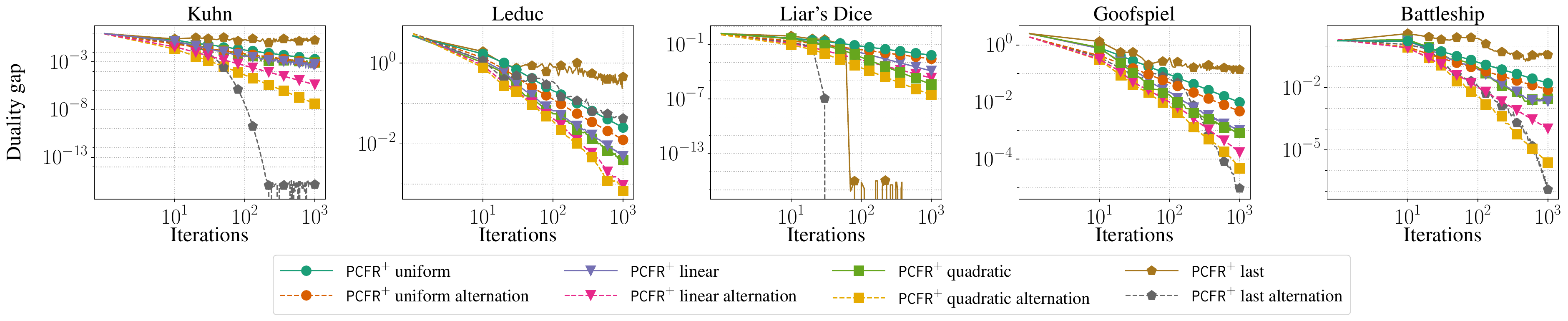"}
\caption{Convergence to Nash equilibrium as a function of number of iterations using uniform, linear, and quadratic averaging, as well as the last iterate, with and without alternation for \pcfrp{}.}
\label{fig:pcfrp}
\end{figure*}

\begin{figure*}[htp]
\centering
\includegraphics[width=\textwidth]{"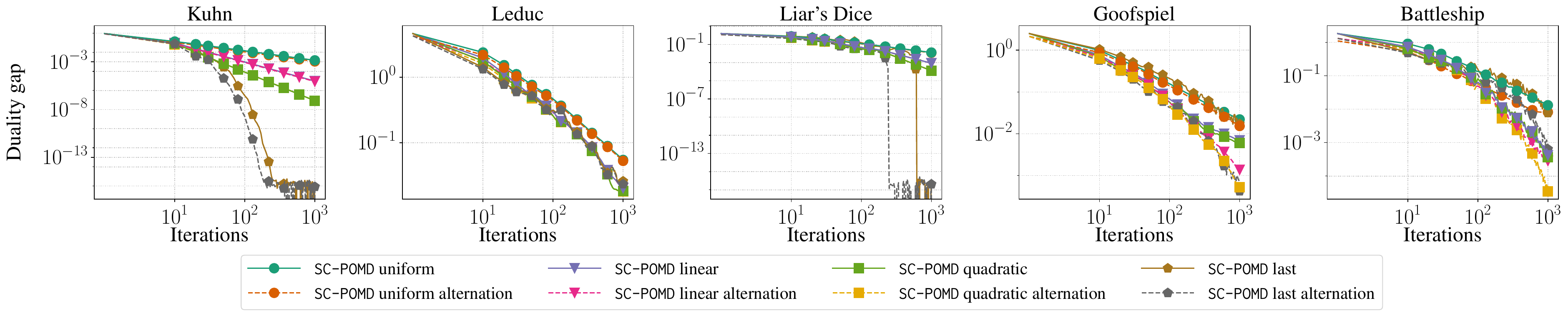"}
\caption{Convergence to Nash equilibrium as a function of number of iterations using uniform, linear, and quadratic averaging, as well as the last iterate, with and without alternation for  \scpomd{}.}
\label{fig:scpomd}
\end{figure*}

\end{document}